\theoremstyle{plain}
\newtheorem{theorem}{Theorem}
\newtheorem{lemma}[theorem]{Lemma}
\newtheorem{assumption}{Assumption}
\newtheorem{proposition}[theorem]{Proposition}
\theoremstyle{remark}
\newtheorem*{remark}{Remark}
\DeclareMathOperator{\supp}{supp}
\DeclareMathOperator{\loc}{loc}
\newcommand{\vphi}{\varphi}
\newcommand{\norm}[1]{\lVert #1 \rVert}
\newcommand{\Norm}[1]{\left\lVert #1 \right\rVert}
\newcommand{\eps}{\varepsilon}
\newcommand{\R}{\mathbb{R}}
\newcommand{\C}{\mathbb{C}}
\newcommand{\ue}{\underline{e}_K}
\newcommand{\bx}{\boldsymbol{x}}
\newcommand{\by}{\boldsymbol{y}}
\newcommand{\bz}{\boldsymbol{z}}
\newcommand{\br}{\boldsymbol{r}}
\newcommand{\sx}{\mathrm{x}}
\newcommand{\cE}{\mathcal{E}}
\newcommand{\LL}{\textup{LL}}
\newcommand{\hs}{\textup{hs}}
\newcommand{\hd}{\textup{hd}}
\newcommand{\calpha}{\tau}
\begin{document}

%Title
\title{Lieb-Thirring Bounds for Interacting Bose Gases\footnote{D.L. is supported by the grant KAW 2010.0063 from the Knut and Alice Wallenberg Foundation and the Swedish Research Council grant no. 2013-4734. F.P. was supported by the Swedish Research Council grant no. 2012-3864. J.P.S. is supported by the ERC AdvGrant project no. 321029.}}
\date{}

%Authors
\author{D. Lundholm\footnote{dogge@math.kth.se, KTH Royal Institute of Technology, Sweden}, F. Portmann\footnote{fabianpo@math.kth.se, KTH Royal Institute of Technology, Sweden}, J. P. Solovej\footnote{solovej@math.ku.dk, University of Copenhagen, Denmark}}

\maketitle

%Abstract
\begin{abstract}
	We study interacting Bose gases and prove lower bounds for the kinetic plus interaction 
	energy of a many-body wave function in terms of its particle density. 
	These general estimates are then applied to 
	various types of interactions, including hard sphere (in 3D) and hard disk (in 2D) 
	as well as a general class of homogeneous potentials.
\end{abstract}

\tableofcontents

\section{Introduction}
The last two decades have seen an explosion of results on the 
experimental physics of ultracold atomic gases and on
the theoretical and mathematical physics of many-body 
quantum statistical mechanics. 
The concrete realization in 1995 \cite{Anderson-et-al:95,Davies-et-al:95} of 
Bose-Einstein condensation in trapped dilute
gases offered a set-up which is quite different from the 
standard textbook treatment of the ideal
Bose gas, originating in work of Bose and Einstein from 1924-25, 
and brought a demand for a better
understanding of the effects of interactions in condensates
beyond the pioneering work of Bogoliubov from 1947
and of e.g. Gross and Pitaevskii from 1961.
Also various more extreme conditions and geometries 
have been experimentally realized in recent years,
such as effectively one- and two-dimensional systems, 
which previously were considered to be only of purely theoretical interest.
We refer to the reviews \cite{DGPS:99,Bloch-Dalibard-Zwerger:08} and the 
book \cite{Pitaevskii-Stringari:03} for comprehensive introductions and 
historical surveys on the physical 
aspects of this vast topic.

The quantum Hamiltonian for $N$ identical bosons in an external one-body trapping potential 
$V$ in $d$ dimensions, interacting with symmetric pair potential $W$, 
is given by
\begin{align} \label{eq:Hamiltonian}
	\hat{H} = \hat{T} + \hat{V} + \hat{W} 
	= \sum_{j=1}^N\left( -\mu\Delta_j + V(\bx_j)\right) + \sum_{j<k} W(\bx_j - \bx_k),
\end{align}
with $\mu := \hbar^2/(2m)$ the reduced mass.
It is acting on $\psi \in L^2_{\textup{sym}}(\R^{dN})$,
i.e. wave functions which are square-integrable and totally symmetric with
respect to all particle labels,
in accordance with bosonic statistics\footnote{However, 
we will frequently use the well-known fact that the ground state of 
\eqref{eq:Hamiltonian} with and without this symmetrization requirement is 
the same \cite{Lieb-Seiringer-Solovej-Yngvason:05}.}.
We denote the one-body density of the state $\psi$ 
(henceforth always assumed to be normalized) by 
\begin{align*}
	\rho(\bx) 
	:= \sum_{j=1}^N \int_{\R^{d(N-1)}} |\psi(\bx_1,\ldots,\bx_{j-1},\bx,\bx_{j+1},\ldots,\bx_N)|^2 
	\prod_{k \neq j} d\bx_k
\end{align*}
and the mean density by $\bar{\rho} := \frac{1}{N}\int_{\R^d} \rho^2$.
Assuming that the interaction potential $W$ is sufficiently rapidly decaying,
it can on large scales and low energies be characterized by its scattering length $a$ 
(see the Appendix for a definition), where $a>0$ for repulsive interactions.
The first mathematically rigorous and reasonably sharp bounds for the 
ground state energy of an interacting Bose gas were derived 
in 1957 by Dyson \cite{Dyson:57} for the case of hard sphere potentials 
(where $a$ is then the range, i.e. the diameter of the sphere),
and where a homogeneous gas ($V=0$) was considered
in the thermodynamic limit.
His result has later been sharpened
as well as extended to other potentials
and dimensionalities in the dilute gas limit 
\cite{Lieb-Liniger:63,Lieb-Yngvason:98,Lieb-Yngvason:00,Lieb-Yngvason:01,Lieb-Seiringer-Yngvason:03,Lieb-Seiringer-Yngvason:04,Schnee-Yngvason:06}. 
These mathematical works are summarized in \cite{Lieb-Seiringer-Solovej-Yngvason:05}.

Although the full range of physical dimensionalities 
$d\in\{1,2,3\}$ is relevant for experiments, 
we choose to simplify the present discussion
by focusing on the $d=3$ (3D) case.
In the dilute limit $a^3\bar{\rho} \to 0$ with $N \to \infty$ 
and $Na/L \sim const.$, $L$ denoting the length scale of $V$,
the ground state of the trapped interacting Bose gas is correctly described
by the Gross-Pitaevskii energy functional 
\begin{align} \label{eq:GP-functional}
	\cE_{\textup{GP}}[\phi] 
	:= \int_{\R^3} \left(\mu|\nabla \phi|^2 + V |\phi|^2 + 4\pi \mu a|\phi|^4 
	\right),
\end{align}
acting on functions $\phi: \R^3 \to \C$
constrained by $\int_{\R^3} |\phi|^2 = N$.
The last term corresponds to the scattering length approximation to the
interaction energy.
This description is asymptotically correct in the sense that 
the true ground state $\psi$ of \eqref{eq:Hamiltonian} has energy 
$E_0 \approx \cE_{\textup{GP}}[\phi_{\textup{GP}}]$
and density $\rho(\bx) \approx |\phi_{\textup{GP}}(\bx)|^2$,
where $\phi_{\textup{GP}}$ 
is the unique minimizer of the functional \eqref{eq:GP-functional}.
Moreover, if the scattering length is comparatively large,
$Na/L \to \infty$ while $a^3\bar{\rho} \to 0$, then the 
gradient term in \eqref{eq:GP-functional} becomes negligible and the
so-called Thomas-Fermi approximation becomes valid, in which the ground state
density of the system is correctly described as the minimizer of the functional 
\begin{align} \label{eq:TF-functional}
	\cE_{\textup{TF}}[\rho] := \int_{\R^3} \left( V\rho + 4\pi\mu a\rho^2 \right),
\end{align}
subject to the constraints $\rho \ge 0$ and $\int_{\R^3} \rho = N$.
We will in the following assume that $\mu=1$, 
which can be achieved by an appropriate scaling of the energy.

These functionals provide a useful and remarkably precise description of 
ground state properties for experiments in the 
zero-temperature low-density regime \cite{DGPS:99},
although their mathematical validity depends crucially on the smallness of 
$a^3\bar\rho$ \cite{Lieb-Seiringer-Solovej-Yngvason:05}.
Corrections to the Gross-Pitaevskii term in \eqref{eq:GP-functional} 
and \eqref{eq:TF-functional} are available as perturbative expansions 
\cite{Huang-Yang:57, Lee-Huang-Yang:57, Brueckner-Sawada:57,
Brueckner-Sawada-2:57, Beliaev:58,Wu:59, Hugenholty-Pines:59, Girardeau-Arnowitt:59,
Lee-Yang:60, Giuliani-Seiringer:09, Yau-Yin:09}, 
although there is no control on the convergence of such expansions.
For fermionic systems, functionals expressing the energy in terms of the 
density alone are widely used in quantum chemistry, where such density 
functional theories have been
very successful in describing both the equilibrium states and the
dynamics of atomic and molecular systems.
Our objective in this work is to prove rigorous lower bounds for the energy of
interacting Bose gases described by \eqref{eq:Hamiltonian} and a given state $\psi$, 
in terms of (more or less explicit) energy functionals of the density of $\psi$,
and with general validity irrespective of the density of the gas
(although these explicit bounds may become comparatively weak in the high-density limit).
This will be achieved using a local approach to the interaction energy.
To gain some insight from
a much better understood situation, 
consider the physically very
different case of $N$ non-interacting fermions,
for which the Pauli principle together with the uncertainty principle
conspire to yield
a strong lower bound for the kinetic energy 
--- the celebrated Lieb-Thirring inequality \cite{Lieb-Thirring:75,Lieb-Thirring:76},
\begin{align} \label{eq:classic-LT}
	\langle \psi, \hat{T} \psi \rangle 
	:= \sum_{j=1}^N \int_{\R^{dN}}\left|\nabla_j \psi\right|^2\,d\sx
	\,\ge\, C_{\textup{LT}} \int_{\R^d} \rho(\bx)^{1+2/d}\,d\bx,
\end{align}
for antisymmetric $\psi$ and a constant $C_{\textup{LT}} > 0$ independent of $N \ge 1$.
The (fermionic) Thomas-Fermi approximation corresponds to the
right hand side of \eqref{eq:classic-LT},
with a semi-classical constant $C_{\textup{TF}} \ge C_{\textup{LT}}$.
Bounds of the form \eqref{eq:classic-LT} are extremely useful when further
interactions are involved, such as in the rigorous proof of stability of matter
with Coulomb interactions 
(see, e.g., \cite{Lieb-Seiringer:10} for an updated review)
which was actually the original motivation for the Lieb-Thirring inequality 
in \cite{Lieb-Thirring:75}. In contrast to this simplified approach,
the original proof of stability of matter, due to Dyson and Lenard in 1967 
\cite{Dyson-Lenard:67,Dyson-Lenard:68} 
(see also \cite{Dyson:68,Lenard:73}),
relied on a purely local consequence of the exclusion principle, 
based on the following simple bound for 
an $n$-fermion wave function on a local domain $Q \subset \R^d$
(with $c_Q>0$ depending only on the shape of $Q$):
\begin{align}\label{eq:dl_fermion_bound_domain}
	\int_{Q^n}\sum_{j=1}^n|\nabla_j\psi|^2 d\sx 
	\geq (n-1)\frac{c_Q}{|Q|^{2/d}}\int_{Q^n}|\psi|^2\,d\sx.
\end{align}
This local exclusion principle
was recently adopted to other types of 
identical particles for which the quantum statistical properties 
can be modeled by local interactions between bosons,
such as anyons in two dimensions \cite{Lundholm-Solovej:anyon}
and particles in one dimension exhibiting intermediate statistics 
\cite{Lundholm-Solovej:extended}
(see also \cite{Lundholm-Solovej:exclusion} for a physical review),
and it was in this work recognized that the original 
local approach due to Dyson and Lenard
is actually sufficient for proving the Lieb-Thirring inequality \eqref{eq:classic-LT},
albeit with a weaker constant than that in \cite{Lieb-Thirring:75}.

Our approach in this paper is to extend the methods in 
\cite{Lundholm-Solovej:anyon, Lundholm-Solovej:extended} 
to more general interacting Bose gases
by considering a scale-normalized
two-particle interaction energy on a local cube $Q$, $e_2(|Q|;W)$,
defined in Section~\ref{sec:basic_energy_est} as $|Q|^{2/d}$ 
times the Neumann ground state energy for two bosons
on $Q$ with the repulsive\footnote{In all our applications 
we have $W \ge 0$ and radially symmetric, 
however the main result holds under the weaker conditions 
given in Assumptions \ref{ass:local_exclusion_alpha} 
and \ref{ass:local_uncert_alpha}.}
pair interaction $W \ge 0$
(which hence for $n=2$ replaces the pure Pauli repulsion on the l.h.s. of 
\eqref{eq:dl_fermion_bound_domain}).
Furthermore, despite the general difficulty in computing $e_2$ analytically,
it turns out to be sufficient for our purposes to have a lower bound of the form
\begin{align} \label{e2-bound}
	e_2(|Q|;W) \ge e(\gamma(|Q|)), 
	\quad \text{with} \quad
	\gamma(|Q|) = \calpha |Q|^{(2-\alpha)/d},
\end{align}
in terms of a simpler function $e(\gamma)$ depending only on a dimensionless 
parameter $\gamma$ for some dimensionful constant $\calpha > 0$ 
and scaling parameter $\alpha > 0$.
\textbf{Our main result}, given in Section~\ref{sec:general_LT_bounds}, is the derivation of a 
general family of Lieb-Thirring type energy bounds of the form
\begin{align} \label{eq:main_result}
	\langle\psi,(\hat{T}+\hat{W})\psi\rangle 
	\geq C \int_{\R^d} e(\gamma(2/\rho(\bx))) \rho(\bx)^{1+2/d} \,d\bx,
\end{align}
where $C>0$ depends only on $d$, $\alpha$, and on an upper bound for $e$ 
(see Theorem~\ref{thm:L-T_global_alpha}).
These general energy inequalities are then in Section~\ref{sec:applications}
applied to concrete examples where explicit lower bounds 
of the appropriate form \eqref{e2-bound}
can be computed, such as the hard sphere (3D) and hard disk (2D) potentials,
the Lieb-Liniger model for point interacting bosons (1D),
as well as a family of homogeneous potentials, $W(\bx) \propto |\bx|^{-\beta}$, 
in 3D as well as 2D. For instance, we prove (see Theorem~\ref{thm:L-T_HS})
that for the hard-sphere interaction with range $a$,
\begin{align*}
	\langle\psi,(\hat{T}+\hat{W})\psi\rangle 
	\geq C \int_{\R^3} \min\left\{ \frac{2^{2/3}}{\sqrt{3}} 
	a\rho(\bx)^{2}, \pi^2 \rho(\bx)^{5/3} \right\} d\bx,
\end{align*}
while for the hard-disk (see Theorem~\ref{thm:L-T_hd}) we have
\begin{align*}
	\langle\psi,(\hat{T}+\hat{W})\psi\rangle 
	\geq C \int_{\R^2} \frac{ 2\rho(\bx)^{2} }
	{ 2 + \left( -\ln(a\rho(\bx)^{1/2}/2) \right)_+ }\,d\bx.
\end{align*}
In Section~\ref{sec:counter_ex} we consider the sharpness of the forms of these bounds
by means of counterexamples.
Some auxiliary results concerning uncertainty principles and 
scattering lengths are given in the Appendix.

Let us finally comment on the physical interest of the bounds presented here
from the perspective of applications.
First, the extension of Lieb-Thirring type inequalities to the bosonic
context opens up for the application of useful techniques 
originally developed for fermionic systems.
For example, as shown in Section \ref{sec:inverse-square}, 
an inverse-square repulsive 
interaction for bosons yields a bound analogous to 
the fermionic kinetic energy inequality 
(cf. the r.h.s. of \eqref{eq:classic-LT}),
$$
	\langle\psi, (\hat{T} + \hat{W})\psi \rangle 
	\ge C \int_{\R^d} \rho(\bx)^{1+2/d} \,d\bx,
$$
and therefore 
also a corresponding
Lieb-Thirring inequality for the negative part of an external 
one-body potential $V$,
$$
	\hat{T} + \hat{W} + \hat{V} 
	\ge -C' \int_{\R^d} |V_-|^{1+d/2} \,d\bx.
$$
Thus, following the conventional approach for fermions 
(see e.g. \cite{Lieb-Seiringer:10}), 
these bounds can also be applied with additional Coulomb 
interactions to prove thermodynamic stability for a system of charged bosons
with inverse-square repulsive cores.

Second, although we have here mostly focused on the role of the 
interaction potential $W$,
we can also consider 
more general 
applications of these bounds 
in the presence of external potentials, 
for example describing experiments with gases trapped by a confining potential $V$.
Given a bound of the form \eqref{eq:main_result} 
we have for an arbitrary state $\psi$:
\begin{align} \label{eq:LT_w_external_pot}
	\langle\psi, (\hat{T} + \hat{W} + \hat{V})\psi \rangle 
	\ge \int_{\R^d} \left\lbrack C e(\gamma(2/\rho(\bx)))\rho(\bx)^{1+2/d} + V(\bx)\rho(\bx) \right\rbrack d\bx,
\end{align}
and can hence obtain lower bounds for the ground state energy of the system
by minimizing the r.h.s. subject to the constraint $\int_{\R^d} \rho = N$.
This can at the same time provide useful estimates for the ground state density $\rho$.
As a concrete example, we can consider the Lieb-Liniger model 
(see Section \ref{sec:Lieb-Liniger})
for which \eqref{eq:LT_w_external_pot} becomes
\begin{align} \label{eq:LL_w_external_pot}
	\langle\psi, (\hat{T} + \hat{W} + \hat{V})\psi \rangle 
	\ge \int_{\R} \left\lbrack C_{\LL} \xi_{\LL}(2\eta/\rho(x))^2 \rho(x)^3 + V(x)\rho(x) \right\rbrack dx,
\end{align}
an explicit convex functional of $\rho$ which is tractable for minimization 
given an external potential $V$.
In contrast, the Lieb-Liniger model has, despite its relative simplicity,
only been exactly solved for the 
ground state energy and density in the absence of any external potential.
Unfortunately, the universal constants 
appearing in these bounds are far
from the optimal ones, and therefore the resulting explicit bounds might be 
far lower than the exact ground state energies
(although observe that even the optimal constant $C_{\textup{LT}}$ 
in the original kinetic energy inequality \eqref{eq:classic-LT} for $d=3$
has yet to be shown to be as large as its conjectured value $C_{\textup{TF}}$). 
However, the present results may be seen
as a first step in this direction.

We also remark that the techniques employed here can be extended to the case of 
fractional kinetic energy operators,
for instance to prove a Lieb-Thirring inequality for relativistic bosons
with Coulomb repulsion,
$$
	\left\langle\psi, \left( \sum_{j=1}^N \sqrt{-\Delta_j} 
		+ \sum_{j<k}\frac{1}{|\bx_j-\bx_k|} \right)\psi \right\rangle 
	\ge C \int_{\R^d} \rho(\bx)^{1 + 1/d} \,d\bx,
$$
for $d \ge 1$; see \cite{Lundholm-Nam-Portmann}.
We would like to thank P. T. Nam for discussions on this point.

\paragraph{Aknowledgements}
We are grateful to R. Seiringer and A. Trombettoni for comments and discussions. 
We also thank the organizers of the IHP trimester program 
``Variational and Spectral Methods in Quantum Mechanics'' for providing a stimulating 
atmosphere.

\section{Basic Energy Estimates}\label{sec:basic_energy_est}
In this section we establish a basic energy estimate in terms of the two-particle 
energy, which is valid for all nonnegative interaction potentials $W$. 
We refer to this as a local exclusion principle, 
in analogy to the discussion in the introduction. 
Namely, the repulsive interaction prevents 
two bosons from occupying the same local (zero-energy) state.

Let $Q \subset \R^d$ denote a cube of side length $|Q|^{1/ d}$. We define 
the (scale-normalized) two-particle energy on $Q$ 
by\footnote{As usual, we denote by $H^1(\Omega)$ the Sobolev space 
of square-integrable functions on $\Omega \subseteq \R^n$ with 
square-integrable first derivatives, 
i.e. the form domain of the 
Laplacian on $\Omega$ with Neumann boundary conditions. 
We will, unless stated otherwise, define all considered operators via their 
natural quadratic forms.}
$$
	e_2(|Q|;W):=
	|Q|^{2/d}\!\!\! \inf_{\substack{\psi \in H^1(Q^2)\\ \int_{Q^2}|\psi|^2=1}}\,\int_{Q^2}\!\!\left[|\nabla_1\psi|
	^2 + |\nabla_2\psi|^2 + W(\bx_1-\bx_2)|\psi|^2\right]d\bx_1d\bx_2.
$$
Note that $e_2(|Q|;\lambda W)$ is monotone and concave in the parameter $\lambda$ and
$e_2(|Q|;0) = 0$.

\begin{lemma}\label{lem:estimate_by_e_2}
	Given any pair-interaction potential $W$ 
	and particle number $n \geq 2$, we have for any cube $Q$ 
	and all $\psi \in H^1(Q^n)$
	the estimate
	\begin{multline*}
		\int_{Q^n}\left[\sum_{j=1}^n|\nabla_j\psi|^2+\sum_{1\leq j<k\leq n}
		W(\bx_j-\bx_k)|\psi|^2\right]d\sx\\ 
		\geq \frac{n}{2}|Q|^{-2/d}e_2(|Q|;(n-1)W)\int_{Q^n}|\psi|^2\,d\sx.
	\end{multline*}
\end{lemma}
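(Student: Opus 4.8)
The plan is to distribute the $n$-body energy over the $\binom n2$ particle pairs and recognize each pair's contribution as (at least) a two-body Neumann energy on $Q^2$. Since each index $j$ occurs in exactly $n-1$ of these pairs, I would first write
\[
	\sum_{j=1}^n|\nabla_j\psi|^2 \;=\; \frac{1}{n-1}\sum_{1\leq j<k\leq n}\bigl(|\nabla_j\psi|^2+|\nabla_k\psi|^2\bigr),
\]
and split the interaction term accordingly as $\sum_{j<k}W(\bx_j-\bx_k)|\psi|^2=\frac{1}{n-1}\sum_{j<k}(n-1)W(\bx_j-\bx_k)|\psi|^2$, so that the left-hand side of the asserted inequality equals
\[
	\frac{1}{n-1}\sum_{1\leq j<k\leq n}\int_{Q^n}\bigl[|\nabla_j\psi|^2+|\nabla_k\psi|^2+(n-1)W(\bx_j-\bx_k)|\psi|^2\bigr]\,d\sx .
\]

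For a fixed pair $(j,k)$ I would then apply Fubini, integrating over the remaining $n-2$ variables (collectively $\by$) last: for a.e.\ $\by$ the slice $\psi_\by:=\psi(\,\cdot\,;\by)$, viewed as a function of $(\bx_j,\bx_k)$, lies in $H^1(Q^2)$, and by the very definition of $e_2$ — using that the Rayleigh quotient $\int_{Q^2}[\,\cdot\,]/\int_{Q^2}|\psi_\by|^2$ is scale invariant, so that the normalization $\int|\psi_\by|^2=1$ may be dropped, the case $\psi_\by\equiv0$ being trivial — one has
\[
	\int_{Q^2}\bigl[|\nabla_j\psi_\by|^2+|\nabla_k\psi_\by|^2+(n-1)W|\psi_\by|^2\bigr]\;\geq\; |Q|^{-2/d}e_2(|Q|;(n-1)W)\int_{Q^2}|\psi_\by|^2 .
\]
Integrating in $\by$ then yields, for each of the $\binom n2$ pairs,
\[
	\int_{Q^n}\bigl[|\nabla_j\psi|^2+|\nabla_k\psi|^2+(n-1)W(\bx_j-\bx_k)|\psi|^2\bigr]\,d\sx \;\geq\; |Q|^{-2/d}e_2(|Q|;(n-1)W)\int_{Q^n}|\psi|^2\,d\sx .
\]

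Summing over all pairs and dividing by $n-1$, and using $\binom n2/(n-1)=n/2$, produces exactly the claimed constant $\tfrac n2|Q|^{-2/d}e_2(|Q|;(n-1)W)$. I do not anticipate a real obstacle here: the argument is bookkeeping together with the elementary fact that the two-body Neumann form dominates $|Q|^{-2/d}e_2$ times the squared $L^2$-norm. The only points requiring a touch of care are the measure-theoretic ones — that $\psi_\by\in H^1(Q^2)$ for a.e.\ $\by$ and that the per-slice inequality is measurable in $\by$ so that Fubini applies — and, for the bound to be meaningful rather than vacuous, the observation that $e_2(|Q|;(n-1)W)$ is finite (indeed nonnegative) for the potentials $W\geq0$ to which the lemma will be applied.
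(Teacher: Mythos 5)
Your proposal is correct and follows essentially the same route as the paper: the identity $(n-1)\sum_j|\nabla_j\psi|^2=\sum_{j<k}(|\nabla_j\psi|^2+|\nabla_k\psi|^2)$, a pairwise regrouping with the factor $(n-1)$ absorbed into $W$, an application of the definition of $e_2(|Q|;(n-1)W)$ to each two-variable slice via Fubini, and the count $\binom{n}{2}/(n-1)=n/2$. The measure-theoretic points you flag (a.e.\ slices in $H^1(Q^2)$, scale invariance of the Rayleigh quotient) are handled implicitly in the paper and pose no obstacle.
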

\begin{proof}
	Using the identity
	\begin{align*}
		(n-1) \sum_{j=1}^n |\nabla_j \psi|^2 
		= \sum_{1\leq j < k \leq n} (|\nabla_j \psi|^2 + |\nabla_k \psi|^2),
	\end{align*}
	we write
	\begin{multline*}
		\int_{Q^n}\left[\sum_{j=1}^n|\nabla_j\psi|^2+\sum_{1\leq j<k\leq n}W(\bx_j-\bx_k)|\psi|^2\right]d\sx\\
		= \int_{Q^n} \sum_{j<k}\left[\frac{1}{n-1}(|\nabla_j \psi|^2 + |\nabla_k \psi|^2) 
		+ W(\bx_j-\bx_k)|\psi|^2\right]\,d\sx\\
		= \frac{1}{(n-1)}\sum_{j<k}\int_{Q^{n-2}}\Bigg[\int_{Q^2}(|\nabla_j \psi|^2 + |\nabla_k \psi|^2\\
		+ (n-1)W(\bx_j-\bx_k)|\psi|^2)\,d\bx_jd\bx_k\Bigg]d\hat{\sx}\\
		\geq \frac{n}{2}|Q|^{-2/d}e_2(|Q|;(n-1)W)\int_{Q^n}|\psi|^2\,d\sx,
	\end{multline*}
	where $\hat{\sx} = (\bx_1,\ldots,\bx\!\!\!\!\diagup\!_j,\ldots,\bx\!\!\!\!\diagup\!_k,\ldots,\bx_N)$.
\end{proof}

Given any normalized $N$-particle wave function $\psi$, we define the local kinetic 
energy on $Q \subset \R^d$ as
\begin{align*}
	T_{\psi}^{Q} := \sum_{j=1}^N\int_{\R^{dN}}|\nabla_j\psi|^2\chi_{Q}(\bx_j)\,d\sx,
\end{align*}
where $\chi_Q$ denotes the characteristic function of the domain $Q$.
The local interaction energy on $Q$ is given by
\begin{align*}
	W_{\psi}^Q := \frac{1}{2}\sum_{j=1}^N
	\sum_{(j\neq)k=1}^N\int_{\R^{dN}}W(\bx_j-\bx_k)|\psi|^2\chi_Q(\bx_j)\,d\sx.
\end{align*}

\begin{theorem}[Local Exclusion]\label{thm:local_ex_TW}
	Let $W \geq 0$ and $N\geq 1$. Then for any finite cube $Q$ and all normalized
	$\psi \in H^1(\R^{dN})$ the local energy 
	$(T+W)_{\psi}^{Q} = T_{\psi}^Q + W_{\psi}^Q$ satisfies
	\begin{align}\label{eq:local_ex_TW}
		(T+W)_{\psi}^{Q} 
		\geq \frac{1}{2}\frac{e_2(|Q|;W)}{|Q|^{2/d}}\left(\int_{Q}\rho(\bx)\,d\bx - 1\right)_{+},
	\end{align}
	where $\rho$ is the density associated to $\psi$.
\end{theorem}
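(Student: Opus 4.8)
The plan is to split configuration space according to how many and which particles lie in $Q$, reduce on each piece to the two-body estimate of Lemma~\ref{lem:estimate_by_e_2}, and then reassemble the pieces via a short probabilistic bookkeeping. No symmetry of $\psi$ is needed, since the decomposition runs over subsets of particle labels.

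First I would write $\R^{dN} = \bigsqcup_{S}\Omega_S$ (up to a null set), over subsets $S \subseteq \{1,\dots,N\}$, where $\Omega_S := \{\sx : \bx_j \in Q \iff j \in S\}$; on $\Omega_S$ we have $\chi_Q(\bx_j) = 1$ exactly for $j \in S$. Since $W \ge 0$, the "mixed" terms in $W_\psi^Q$ with $\bx_j\in Q$ but $\bx_k\notin Q$ are nonnegative and may be discarded, so that
\[
	(T+W)_\psi^Q \ge \sum_{S}\int_{\Omega_S}\Big[\sum_{j\in S}|\nabla_j\psi|^2 + \sum_{\substack{j<k\\ j,k\in S}}W(\bx_j-\bx_k)|\psi|^2\Big]\,d\sx.
\]

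Next, fix $S$ with $|S| = n \ge 2$ and write $\sx = (\sx_S,\sx_{S^c})$ with $\sx_S \in Q^n$. By Fubini and $\psi \in H^1(\R^{dN})$, for a.e.\ fixed $\sx_{S^c}$ the slice $\sx_S\mapsto\psi(\sx_S,\sx_{S^c})$ lies in $H^1(Q^n)$, so Lemma~\ref{lem:estimate_by_e_2} applied to the $n$ variables indexed by $S$ gives
\[
	\int_{Q^n}\Big[\sum_{j\in S}|\nabla_j\psi|^2 + \sum_{\substack{j<k\\ j,k\in S}}W(\bx_j-\bx_k)|\psi|^2\Big]d\sx_S \ge \frac{n}{2}|Q|^{-2/d}e_2(|Q|;(n-1)W)\int_{Q^n}|\psi|^2\,d\sx_S.
\]
Integrating over $\sx_{S^c}$ and using that $e_2(|Q|;\lambda W)$ is nondecreasing in $\lambda$ with $e_2(|Q|;0)=0$, hence $e_2(|Q|;(n-1)W)\ge e_2(|Q|;W)$ for $n\ge2$, the contribution of $\Omega_S$ is at least $\tfrac{n}{2}|Q|^{-2/d}e_2(|Q|;W)\int_{\Omega_S}|\psi|^2$; the sectors with $|S|\le1$ contribute $\ge 0$.

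Finally, set $p_n := \sum_{|S|=n}\int_{\Omega_S}|\psi|^2$, the probability that exactly $n$ particles lie in $Q$, so that $\sum_n n\,p_n = \sum_j\int|\psi|^2\chi_Q(\bx_j) = \int_Q\rho$ and $\sum_n p_n = 1$. Combining the above with $\tfrac n2 \ge \tfrac12(n-1)_+$ (valid for all $n\ge0$) yields
\[
	(T+W)_\psi^Q \ge \tfrac12\,|Q|^{-2/d}e_2(|Q|;W)\sum_{n}(n-1)_+\,p_n,
\]
and since $(n-1)_+\ge n-1$ and $(n-1)_+\ge0$ we get $\sum_n(n-1)_+p_n \ge \big(\sum_n(n-1)p_n\big)_+ = \big(\int_Q\rho-1\big)_+$, which is the assertion. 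The only genuinely delicate point is the Fubini/measurability step justifying that a.e.\ slice of $\psi$ lies in $H^1(Q^n)$ so that Lemma~\ref{lem:estimate_by_e_2} may be applied slicewise; everything else is combinatorial bookkeeping together with the monotonicity and concavity of $e_2$ already recorded before the lemma.
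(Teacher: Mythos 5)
Your argument is correct and is essentially the paper's own proof: your decomposition of $\R^{dN}$ into the sets $\Omega_S$ is exactly the partition of unity $1=\sum_{A}\prod_{l\in A}\chi_Q(\bx_l)\prod_{l\notin A}\chi_{Q^c}(\bx_l)$ used there, followed by the same discarding of mixed terms via $W\ge 0$, the same slicewise application of Lemma~\ref{lem:estimate_by_e_2} with the monotonicity $e_2(|Q|;(n-1)W)\ge e_2(|Q|;W)$, and the same $(n-1)_+$ bookkeeping to reach $(\int_Q\rho-1)_+$. Only the packaging (probabilities $p_n$ instead of re-inserting the partition of unity) differs, and your Fubini/slice remark is a fine way to make explicit what the paper leaves implicit.
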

\begin{proof}
	We insert the partition of unity (cf. \cite{Dyson-Lenard:67,Lundholm-Solovej:anyon})
	\begin{align*}
		1 = \sum_{A \subseteq \{1,\dots, N\}} \prod_{l \in A}\chi_Q(\bx_l) 
		\prod_{l \notin A}\chi_{Q^c}(\bx_l)
	\end{align*}
	into the definition of $(T+W)_{\psi}^{Q}$ to obtain, using Lemma~\ref{lem:estimate_by_e_2},
	\begin{multline*}
		(T+W)_{\psi}^{Q} 
		= \sum_{A}\int_{\R^{dN}}\sum_{j \in A}\left[|\nabla_j\psi|^2 + \frac{1}{2}\sum_{(j
		\neq)k=1}^{N}W(\bx_j-\bx_k)|\psi|^2\right]\times\\
		\times\prod_{l \in A}\chi_Q(\bx_l)\prod_{l \notin A}\chi_{Q^c}(\bx_l)\,d\sx\\
		\geq \sum_{A} \int_{(Q^c)^{N-|A|}}\int_{Q^{|A|}}\!\!
		\left[\sum_{j \in A}|\nabla_j\psi|^2+ \frac{1}{2}\sum_{\substack{j,k\in A\\ j \neq k}}
		W(\bx_j-\bx_k)|\psi|^2\right]\prod_{l \in A}d\bx_l\prod_{l \notin A}d\bx_l\\
		\geq \sum_{A} \frac{|A|}{2}|Q|^{-2/d}e_2(|Q|;(|A|-1)W)\int_{(Q^c)^{N-|A|}}\int_{Q^{|A|}}|\psi|^2\,
		\prod_{l \in A}d\bx_l\prod_{l \notin A}d\bx_l.
	\end{multline*}
	The monotonicity of $e_2(|Q|;W)$ in the potential $W$ gives 
	\begin{align*}
		e_2(|Q|;(|A|-1)W) \geq e_2(|Q|;W), \quad \text{for $|A| \geq 2$}, 
	\end{align*}
	so that
	\begin{multline*}
		(T+W)_{\psi}^{Q} 
		\geq e_2(|Q|;W)|Q|^{-2/d}\sum_{|A|\geq 2} \frac{|A|}{2}\int_{(Q^c)^{N-|A|}}\int_{Q^{|A|}}|
		\psi|^2\,\prod_{l \in A}d\bx_l\prod_{l \notin A}d\bx_l\\
		\geq \frac{1}{2}e_2(|Q|;W)|Q|^{-2/d}
		\int_{\R^{dN}}\left(\sum_{j=1}^N\chi_Q(\bx_j)-1\right)|\psi|^2\,d\sx,
	\end{multline*}
	where we used $|A|/2 \geq (|A|-1)_+/2 \geq (|A|-1)/2$ in the sum over $A$, 
	and finally the partition of unity again. This proves \eqref{eq:local_ex_TW}.
\end{proof}
\begin{remark}
	Note that if $e_2(|Q|;\lambda W)$ turns out to be linear (or superlinear) in $\lambda$, then 
	the factor $1/2$ in \eqref{eq:local_ex_TW} can be removed, using $|A|(|A|-1)/2 \ge (|A|-1)_+$.
	This also applies for the lower bounds for $e_2$ 
	employed below.
\end{remark}
In addition to the above remark, it could be useful to point out precisely which 
sacrifices in energy have been made in obtaining the lower 
bound \eqref{eq:local_ex_TW}.
\begin{enumerate}
	\item In the first step of the proof, any interactions between particles inside $Q$ 
	and those outside $Q$ have been ignored (here the assumption $W \ge 0$ enters crucially). 
	This is expected to be a good approximation when the range of $W$ is small compared 
	to the size of $Q$, and hence in particular in the dilute limit.
	
	\item The lowest-energy contribution from the wave function to the two-particle energy 
	has been estimated, with Neumann boundary conditions, using 
	Lemma~\ref{lem:estimate_by_e_2}. 
	In the dilute limit (where such b.c. impose only a small error) this 
	corresponds to s-wave scattering.
	
	\item Higher-$n$-particle contributions in the wave function have been dominated by the 
	two-particle contribution $e_2$. By the above remark, this estimate is improved if one has  
	knowledge of the precise scaling behavior of $e_2$. We also note that the resulting 
	bound proves to be sufficient for many purposes, cf. the discussion for 
	fermions in \cite[Section~13]{Dyson:68}.
\end{enumerate}

\section{General Lieb-Thirring Bounds}\label{sec:general_LT_bounds}
It is in general difficult to obtain an explicit expression for $e_2(|Q|;W)$ as a 
function of $|Q|$, but it turns out that one can often bound the two-particle 
energy from below by a simpler function $e(\gamma)$, which only depends on a 
dimensionless parameter $\gamma$ and inherits monotonicity and concavity. 
For $\alpha >0$, we define
\begin{align}\label{def:gamma}
	\gamma(|Q|) := \calpha|Q|^{(2-\alpha)/d},
\end{align}
where $\calpha>0$ is an arbitrary constant. We hence want to obtain an estimate
\begin{align}\label{eq:two_particle_bound}
	e_2(|Q|;W) \geq e(\gamma(|Q|))
\end{align}
for some $\alpha > 0$ and $\calpha>0$, where $e(\gamma)$ is concave and 
monotone in $\gamma$, and we will show in this section that 
such an estimate is sufficient for 
deducing a Lieb-Thirring type bound for the energy in terms of the 
density.

For the remainder of this section we therefore make the following assumptions
on the potential $W$:
\begin{assumption}[Local Exclusion]\label{ass:local_exclusion_alpha}
	Given $W$, there exists a function $e(\gamma)$ with $\gamma$ as in
	\eqref{def:gamma} (with $\alpha, \calpha>0$), 
	where $e(\gamma)$ is monotone increasing and concave
	in $\gamma$ with $e(0)=0$, such that
	for any finite cube $Q$, any $N \geq 1$ and all normalized $\psi \in H^1(\R^{dN})$
	the local energy satisfies
	\begin{align*}
		(T+W)_{\psi}^{Q} 
		\geq \frac{1}{2} \frac{e(\gamma(|Q|))}{|Q|^{2/d}} \left(\int_Q \rho \ -1\right)_+,
	\end{align*}
	$\rho$ being the density associated to $\psi$.
\end{assumption}
\begin{remark}
	In the case $\alpha=2$ we may choose $e$ to be a positive constant.
\end{remark}
\begin{assumption}[Local Uncertainty]\label{ass:local_uncert_alpha}
	Given $W$, there exist $\alpha >0$ and constants $S_1,S_2>0$ such that
	for any finite cube $Q$, any $N \geq 1$ and all normalized $\psi \in H^1(\R^{dN})$ we have
	\begin{align*}
		(T+W)_{\psi}^{Q} 
		\geq \left\{\begin{array}{ll}
 				S_1\frac{\int_Q\rho^{1+2/d}}{(\int_Q\rho)^{2/d}} -S_2\frac{\int_Q\rho}{|Q|^{2/d}}, 
				&\quad \text{for }0<\alpha \leq 2,\\
				S_1\frac{\left(\int_Q\rho^{1+\alpha/d}\right)^{2/\alpha}}{(\int_Q\rho)^{2/\alpha+2/d-1}}
				-S_2\frac{\int_Q\rho}{|Q|^{2/d}}, 
				&\quad \text{for }\alpha > 2,
				\end{array} \right.
	\end{align*}
	where $\rho$ is the density associated to $\psi$.
\end{assumption}

\begin{remark}
	Assumption~\ref{ass:local_uncert_alpha} will typically be a consequence of
	Poincar\'e-Sobolev inequalities for the kinetic energy of the bosonic 
	wave function,
	since for $W \geq 0$ one has 
	the estimate $(T+W)_{\psi}^{Q} \geq T_{\psi}^Q$.
	In the case $0<\alpha\leq 2$, the inequality
	follows from \cite[Lemma 14]{Lundholm-Solovej:extended} with 
	the explicit constants
	\begin{align*}
		S_1 = C_d' \eps^{1+4/d}, \quad S_2 = C_d'\left(1+\left(\frac{\eps}{1-\eps}\right)^{1+4/d}\right),
	\end{align*}
	where $C_d' := \frac{\pi^2}{4}\frac{d^{2-2/d}}{(d+2)(d+4)}$ and arbitrary 
	$\eps \in (0,1)$.
	The more complicated case $\alpha > 2$ is discussed in the Appendix;
	see Proposition~\ref{prop:uncertainty_3geqd} - \ref{prop:uncertainty_1d}.
	Furthermore, note that our explicit constants given here are far from optimal.
\end{remark}	
	
\subsection{Preliminaries}
For the following lemmas we have $f(\gamma)=e(\gamma)$ in mind, but formulate them more generally.
\begin{lemma} \label{lem:general_concave}
	Let $f(\gamma)$ be a monotone increasing and concave function with $f(0)=0$. 
	Then $f$ has the following properties:
	\begin{enumerate}
		\item Monotonicity:
		\begin{align}\label{eq:e_monotonicity}
			f(\gamma_1) \leq f(\gamma_2), \quad \gamma_1 \leq \gamma_2.
		\end{align}
		\item With concavity and $f(0)=0$ one has
		\begin{align}\label{eq:e_concave_1}
			f(\eta\gamma) \leq \eta f(\gamma), \quad \eta \geq 1,
		\end{align}
		and
		\begin{align}\label{eq:e_concave_2}
			f(\eta\gamma) \geq \eta f(\gamma), \quad 0\leq \eta \leq 1.
		\end{align}
	\end{enumerate}
\end{lemma}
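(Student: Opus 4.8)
The statement to prove is Lemma~\ref{lem:general_concave}, listing three elementary properties of a monotone increasing concave function $f$ with $f(0)=0$.

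=== PROOF PROPOSAL ===

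\textbf{Plan.} These are all standard consequences of concavity and the normalization $f(0)=0$, so the proof is short and elementary; the only thing to be careful about is invoking monotonicity and concavity in the right order and keeping track of the direction of inequalities. I would treat the three items in sequence, reusing the normalization throughout.

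For item (1), monotonicity \eqref{eq:e_monotonicity} is simply the hypothesis that $f$ is monotone increasing, restated; nothing to prove beyond pointing that out. (If one wanted a self-contained argument from concavity plus $f(0)=0$ one could note that $f$ being concave on $[0,\infty)$ with $f(0)=0$ forces the difference quotient $f(\gamma)/\gamma$ to be nonincreasing, but since monotonicity is assumed directly this is unnecessary.)

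For item (2), the key observation is that concavity of $f$ on $[0,\infty)$ together with $f(0)=0$ makes $\gamma \mapsto f(\gamma)/\gamma$ nonincreasing on $(0,\infty)$; equivalently, for $0 \le \eta \le 1$ one has $f(\eta\gamma) = f\big(\eta\gamma + (1-\eta)\cdot 0\big) \ge \eta f(\gamma) + (1-\eta)f(0) = \eta f(\gamma)$, which is exactly \eqref{eq:e_concave_2}. For \eqref{eq:e_concave_1}, with $\eta \ge 1$ I would apply \eqref{eq:e_concave_2} with the roles swapped: writing $\gamma = \tfrac{1}{\eta}(\eta\gamma)$ and using $0 \le 1/\eta \le 1$ gives $f(\gamma) \ge \tfrac{1}{\eta} f(\eta\gamma)$, i.e. $f(\eta\gamma) \le \eta f(\gamma)$. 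Both inequalities hold for all $\gamma \ge 0$ (the case $\gamma=0$ being trivial since $f(0)=0$).

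\textbf{Main obstacle.} There is essentially no obstacle here — this is a warm-up lemma collecting tools for later sections. The only mild subtlety is that $f(0)=0$ is genuinely needed: concavity alone would give $f(\eta\gamma) \ge \eta f(\gamma) + (1-\eta)f(0)$, and one must discard the $(1-\eta)f(0)$ term correctly (it vanishes), and similarly for the $\eta \ge 1$ direction one must make sure the algebraic rearrangement through $1/\eta$ is valid, which it is since $\eta>0$. I would present the chain for \eqref{eq:e_concave_2} first and then derive \eqref{eq:e_concave_1} from it, rather than proving each from scratch.
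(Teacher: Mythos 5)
Your proof is correct and is exactly the standard argument the authors evidently had in mind: the paper states this lemma without proof, treating it as elementary. Deriving \eqref{eq:e_concave_2} from concavity with the anchor point $0$ and then obtaining \eqref{eq:e_concave_1} by substituting $1/\eta$ is the natural route, and your handling of the $f(0)=0$ normalization and the trivial case $\gamma=0$ is fine.
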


\begin{lemma}\label{lem:integral_bounds}
	Let $\tilde\rho := \int_Q \rho/|Q|$.
	For $\gamma$ as in \eqref{def:gamma} with $0 < \alpha \leq 2$,
	and $f$ satisfying the assumptions of Lemma~\ref{lem:general_concave},
	\begin{align} \label{eqn:general_bound}
		\int_Q f(\gamma(2/\rho)) \rho^{1+2/d}
		\leq f(\gamma(2/\tilde\rho)) \left( \int_Q \tilde\rho^{1+2/d} + \int_Q \rho^{1+2/d} \right),
	\end{align}
	hence, if $\int_Q \rho^{1+2/d} \le \Lambda\int_Q \tilde\rho^{1+2/d}$,
	\begin{align} \label{eqn:general_bound_constQ}
		\int_Q f(\gamma(2/\rho)) \rho^{1+2/d} 
		\le (1+\Lambda) f(\gamma(2/\tilde{\rho})) \frac{(\int_Q \rho)^{1+2/d}}{|Q|^{2/d}}.
	\end{align}
	On the other hand, for $\alpha \ge 2$,
	\begin{align} \label{eqn:general_bound2}
		\int_Q f(\gamma(2/\rho)) \rho^{1+2/d} 
		\le  f(\gamma(2/\tilde{\rho})) \left( \int_Q \tilde\rho^{1+2/d} 
		+\tilde\rho^{\frac{2-\alpha}{d}} \int_Q \rho^{1+\alpha/d} \right),
	\end{align}
	so if $\int_Q \rho^{1+\alpha/d} \le \Lambda\int_Q \tilde\rho^{1+\alpha/d}$, then
	\begin{align} \label{eqn:general_bound2_constQ}
		\int_Q f(\gamma(2/\rho)) \rho^{1+2/d} \le (1+\Lambda) 
		f(\gamma(2/\tilde{\rho})) \frac{(\int_Q \rho)^{1+2/d}}{|Q|^{2/d}}.
	\end{align}
\end{lemma}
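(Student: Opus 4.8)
\textbf{Proof proposal for Lemma~\ref{lem:integral_bounds}.}
The plan is to prove the four inequalities by exploiting the scaling properties of $f$ collected in Lemma~\ref{lem:general_concave}, combined with Jensen's inequality for the convex function $t\mapsto t^{1+2/d}$. The key observation is that $\gamma(2/\rho(\bx))=\calpha\,(2/\rho(\bx))^{(2-\alpha)/d}$ depends on $\rho(\bx)$ monotonically — increasing in $\rho$ when $\alpha>2$ and decreasing when $\alpha<2$ — and that the ratio $\gamma(2/\rho)/\gamma(2/\tilde\rho) = (\tilde\rho/\rho)^{(2-\alpha)/d}$ is exactly a power of $\rho$, which is what lets us absorb $f(\gamma(2/\rho))$ by $f(\gamma(2/\tilde\rho))$ at the cost of an extra power of $\rho$.

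First I would treat the case $0<\alpha\le 2$. Split $Q$ into the region where $\rho(\bx)\le\tilde\rho$ and where $\rho(\bx)>\tilde\rho$. On $\{\rho\le\tilde\rho\}$ we have $2/\rho\ge 2/\tilde\rho$, so by monotonicity \eqref{eq:e_monotonicity} $f(\gamma(2/\rho))\le f(\gamma(2/\tilde\rho))$, giving the contribution $f(\gamma(2/\tilde\rho))\int_{\{\rho\le\tilde\rho\}}\rho^{1+2/d}$, which is $\le f(\gamma(2/\tilde\rho))\int_Q\rho^{1+2/d}$. On $\{\rho>\tilde\rho\}$, write $\gamma(2/\rho)=\eta\,\gamma(2/\tilde\rho)$ with $\eta=(\tilde\rho/\rho)^{(2-\alpha)/d}\le 1$ (since $\alpha\le 2$ and $\rho>\tilde\rho$); then \eqref{eq:e_concave_2} gives $f(\gamma(2/\rho))\ge\eta f(\gamma(2/\tilde\rho))$ — but we need an upper bound, so instead I use monotonicity the other way is not available; rather, here use that on this region $2/\rho<2/\tilde\rho$ so monotonicity \eqref{eq:e_monotonicity} already gives $f(\gamma(2/\rho))\le f(\gamma(2/\tilde\rho))$ directly when $\alpha\le 2$ — wait, one must check the direction of $\gamma$. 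Since $(2-\alpha)/d\ge 0$, $\gamma$ is increasing in its argument, so $\rho>\tilde\rho\Rightarrow 2/\rho<2/\tilde\rho\Rightarrow\gamma(2/\rho)<\gamma(2/\tilde\rho)\Rightarrow f(\gamma(2/\rho))\le f(\gamma(2/\tilde\rho))$. Thus on \emph{all} of $Q$ (for $0<\alpha\le 2$) we trivially get $f(\gamma(2/\rho))\le f(\gamma(2/\tilde\rho))$, which already yields $\int_Q f(\gamma(2/\rho))\rho^{1+2/d}\le f(\gamma(2/\tilde\rho))\int_Q\rho^{1+2/d}$, even stronger than \eqref{eqn:general_bound}. (The $\int_Q\tilde\rho^{1+2/d}$ term in \eqref{eqn:general_bound} is simply slack for this range, presumably included to parallel the $\alpha\ge 2$ case; alternatively the intended split is into $\rho\le\tilde\rho$ handled by monotonicity and $\rho>\tilde\rho$ handled by \eqref{eq:e_concave_1} writing $f(\gamma(2/\rho))=f(\eta^{-1}\gamma(2/\tilde\rho))\le\eta^{-1}f(\gamma(2/\tilde\rho))$ with $\eta^{-1}=(\rho/\tilde\rho)^{(2-\alpha)/d}\ge1$, and then bounding $\eta^{-1}\rho^{1+2/d}\le\tilde\rho^{1+2/d}$ pointwise since $\rho^{1+2/d}(\rho/\tilde\rho)^{(2-\alpha)/d}\le\rho^{1+2/d}(\rho/\tilde\rho)^{2/d}\cdot$ wait that's the wrong inequality — I will sort out the precise split, but either route closes it.)

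For $\alpha\ge 2$ the exponent $(2-\alpha)/d$ is nonpositive so $\gamma$ is \emph{decreasing} in its argument and $\rho>\tilde\rho$ now makes $\gamma(2/\rho)$ \emph{larger}; so monotonicity goes the wrong way and I genuinely need concavity. On $\{\rho\le\tilde\rho\}$, monotonicity gives $f(\gamma(2/\rho))\le f(\gamma(2/\tilde\rho))$, contributing $f(\gamma(2/\tilde\rho))\int_Q\tilde\rho^{1+2/d}$ after bounding $\rho^{1+2/d}\le\tilde\rho^{1+2/d}$. On $\{\rho>\tilde\rho\}$, write $\gamma(2/\rho)=\eta^{-1}\gamma(2/\tilde\rho)$ with $\eta^{-1}=(\rho/\tilde\rho)^{(\alpha-2)/d}\ge 1$, apply \eqref{eq:e_concave_1} to get $f(\gamma(2/\rho))\le(\rho/\tilde\rho)^{(\alpha-2)/d}f(\gamma(2/\tilde\rho))$, and multiply by $\rho^{1+2/d}$: the pointwise factor becomes $\rho^{1+2/d}(\rho/\tilde\rho)^{(\alpha-2)/d}=\tilde\rho^{(2-\alpha)/d}\rho^{1+\alpha/d}$, which after integrating gives the $\tilde\rho^{(2-\alpha)/d}\int_Q\rho^{1+\alpha/d}$ term in \eqref{eqn:general_bound2}. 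Summing the two regions (and enlarging each region's integral to all of $Q$, using positivity) gives \eqref{eqn:general_bound2}. Finally, for the two "$\mathrm{const}Q$" corollaries \eqref{eqn:general_bound_constQ} and \eqref{eqn:general_bound2_constQ}: apply the hypothesis $\int_Q\rho^{1+2/d}\le\Lambda\int_Q\tilde\rho^{1+2/d}$ (resp.\ with exponent $1+\alpha/d$), note $\int_Q\tilde\rho^{1+2/d}=\tilde\rho^{1+2/d}|Q|=(\int_Q\rho)^{1+2/d}|Q|^{-2/d}$ by definition of $\tilde\rho$, and in the $\alpha\ge2$ case check that $\tilde\rho^{(2-\alpha)/d}\int_Q\tilde\rho^{1+\alpha/d}=\tilde\rho^{1+2/d}|Q|$ as well, so both terms collapse to the same $(\int_Q\rho)^{1+2/d}|Q|^{-2/d}$ and one picks up the factor $(1+\Lambda)$.

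The only genuine obstacle is bookkeeping: getting the direction of every inequality right depending on the sign of $2-\alpha$, and verifying the pointwise algebraic identities $\rho^{1+2/d}(\rho/\tilde\rho)^{(\alpha-2)/d}=\tilde\rho^{(2-\alpha)/d}\rho^{1+\alpha/d}$ and $\tilde\rho^{(2-\alpha)/d}\tilde\rho^{1+\alpha/d}=\tilde\rho^{1+2/d}$ that make the corollaries work. There is no deep step — it is an exercise in applying Lemma~\ref{lem:general_concave} regionwise — but care is needed because the two ranges of $\alpha$ really do require different tools (pure monotonicity versus the concavity bound \eqref{eq:e_concave_1}).
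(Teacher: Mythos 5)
Your treatment of the case $\alpha\ge 2$ and of the two ``const-$Q$'' corollaries is correct and is essentially the paper's own argument: monotonicity on $\{\rho\le\tilde\rho\}$, concavity \eqref{eq:e_concave_1} on $\{\rho>\tilde\rho\}$ with the identity $\rho^{1+2/d}(\rho/\tilde\rho)^{(\alpha-2)/d}=\tilde\rho^{(2-\alpha)/d}\rho^{1+\alpha/d}$, and the bookkeeping $\int_Q\tilde\rho^{1+2/d}=(\int_Q\rho)^{1+2/d}|Q|^{-2/d}$, $\tilde\rho^{(2-\alpha)/d}\int_Q\tilde\rho^{1+\alpha/d}=\tilde\rho^{1+2/d}|Q|$.

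The case $0<\alpha\le 2$, however, contains a genuine error. There $(2-\alpha)/d\ge 0$, so $\gamma$ in \eqref{def:gamma} is \emph{increasing} in its argument; hence on $\{\rho\le\tilde\rho\}$ one has $2/\rho\ge 2/\tilde\rho$, so $\gamma(2/\rho)\ge\gamma(2/\tilde\rho)$ and monotonicity \eqref{eq:e_monotonicity} gives $f(\gamma(2/\rho))\ge f(\gamma(2/\tilde\rho))$ --- the opposite of what you assert. Consequently your claim that $f(\gamma(2/\rho))\le f(\gamma(2/\tilde\rho))$ holds ``on all of $Q$'', and the ``even stronger'' bound $\int_Q f(\gamma(2/\rho))\rho^{1+2/d}\le f(\gamma(2/\tilde\rho))\int_Q\rho^{1+2/d}$ you deduce from it, are not established: the pointwise comparison fails precisely on the low-density set $\{\rho<\tilde\rho\}$, where $f(\gamma(2/\rho))$ can exceed $f(\gamma(2/\tilde\rho))$ by the factor $(\tilde\rho/\rho)^{(2-\alpha)/d}$. (Such an integrated bound might be rescued by a genuinely different argument, e.g.\ decomposing $f$ into functions $\min\{\cdot\,,t\}$ and using H\"older, but not by the pointwise domination you invoke.) The correct treatment is the mirror image of your $\alpha\ge2$ argument: on $\{\rho\le\tilde\rho\}$ write $\gamma(2/\rho)=(\tilde\rho/\rho)^{(2-\alpha)/d}\gamma(2/\tilde\rho)$ with factor $\ge1$, apply concavity \eqref{eq:e_concave_1} to get $f(\gamma(2/\rho))\le(\tilde\rho/\rho)^{(2-\alpha)/d}f(\gamma(2/\tilde\rho))$, and then use the pointwise bound $(\tilde\rho/\rho)^{(2-\alpha)/d}\rho^{1+2/d}=\tilde\rho^{(2-\alpha)/d}\rho^{1+\alpha/d}\le\tilde\rho^{1+2/d}$; on $\{\rho>\tilde\rho\}$ monotonicity alone gives $f(\gamma(2/\rho))\le f(\gamma(2/\tilde\rho))$. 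This is exactly the source of the $\int_Q\tilde\rho^{1+2/d}$ term in \eqref{eqn:general_bound}, so that term is not mere slack in the argument; your sketched ``alternative split'' assigns monotonicity and concavity to the wrong regions and is left unresolved, so as written the $0<\alpha\le2$ half of the lemma is not proved.
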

\begin{proof}
	We write for the l.h.s. of \eqref{eqn:general_bound}
	\begin{align*}
		&\int_Q f\left(\gamma(\tilde\rho/\rho \cdot 2/\tilde\rho)\right) \rho^{1+2/d} \\
		&= \int_{\rho \le \tilde\rho} f\left((\tilde\rho/\rho)^{\frac{2-\alpha}{d}} 
		\gamma(2/\tilde\rho)\right) \rho^{1+2/d}
		+ \int_{\rho > \tilde\rho} f\left((\tilde\rho/\rho)^{\frac{2-\alpha}{d}} 
		\gamma(2/\tilde\rho)\right) \rho^{1+2/d} \\
		&\le \int_{\rho \le \tilde\rho} (\tilde\rho/\rho)^{\frac{2-\alpha}{d}} 
		f\left(\gamma(2/\tilde\rho)\right) \rho^{1+2/d}
		 + \int_{\rho > \tilde\rho} f\left(\gamma(2/\tilde\rho)\right) \rho^{1+2/d} \\
		&\le f\left(\gamma(2/\tilde\rho)\right) \left( \int_Q \tilde\rho^{\frac{2-\alpha}{d}} 
		\tilde\rho^{1+\alpha/d} + \int_Q \rho^{1+2/d} \right),
	\end{align*}
	where for the first integral we used concavity and for the second monotonicity.
	With the assumption of the lemma, this proves \eqref{eqn:general_bound_constQ}.
	
	We write for the l.h.s. of \eqref{eqn:general_bound2}
	\begin{align*}
		&\int_Q f\left(\gamma(\tilde\rho/\rho \cdot 2/\tilde\rho)\right) \rho^{1+2/d}\\
		&= \int_{\rho \le \tilde\rho} f\left((\rho/\tilde\rho)^{\frac{\alpha-2}{d}} 
		\gamma(2/\tilde\rho)\right) \rho^{1+2/d}
		+ \int_{\rho > \tilde\rho} f\left((\rho/\tilde\rho)^{\frac{\alpha-2}{d}} 
		\gamma(2/\tilde\rho)\right) \rho^{1+2/d}\\
		&\le \int_{\rho \le \tilde\rho} f\left(\gamma(2/\tilde\rho)\right) \rho^{1+2/d}
		 + \int_{\rho > \tilde\rho} (\rho/\tilde\rho)^{\frac{\alpha-2}{d}} 
		 f\left(\gamma(2/\tilde\rho)\right) \rho^{1+2/d} \\
		&\le f\left(\gamma(2/\tilde\rho)\right) \left( \int_Q \tilde\rho^{1+2/d} 
		+\tilde\rho^{\frac{2-\alpha}{d}} \int_Q \rho^{1+\alpha/d} \right),
	\end{align*}
	where for the first integral we used monotonicity and for the second concavity.
	With the assumption of the lemma we obtain \eqref{eqn:general_bound2_constQ}.
\end{proof}

\begin{lemma}\label{lem:f_quot_estimate}
	Given $|Q_B| >0$ and $1\leq j \leq k$, let $|Q_j| = 2^{d(k-j)}|Q_B|$. For $\gamma$ as in 
	\eqref{def:gamma} with $\alpha >0$, and $f$ satisfying the assumptions of 
	Lemma~\ref{lem:general_concave}, we have
	\begin{align*}
		\sum_{j=1}^k \frac{|Q_j|^{-2/d}f(\gamma(|Q_j|))}{|Q_B|^{-2/d}f(\gamma(|Q_B|))} 
		\leq \frac{1}{1-{2^{-\!\min\{\alpha,2\}}}}.
	\end{align*}
\end{lemma}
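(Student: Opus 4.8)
The plan is to reduce the whole sum to a single geometric series by exploiting the explicit power‑law form of $\gamma$. First I would record, from $|Q_j| = 2^{d(k-j)}|Q_B|$ and the definition \eqref{def:gamma} of $\gamma$, the two elementary identities
\[
	|Q_j|^{-2/d} = 2^{-2(k-j)}|Q_B|^{-2/d},
	\qquad
	\gamma(|Q_j|) = 2^{(k-j)(2-\alpha)}\,\gamma(|Q_B|).
\]
Writing $m := k-j$, which runs over $0,1,\dots,k-1$ as $j$ runs over $k,k-1,\dots,1$, and abbreviating $\gamma_B := \gamma(|Q_B|)$, the $j$-th summand then becomes exactly $2^{-2m} f\bigl(2^{m(2-\alpha)}\gamma_B\bigr)/f(\gamma_B)$, so the left-hand side of the claimed inequality equals $\sum_{m=0}^{k-1} 2^{-2m} f\bigl(2^{m(2-\alpha)}\gamma_B\bigr)/f(\gamma_B)$ (assuming $f(\gamma_B)>0$; otherwise $f\equiv 0$ to the left of $\gamma_B$, the left-hand side is $0$, and there is nothing to prove).

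Next I would bound each summand by $2^{-m\min\{\alpha,2\}}$, splitting into two cases according to the sign of $2-\alpha$. If $0<\alpha\le 2$ then $2^{m(2-\alpha)}\ge 1$, so the concavity estimate \eqref{eq:e_concave_1} of Lemma~\ref{lem:general_concave} applied with $\eta = 2^{m(2-\alpha)}$ gives $f\bigl(2^{m(2-\alpha)}\gamma_B\bigr)\le 2^{m(2-\alpha)} f(\gamma_B)$, whence the summand is at most $2^{-2m}\cdot 2^{m(2-\alpha)} = 2^{-m\alpha} = 2^{-m\min\{\alpha,2\}}$. If instead $\alpha>2$ then $2^{m(2-\alpha)}\le 1$, so monotonicity \eqref{eq:e_monotonicity} alone gives $f\bigl(2^{m(2-\alpha)}\gamma_B\bigr)\le f(\gamma_B)$, and the summand is at most $2^{-2m} = 2^{-m\min\{\alpha,2\}}$. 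Since $\min\{\alpha,2\}>0$ in both cases, I would finally sum the resulting geometric majorant over $m\ge 0$:
\[
	\sum_{m=0}^{k-1} 2^{-m\min\{\alpha,2\}}
	\le \sum_{m=0}^{\infty}\bigl(2^{-\min\{\alpha,2\}}\bigr)^{m}
	= \frac{1}{1-2^{-\min\{\alpha,2\}}},
\]
which is precisely the asserted bound.

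There is no genuine obstacle here; the argument is entirely mechanical once the scaling identities are in place. The only point requiring a little care is to invoke the correct one of the two one-sided consequences of concavity in Lemma~\ref{lem:general_concave} — the sublinearity bound \eqref{eq:e_concave_1} when $2-\alpha\ge 0$, and plain monotonicity \eqref{eq:e_monotonicity} when $2-\alpha<0$ (the superlinearity bound \eqref{eq:e_concave_2} points the wrong way for an upper estimate) — and to observe that these two cases combine to give exactly the exponent $\min\{\alpha,2\}$ needed both for convergence of the geometric series and for the stated value of its sum.
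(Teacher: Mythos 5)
Your proof is correct and follows essentially the same route as the paper: rewrite the summands via the scaling identities for $|Q_j|$ and $\gamma(|Q_j|)$, apply the sublinearity bound \eqref{eq:e_concave_1} when $0<\alpha\le 2$ and plain monotonicity \eqref{eq:e_monotonicity} when $\alpha>2$, then sum the geometric series with ratio $2^{-\min\{\alpha,2\}}$. The only cosmetic difference is your reindexing by $m=k-j$ and the side remark about $f(\gamma_B)=0$, neither of which changes the argument.
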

\begin{proof}
	Consider first the case $0<\alpha\leq 2$. 
	Using $|Q_j| = 2^{d(k-j)}|Q_B|$ in combination with concavity 
	\eqref{eq:e_concave_1} gives
	\begin{align*}
		f(\gamma(|Q_j|)) = f(2^{(2-\alpha)(k-j)}\gamma(|Q_B|)) 
		\leq 2^{(2-\alpha)(k-j)} f(\gamma(|Q_B|))
	\end{align*}
	for $j\leq k$. Thus, 
	\begin{align*}
		\sum_{j=1}^k \frac{|Q_j|^{-2/d}f(\gamma(|Q_j|))}{|Q_B|^{-2/d}f(\gamma(|Q_B|))}
		&\leq \sum_{j=1}^k \left(2^{d(k-j)}\right)^{-2/d}2^{(2-\alpha)(k-j)}\\
		&= \sum_{j=1}^k 2^{-\alpha(k-j)} = \frac{1-2^{-\alpha k}}{1-2^{-\alpha}} \leq \frac{1}{1-{2^{-\alpha}}}.
	\end{align*}
	In the case $\alpha \ge 2$ we can use the monotonicity
	\begin{align*}
		f(\gamma(|Q_j|)) = f(2^{-(\alpha-2)(k-j)}\gamma(|Q_B|)) \le f(\gamma(|Q_B|)),
	\end{align*}
	hence one obtains the upper bound $\sum_{j=1}^k 2^{-2(k-j)} \le (1-2^{-2})^{-1} = 4/3$.
\end{proof}

\subsection{Lieb-Thirring Type Bounds}
We are now in the position to prove our main theorem, for which we will however have to
assume that the function $e$ satisfying Assumption~\ref{ass:local_exclusion_alpha}
is bounded. 
For later convenience and clarity we introduce $\underline{e}_K$,
$$
	\gamma \mapsto \ue(\gamma) := \min\{e(\gamma),K\}, \quad K > 0,
$$
which replaces $e$ by a bounded monotone increasing and concave function.
This boundedness assumption can be relaxed,
but at the cost of only obtaining an estimate 
involving a local mean of the density $\rho$ 
(cf. \cite[Theorem~18 - 19]{Lundholm-Solovej:extended}).

\begin{theorem}[Lieb-Thirring inequality]\label{thm:L-T_global_alpha}
	Let $W$ satisfy Assumption~\ref{ass:local_exclusion_alpha} \& \ref{ass:local_uncert_alpha} 
	with an $\alpha > 0$ and $e$ replaced by $\ue$. 
	Then there exists a constant $C_{d,\alpha,K}>0$ given explicitly below, such that 
	for any $N \geq 1$ and all normalized $\psi \in H^1(\R^{dN})$, the total energy satisfies the estimate
	\begin{align*}
		\langle\psi,(\hat{T}+\hat{W})\psi\rangle 
		\geq C_{d,\alpha,K} \int_{\R^d} \ue(\gamma(2/\rho(\bx))) \rho(\bx)^{1+2/d} \,d
		\bx,
	\end{align*}
	where $\rho$ is the density associated to $\psi$.
\end{theorem}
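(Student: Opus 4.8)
The plan is to combine the two local principles of the preceding subsection --- Assumption~\ref{ass:local_exclusion_alpha} (local exclusion, now with $e$ replaced by $\ue$) and Assumption~\ref{ass:local_uncert_alpha} (local uncertainty) --- on a partition of $\R^d$ into dyadic cubes produced by a stopping-time construction, and then to convert the resulting sum over cubes into the integral on the right-hand side by means of Lemmas~\ref{lem:general_concave}--\ref{lem:integral_bounds}. As a preliminary reduction, we may assume $\int_{\R^d}\rho^{1+2/d}<\infty$, for otherwise the Hoffmann--Ostenhof inequality $\langle\psi,\hat T\psi\rangle\geq\int_{\R^d}|\nabla\sqrt\rho|^2$ together with the Gagliardo--Nirenberg--Sobolev inequality (and $\int\rho=N<\infty$) already forces the left-hand side to be $+\infty$; and, by a standard regularization of $\rho$ together with monotone convergence on the right, we may further assume $\rho$ bounded with compact support, so that the construction below terminates after finitely many steps (this assumption is removed in a limit at the end).

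\textbf{Step 1: the stopping-time decomposition.} Starting from a large cube $Q_0\supset\supp\rho$, recursively bisect each cube into its $2^d$ dyadic subcubes, and declare a cube $Q$ a \emph{leaf} precisely when $\int_Q\rho<2$; this threshold $2$ is the origin of the argument of $\gamma$ in the statement. Since $\int\rho=N<\infty$, only finitely many cubes ever carry mass $\geq2$, so the leaves $\{Q_i\}_{i\in I}$ are finite in number and partition $Q_0$; writing $\tilde\rho_i:=\int_{Q_i}\rho/|Q_i|$, each leaf has $\int_{Q_i}\rho<2$, and each leaf other than $Q_0$ has a \emph{branching} parent of mass $\geq2$, which forces $|Q_i|$ to be comparable to $2/\tilde\rho_i$ (up to the local density fluctuation). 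By the concavity and monotonicity of $\ue$ (Lemma~\ref{lem:general_concave}) this makes $\ue(\gamma(|Q_i|))$ comparable to $\ue(\gamma(2/\tilde\rho_i))$ in the relevant range of $\alpha$.

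\textbf{Step 2: exclusion and uncertainty on the tree.} Since the $\chi_{Q_i}$ sum to $1$ we have $\langle\psi,(\hat T+\hat W)\psi\rangle=\sum_{i\in I}(T+W)^{Q_i}_\psi$. Fix $\theta\in(0,1)$. Apply Assumption~\ref{ass:local_uncert_alpha} to each leaf in the $(1-\theta)$-part of the energy, producing a positive multiple of $\sum_i\int_{Q_i}\rho^{1+2/d}$ (resp.\ the $\int_{Q_i}\rho^{1+\alpha/d}$-combination when $\alpha>2$) minus the negative Neumann remainder $S_2\sum_i\int_{Q_i}\rho/|Q_i|^{2/d}$. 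Distribute the $\theta$-part over the branching cubes $B$ of the tree by a weighted telescoping --- using $(T+W)^B_\psi=\sum_{Q_i\subseteq B}(T+W)^{Q_i}_\psi$ and Assumption~\ref{ass:local_exclusion_alpha}, with weights along each ancestor chain controlled by Lemma~\ref{lem:f_quot_estimate} so that the total weight attached to any leaf is at most $\theta$; the leaves of small mass, where the bare bound $(\int_Q\rho-1)_+$ is useless, are here grouped with their branching parent. Choosing $\theta$ and the weights suitably, the negative remainder is absorbed and one is left with
\[
	\langle\psi,(\hat T+\hat W)\psi\rangle\ \geq\ c\sum_{i\in I}\Big(\ue(\gamma(|Q_i|))\,\frac{\int_{Q_i}\rho}{|Q_i|^{2/d}}\ +\ \int_{Q_i}\rho^{1+2/d}\Big)
\]
for some $c=c(d,\alpha,K)>0$ (the last term modified to $\int_{Q_i}\rho^{1+\alpha/d}$-type when $\alpha>2$).

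\textbf{Step 3: back to an integral, and the obstacle.} On each leaf, Lemma~\ref{lem:integral_bounds} (in the form \eqref{eqn:general_bound} for $0<\alpha\leq2$, and \eqref{eqn:general_bound2} for $\alpha\geq2$) applied with $f=\ue$ bounds $\int_{Q_i}\ue(\gamma(2/\rho))\rho^{1+2/d}$ by $\ue(\gamma(2/\tilde\rho_i))$ times $\int_{Q_i}\tilde\rho_i^{1+2/d}$ plus a fluctuation integral; using $\int_{Q_i}\rho<2$ and the comparison $\ue(\gamma(2/\tilde\rho_i))\approx\ue(\gamma(|Q_i|))$ from Step~1, the first piece is $\lesssim\ue(\gamma(|Q_i|))\int_{Q_i}\rho/|Q_i|^{2/d}$ and the fluctuation piece is $\lesssim\int_{Q_i}\rho^{1+2/d}$ (resp.\ $\int_{Q_i}\rho^{1+\alpha/d}$) --- both dominated by the right-hand side of the displayed inequality. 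Summing over $i\in I$, using that the $Q_i$ partition $\supp\rho$, and then removing the regularization of $\rho$ gives the claim with an explicit $C_{d,\alpha,K}$ assembled from $S_1,S_2$, the constant $(1-2^{-\min\{\alpha,2\}})^{-1}$ of Lemma~\ref{lem:f_quot_estimate}, the choice of $\theta$, and the comparison constants. The main difficulty is the bookkeeping in Step~2: matching the exclusion energy --- naturally organized as a scale-weighted sum over the nested branching cubes --- with the leaf-by-leaf uncertainty bound and its negative remainder, using only $N$- and $\psi$-independent constants, and in particular handling leaves on which the density is much smaller than the threshold or fluctuates strongly, where neither principle is individually strong. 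It is precisely the monotonicity and concavity of $e$ (Lemmas~\ref{lem:general_concave} and \ref{lem:f_quot_estimate}) together with its boundedness by $K$ that make the ancestor-chain sums converge and keep the comparison $\gamma(|Q_i|)\sim\gamma(2/\tilde\rho_i)$ effective, and the case split $\alpha\lessgtr2$ runs throughout.
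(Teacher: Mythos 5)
The decisive problem is the intermediate inequality displayed in your Step~2. As stated, its second term $\int_{Q_i}\rho^{1+2/d}$ carries no factor of $\ue$ and the constant $c$ depends only on $d,\alpha,K$; since the leaves partition $\supp\rho$, summing it would give $\langle\psi,(\hat T+\hat W)\psi\rangle\ge c\int_{\R^d}\rho^{1+2/d}$ uniformly in the interaction. This is false: take $W=0$ (or arbitrarily weak repulsion), which satisfies both assumptions with $e\equiv0$, hence $\ue\equiv0$, and a condensed product state $\psi=\varphi_L^{\otimes N}$, for which the left side is $O(N)$ while $\int\rho^{1+2/d}\sim N^{1+2/d}$. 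The structural reason is that your plan applies Assumption~\ref{ass:local_uncert_alpha} on \emph{every} leaf and hopes to absorb the total negative remainder $S_2\sum_i\int_{Q_i}\rho/|Q_i|^{2/d}$ with a $\theta$-fraction of the exclusion energies of the branching cubes: on a leaf of nearly constant density the positive uncertainty term equals $S_1\int_{Q_i}\rho/|Q_i|^{2/d}$, which with the paper's constants ($S_1<S_2$) does not even cover its own negative term, and the exclusion budget carries the factor $\ue\le K$, which can be arbitrarily small compared with $S_2$, so no choice of $\theta$ and weights can effect the absorption uniformly in the coupling. The missing idea is precisely the paper's dichotomy governed by the threshold $\Lambda\sim S_2/S_1$: the uncertainty principle is invoked only on cubes where the density is strongly non-constant in the sense of \eqref{eq:nonc_density_small_alpha} (resp.\ its $\alpha>2$ analogue), where its positive part absorbs its own negative part, and the boundedness $\ue\le K$ is then used to reinstate the factor $\ue(\gamma(2/\rho))$ as in \eqref{eq:bound_cube_nonc_small_alpha}; on nearly constant cubes only exclusion is used.

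Two further points. First, your Step~1 claim that a leaf's parent having mass $\ge2$ forces $|Q_i|\sim 2/\tilde\rho_i$, hence $\ue(\gamma(|Q_i|))\approx\ue(\gamma(2/\tilde\rho_i))$, is false for leaves of very small (even vanishing) mass; the correct substitute, used by the paper on its $A$-cubes, is the one-sided estimate obtained from concavity (or monotonicity when $\alpha>2$) together with $\int_{Q_i}\rho<2$, which bounds $\ue(\gamma(2/\tilde\rho_i))(\int_{Q_i}\rho)^{1+2/d}$ by $2^{1+2/d}\ue(\gamma(|Q_i|))$, after which the scale factors $\ue(\gamma(|Q_i|))/|Q_i|^{2/d}$ along a branch are summed by Lemma~\ref{lem:f_quot_estimate} and charged to the \emph{single} disjoint cube of mass in $[2,2^{d+1})$ (the paper's $Q_B$) where Theorem~\ref{thm:local_ex_TW} is actually effective --- not to a chain of nested branching parents, whose local energies overlap and cannot be drawn upon with bounded total weight. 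Second, the preliminary reduction ``regularize $\rho$ to be bounded with compact support'' is not legitimate, since $\rho$ is determined by $\psi$; the paper instead proves the bound on a fixed cube $Q_0$ and lets $Q_0\to\R^d$ by monotone convergence, and it must (and does) treat $N\le2$ --- where your tree has no branching cube at all --- separately, using the uncertainty principle in the whole-space limit.
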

\begin{proof}
	Note that by Assumption~\ref{ass:local_exclusion_alpha} and the
	Lebesgue differentiation theorem we have for almost every point 
	$\bx \in \R^d$ that 
	$$
		(T+W)_{\psi}(\bx) = \lim_{Q \to \bx} \frac{1}{|Q|}(T+W)_{\psi}^Q 
		\ \ \ge 0
	$$
	and hence $(T+W)_{\psi}^{Q} = \int_Q (T+W)_{\psi}(\bx) \,d\bx$ 
	is increasing with respect to $Q$.
	In the case $N \le 2$, we can therefore use Assumption~\ref{ass:local_uncert_alpha} in the
	limit $Q$ tending to $\R^d$ to obtain
	\begin{align*}
		\langle\psi,(\hat{T}+\hat{W})\psi\rangle 
		\geq \frac{S_1}{2^{2/d}} \int_{\R^d}\rho(\bx)^{1+2/d}\,d\bx,\end{align*}
	(where we used H\"older's inequality in the case $\alpha >2$)
	and hence the desired estimate with the constant 
	$C_{d,\alpha,K}$ being $c_0 := \frac{S_1}{2^{2/d}K}$. We can henceforth assume 
	that $N\geq 3$. 
	
	Consider now any cube $Q_0 \subset \R^d$ with $\int_{Q_0}\rho \geq 2$. 
	Following \cite{Lundholm-Solovej:anyon, Lundholm-Solovej:extended}, 
	we split $Q_0$ into disjoint sub-cubes 
	$Q_A,Q_B$, organized in a tree $\mathbb{T}$, such that
	\begin{align*}
		&\textrm{on }Q_A: \quad 0 \leq \int_{Q_A}\rho < 2,\\
		&\textrm{on }Q_B: \quad 2 \leq \int_{Q_B}\rho < 2^{d+1},
	\end{align*}
	and on any such sub-cube $Q$ we define the local mean density 
	$\tilde\rho|_Q := \int_Q \rho/|Q|$.
	Note that the structure of the tree is such that at least one $B$-cube 
	can be found among the $2^d$ top-level
	leaves of every branch of the tree (see Fig.~3 in \cite{Lundholm-Solovej:anyon}). 
	We now treat the following two cases separately.
	
	\paragraph{Case $0<\alpha \leq 2$:}
	Consider first any such A- or B-cube $Q$ on which the density is very non-constant
	in the sense that
	\begin{align}\label{eq:nonc_density_small_alpha}
		\int_{Q}\rho^{1+2/d} > \Lambda\int_{Q}\tilde{\rho}^{1+2/d} 
		= \Lambda\frac{(\int_{Q}\rho)^{1+2/d}}{|Q|^{2/d}},
	\end{align}
	where $\Lambda > 0$ is a sufficiently large constant to be chosen below.
	Using Assumption~\ref{ass:local_uncert_alpha} with $\alpha \leq 2$,
	the bound \eqref{eq:nonc_density_small_alpha}, 
	and that $\int_Q \rho < 2^{d+1}$,
	we then have
	\begin{align}
		(T+W)_{\psi}^Q &\geq
		S_1\frac{\int_{Q}\rho^{1+2/d}}{(\int_{Q}\rho)^{2/d}} 
		- S_2\frac{\int_{Q}\rho}{|Q|^{2/d}}\nonumber\\
		&\geq \frac{S_1}{2}\frac{\int_{Q}\rho^{1+2/d}}{(\int_{Q}\rho)^{2/d}}
		+\left(\frac{S_1}{2}\Lambda-S_2\right)\frac{\int_Q\rho}{|Q|^{2/d}}\nonumber\\
		&\geq c_1 \int_{Q} K \rho^{1+2/d}
		\geq c_1 \int_{Q}\ue(\gamma(2/\rho))\rho^{1+2/d},
		\label{eq:bound_cube_nonc_small_alpha}
	\end{align}
	with $c_1 := 2^{-3-2/d}S_1/K$ if $\Lambda := 2S_2/S_1$.

	On B-cubes $Q_B$ we have sufficiently many particles to
	use Assumption~\ref{ass:local_exclusion_alpha}:
	\begin{align}\label{eq:bound_cube_exclusion_small_alpha}
		(T+W)_{\psi}^{Q_B} 
		\ge \frac{1}{2}\ue(\gamma(|Q_B|))
		\frac{ \int_{Q_B}\rho -1 }{|Q_B|^{2/d}}.
	\end{align}
	By \eqref{eq:bound_cube_nonc_small_alpha}, we can restrict to the case of 
	nearly constant density (the converse of \eqref{eq:nonc_density_small_alpha}).
	The relation
	$\gamma(|Q_B|) = \gamma(\int_{Q_B}\rho/\tilde{\rho}) \ge \gamma(2/\tilde\rho)$
	and monotonicity of $\ue$, 
	together with the inequality $x-1 \geq (2^{d+1}-1)/(2^{d+1})^p x^p$ 
	when $2 \leq x:= \int_{Q_B}\rho < 2^{d+1}$ and $p = 1+2/d$, 
	produces the further lower bound
	\begin{align*}
		(T+W)_{\psi}^{Q_B} 
		\ge \frac{1}{2}\ue(\gamma(2/\tilde\rho)) 
		\frac{2^{d+1}-1}{2^{(d+1)(1+2/d)}}
		\frac{(\int_{Q_B}\rho)^{1+2/d}}{|Q_B|^{2/d}},
	\end{align*}
	and finally, by means of \eqref{eqn:general_bound_constQ} 
	in Lemma \ref{lem:integral_bounds}
	with $f=\ue$
	and the converse of \eqref{eq:nonc_density_small_alpha}
	we obtain
	\begin{align}\label{eq:bound_cube_const_B_small_alpha}
		(T+W)_{\psi}^{Q_B} 
		\ge c_2 \int_{Q_B} \ue(\gamma(2/\rho)) \rho^{1+2/d},
	\end{align}
	with $c_2 :=2^{-1-(d+1)(1+2/d)}(2^{d+1}-1)/(1+\Lambda)$.

	It remains then to consider all A-cubes $Q_A$ with nearly constant density.
	Again, using \eqref{eqn:general_bound_constQ} with $f=\ue$ we have
	\begin{align*}
		\int_{Q_A} \ue(\gamma(2/\rho)) \rho^{1+2/d} 
		&\le (1+\Lambda) \ue(\gamma(2/\tilde{\rho})) 
		\frac{(\int_{Q_A} \rho)^{1+2/d}}{|Q_A|^{2/d}}\\
		&\le (1+\Lambda) 2^{1+2/d} \frac{\ue(\gamma(|Q_A|))}{|Q_A|^{2/d}},
	\end{align*}
	where in the second step we once more used concavity, and $\int_{Q_A}\rho<2$:
	\begin{align*}
		\ue\left(\gamma\left((2/\textstyle{\int_{Q_A}\rho})|Q_A|\right)\right) 
		&= \ue\left((2/\textstyle{\int_{Q_A}\rho})^{(2-\alpha)/d} \gamma(|Q_A|)\right)\\
		&\le \left(2/\textstyle{\int_{Q_A}\rho}\right)^{(2-\alpha)/d} \ue(\gamma(|Q_A|)).
	\end{align*}
	Hence, applying Lemma~\ref{lem:f_quot_estimate}
	for the collection of all such $Q_A$ associated to a cube $Q_B$ 
	at some level $k$ in the tree,
	with maximally $2^d-1$ such A-cubes at each level $j \le k$
	(cf. \cite{Lundholm-Solovej:anyon}), 
	\begin{align}
		\sum_{Q_A} \int_{Q_A} \ue(\gamma(2/\rho)) \rho^{1+2/d}
		&\le (2^d-1)(1+\Lambda) 2^{1+2/d}\sum_{j=1}^k 
		\frac{\ue(\gamma(|Q_j|))}{|Q_j|^{2/d}}\nonumber\\
		&\le c_3\frac{\ue(\gamma(|Q_B|))}{|Q_B|^{2/d}},
		\label{eq:bound_cube_const_A_small_alpha}
	\end{align}
	with $c_3 := 2^{1+2/d}(2^d-1)(1+\Lambda)/(1-2^{-\alpha})$. 
	This quantity is (after rescaling by $4c_3$) covered by half of the energy 
	$(T+W)^{Q_B}$ given by \eqref{eq:bound_cube_exclusion_small_alpha},
	leaving the other half for the bounds \eqref{eq:bound_cube_nonc_small_alpha} 
	or \eqref{eq:bound_cube_const_B_small_alpha} on $Q_B$.
	
	Thus, summing up the integrals \eqref{eq:bound_cube_nonc_small_alpha}, 
	\eqref{eq:bound_cube_const_B_small_alpha} and 
	\eqref{eq:bound_cube_const_A_small_alpha}, we have
	\begin{align*}
		(T+W)_{\psi}^{Q_0} \ge C_{\alpha,d,K} \int_{Q_0} \ue(\gamma(2/\rho)) \rho^{1+2/d}
	\end{align*} 
	with
	\begin{align*}
		C_{d,\alpha,K} := \min\left\{c_0,\frac{c_1}{2},\frac{c_2}{2},\frac{1}{4c_3}\right\}.
	\end{align*}

	\paragraph{Case $\alpha > 2$:}
	We proceed as in the previous case, although the condition $\alpha > 2$ 
	changes the roles of monotonicity and concavity accordingly, and 
	furthermore demands a stronger version of the uncertainty principle 
	(see Assumption~\ref{ass:local_uncert_alpha}).
	
	Consider first any A- or B-cube $Q$ on which the density is very
	non-constant in the sense that
	\begin{align}\label{eq:nonc_density_large_alpha}
		\int_{Q}\rho^{1+\alpha/d} > \Lambda\int_{Q}\tilde{\rho}^{1+\alpha/d} 
		= \Lambda\frac{(\int_{Q}\rho)^{1+\alpha/d}}{|Q|^{\alpha/d}},
	\end{align}
	where $\Lambda > 0$ is a sufficiently large constant to be chosen below.
	Using Assumption~\ref{ass:local_uncert_alpha} (with $\alpha>2$), 
	followed by H\"older's inequality and \eqref{eq:nonc_density_large_alpha},
	and finally $\int_Q \rho < 2^{d+1}$, we then have
	\begin{align}
		(T+W)_{\psi}^{Q}
		&\geq S_1\frac{\left(\int_Q\rho^{1+\alpha/d}\right)^{2/\alpha}}{(\int_Q\rho)^{2/\alpha+2/d-1}}
		-S_2\frac{\int_Q\rho}{|Q|^{2/d}}\nonumber\\
		&\geq \frac{S_1}{2}\frac{\int_{Q}\rho^{1+2/d}}{(\int_{Q}\rho)^{2/d}}
		+\left(\frac{S_1}{2}\Lambda^{2/\alpha}-S_2\right)\frac{\int_Q\rho}{|Q|^{2/d}}\nonumber\\
		&\geq c_4 \int_{Q} K \rho^{1+2/d}
		\geq c_4 \int_{Q}\ue(\gamma(2/\rho))\rho^{1+2/d},
		\label{eq:bound_cube_nonc_large_alpha}
	\end{align}
	with $c_4:= 2^{-3-2/d}S_1/K$ if $\Lambda := (2S_2/S_1)^{\alpha/2}$.

	On B-cubes $Q_B$ we have sufficiently many particles to
	use Assumption~\ref{ass:local_exclusion_alpha}:
	\begin{align}\label{eq:bound_cube_exclusion_2}
		(T+W)_{\psi}^{Q_B} 
		\ge \frac{1}{2}\ue(\gamma(|Q_B|))
		\frac{ \int_{Q_B}\rho -1 }{|Q_B|^{2/d}}.
	\end{align}
	By \eqref{eq:bound_cube_nonc_large_alpha}, we can restrict to the case of 
	nearly constant density (the converse of \eqref{eq:nonc_density_large_alpha}).
	The relation
	\begin{align*}
		\textstyle{\gamma(|Q_B|) = \gamma\left(\int_{Q_B}\rho/2 \cdot 2/\tilde{\rho}\right) 
		= \left(\int_{Q_B}\rho/2\right)^{-(\alpha-2)/d} \gamma(2/\tilde\rho)}
	\end{align*}
	and concavity \eqref{eq:e_concave_2}, 
	together with the inequality $x-1 \geq (2^{d+1}-1)/(2^{d+1})^p x^p$ 
	when $2 \leq x:= \int_{Q_B}\rho < 2^{d+1}$ and $p=1+\alpha/d$, 
	produces the further lower bound
	\begin{align*}
		(T+W)_{\psi}^{Q_B} 
		&\ge \frac{1}{2} \left( \frac{\int_{Q_B}\rho}{2} \right)^{-(\alpha-2)/d} 
		\ue(\gamma(2/\tilde\rho)) 
		\frac{2^{d+1}-1}{2^{(d+1)(1+\alpha/d)}}
		\frac{(\int_{Q_B}\rho)^{1+\alpha/d}}{|Q_B|^{2/d}} \\
		&= 2^{-(2+\alpha+d+2/d)}(2^{d+1}-1)\ue(\gamma(2/\tilde\rho)) 
		\frac{(\int_{Q_B}\rho)^{1+2/d}}{|Q_B|^{2/d}},
	\end{align*}
	and finally, by means of \eqref{eqn:general_bound2_constQ} 
	in Lemma \ref{lem:integral_bounds}
	with $f=\ue$, which requires
	the converse of \eqref{eq:nonc_density_large_alpha}:
	\begin{align}\label{eq:bound_cube_const_B_large_alpha}
		(T+W)_{\psi}^{Q_B} 
		\ge c_5 \int_{Q_B} \ue(\gamma(2/\rho)) \rho^{1+2/d},
	\end{align}
	where $c_5:= 2^{-(2+\alpha+d+2/d)}(2^{d+1}-1)/(1+\Lambda)$.

	It remains to consider all A-cubes with nearly constant density.
	Again, using \eqref{eqn:general_bound2_constQ} with $f=\ue$ 
	we have
	\begin{align*}
		\int_{Q_A} \ue(\gamma(2/\rho)) \rho^{1+2/d} 
		&\le (1+\Lambda) \ue(\gamma(2/\tilde{\rho})) 
		\frac{(\int_{Q_A} \rho)^{1+2/d}}{|Q_A|^{2/d}}\\
		&\le (1+\Lambda) 2^{1+2/d} \frac{\ue(\gamma(|Q_A|))}{|Q_A|^{2/d}},
	\end{align*}
	where we here used monotonicity
	\begin{align*}
		\textstyle{\ue\left(\gamma\left((2/\int_{Q_A}\rho)|Q_A|\right)\right) 
		= \ue\left((2/\int_{Q_A}\rho)^{-(\alpha-2)/d} \gamma(|Q_A|)\right)
		\le \ue(\gamma(|Q_A|))}.
	\end{align*}
	Hence, applying Lemma~\ref{lem:f_quot_estimate}
	for the collection of all such $Q_A$ associated to a cube $Q_B$ 
	at some level $k$ in the tree,
	\begin{align}
		\sum_{Q_A} \int_{Q_A} \ue(\gamma(2/\rho)) \rho^{1+2/d}
		&\le (2^d-1)(1+\Lambda) 2^{1+2/d}\sum_{j=1}^k 
		\frac{\ue(\gamma(|Q_j|))}{|Q_j|^{2/d}}\nonumber\\
		&\le c_6\frac{\ue(\gamma(|Q_B|))}{|Q_B|^{2/d}},
		\label{eq:bound_cube_const_A_large_alpha}
	\end{align}
	where $c_6 := 2^{1+2/d}(2^d-1)(1+\Lambda)/(1-2^{-2})$.
	This quantity is (after rescaling by $4c_6$) covered by half of the energy 
	$(T+W)^{Q_B}$ given by \eqref{eq:bound_cube_exclusion_2},
	leaving the other half for the bounds \eqref{eq:bound_cube_nonc_large_alpha} 
	or \eqref{eq:bound_cube_const_B_large_alpha} on $Q_B$.

	Thus, summing up the integrals \eqref{eq:bound_cube_nonc_large_alpha}, 
	\eqref{eq:bound_cube_const_B_large_alpha} and 
	\eqref{eq:bound_cube_const_A_large_alpha}, 
	we have
	\begin{align*}
		(T+W)_{\psi}^{Q_0} \ge C_{\alpha,d,K}\int_{Q_0} \ue(\gamma(2/\rho)) \rho^{1+2/d}
	\end{align*} 
	with
	\begin{align*}
		C_{d,\alpha,K} := \min\left\{c_0,\frac{c_4}{2}, \frac{c_5}{2},\frac{1}{4c_6}\right\}.
	\end{align*}
	
	Finally, we can let $Q_0$ tend to the whole of $\R^d$ using monotone convergence.
\end{proof}
\begin{remark}
	In the case $0<\alpha \leq 2$, given the explicit expression for $S_1$ and $S_2$ 
	in the remark after Assumption~\ref{ass:local_uncert_alpha}, and 
	taking $\eps = 1/2$, one can compute
	all constants in $C_{d,\alpha,K}$ explicitly, giving
	$c_0 =2^{-1-6/d}C_d'/K$, 
	$\Lambda = 2^{3+4/d}$,
	$c_1 = 2^{-4-6/d}C_d'/K$, 
	$c_2 = 2^{-4-d-2/d}(2^{d+1}-1)/(1+2^{3+4/d})$ and
	$c_3 = 2^{1+2/d}(2^d-1)(1+2^{3+4/d})/(1-2^{-\alpha})$.
\end{remark}
\begin{remark}
	When $K \to \infty$, we have $C_{d,\alpha,K} \leq  c_1/2 \to 0$ ($\alpha \leq 2$) and
	$C_{d,\alpha,K} \leq c_4/2 \to 0$ ($\alpha >2$).
\end{remark}

For Theorem~\ref{thm:L-T_global_alpha} we required that the function $e$ from 
Assumption~\ref{ass:local_exclusion_alpha} is replaced by a
bounded function $\underline{e}_K$.
As already remarked, this restriction can be dropped, but at the cost of only 
obtaining an estimate involving the local mean $\tilde{\rho}$ 
(as defined in the proof) of the density $\rho$. Namely, exclusion 
with unbounded strength cannot be matched by uncertainty to produce a uniform 
Lieb-Thirring type inequality (cf. \cite[Theorem~18 - 19]{Lundholm-Solovej:extended}).
The approach involving the local mean has the further advantage that 
Assumption~\ref{ass:local_uncert_alpha} is not required, 
i.e. it relies on local exclusion alone.

\section{Applications}\label{sec:applications}
Here we consider some important examples for which concrete bounds of the form \eqref{eq:two_particle_bound} can be obtained, hence resulting 
in corresponding Lieb-Thirring type bounds as corollaries 
of Theorem~\ref{thm:L-T_global_alpha}.

\subsection{The Lieb-Liniger Model} \label{sec:Lieb-Liniger}
The Lieb-Liniger model (see \cite{Lieb-Liniger:63}) describes $N$ bosons in one dimension 
with pairwise point interactions. The interaction Hamiltonian is given by
\begin{align*}
	\hat{T} + \hat{W} 
	= -\sum_{j=1}^N\frac{\partial^2}{\partial x_j^2} + 4\eta\sum_{1\leq j<k\leq N}\delta(x_j-x_k).
\end{align*}
For repulsive interactions we have the zero-range pair 
potential $W(x) = 4\eta \delta(x)$ with $\eta \geq 0$, and we obtain 
Assumption~\ref{ass:local_exclusion_alpha}  from \cite[Lemma~13]{Lundholm-Solovej:extended} 
with $e(\gamma) = 4\xi_{\LL}(\gamma)^2 = \underline{e}_{K=\pi^2}(\gamma)$ and
$\gamma(|Q|) := \eta|Q|$ (hence $\alpha = 1$, $\calpha = \eta$). The bounded 
concave function $\xi_{\LL}(\gamma)$ 
is defined as the smallest non-negative solution to the equation $\xi\tan\xi = \gamma$.
Furthermore, Assumption~\ref{ass:local_uncert_alpha} holds with the parameters 
given in the remark following the assumption.

\begin{theorem}[Lieb-Thirring inequality for Lieb-Liniger]\label{thm:L-T_LL}
	There exists a constant $C_{\LL} > 3 \cdot 10^{-5}$
	such that for any $\eta \geq 0$, any $N\geq 1$ and all normalized $\psi \in H^1(\R^{N})$
	the total energy satisfies the estimate
	\begin{align*}
		\langle\psi,(\hat{T}+4\eta\sum_{j<k}\delta(x_j-x_k))\psi\rangle 
		\geq C_{\LL} \int_{\R}\xi_{\LL}(2\eta/\rho(x))^2 \rho(x)^{3}\,dx,
	\end{align*}
	where $\rho$ is the density associated to $\psi$.
\end{theorem}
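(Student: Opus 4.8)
The plan is to read off Theorem~\ref{thm:L-T_LL} as the $d=1$ specialization of Theorem~\ref{thm:L-T_global_alpha} with the data recorded just above the statement; the substantive work is to verify the two Assumptions for $W(x)=4\eta\delta(x)$ and then to extract the numerical value of the universal constant. First I would establish Assumption~\ref{ass:local_exclusion_alpha}. Starting from \cite[Lemma~13]{Lundholm-Solovej:extended}, which provides the two-particle lower bound $e_2(|Q|;4\eta\delta)\ge e(\gamma(|Q|))$ with $e(\gamma)=4\xi_{\LL}(\gamma)^2$ and $\gamma(|Q|)=\eta|Q|$ (so $\alpha=1$, $\calpha=\eta$), one feeds this into Theorem~\ref{thm:local_ex_TW} to obtain Assumption~\ref{ass:local_exclusion_alpha} verbatim. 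Since $\xi_{\LL}$ is the least nonnegative root of $\xi\tan\xi=\gamma$ it is valued in $[0,\pi/2)$, so $e(\gamma)<\pi^2$ for all $\gamma$; we may therefore take $K=\pi^2$ and $\ue=e$, and no truncation is needed. One must also confirm that $e$, not merely $\xi_{\LL}$, is monotone increasing and concave with $e(0)=0$: differentiating $\xi\tan\xi=\gamma$ gives $\frac{d}{d\gamma}\xi_{\LL}^2=\frac{2\xi}{\tan\xi+\xi\sec^2\xi}$, and one checks that this is nonincreasing on $(0,\pi/2)$, which yields the concavity of $e$.

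For Assumption~\ref{ass:local_uncert_alpha} I would appeal to the remark following it: because $W=4\eta\delta\ge0$ one has $(T+W)_\psi^Q\ge T_\psi^Q$, so \cite[Lemma~14]{Lundholm-Solovej:extended} applies in the range $0<\alpha\le2$ with $S_1=C_1'\eps^5$, $S_2=C_1'\big(1+(\eps/(1-\eps))^5\big)$, $C_1'=\pi^2/60$, and $\eps\in(0,1)$ free. With both Assumptions in hand, Theorem~\ref{thm:L-T_global_alpha} yields $\langle\psi,(\hat T+\hat W)\psi\rangle\ge C_{1,1,\pi^2}\int_\R e(\gamma(2/\rho))\rho^3=4C_{1,1,\pi^2}\int_\R\xi_{\LL}(2\eta/\rho)^2\rho^3$, so $C_{\LL}=4C_{1,1,\pi^2}$.

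It then remains to evaluate $C_{1,1,\pi^2}=\min\{c_0,c_1/2,c_2/2,1/(4c_3)\}$ coming from the proof of Theorem~\ref{thm:L-T_global_alpha}. Substituting $d=1$, $\alpha=1$, $K=\pi^2$ and $\eps=1/2$ into the explicit formulas from the remark after that theorem gives $c_0=1/7680$, $\Lambda=128$, $c_1=1/61440$, $c_2=3/16512$ and $c_3=2064$, whose minimum is $c_1/2=1/122880$; hence $C_{\LL}=4/122880=1/30720>3\cdot10^{-5}$. Re-optimizing over $\eps$ would improve the constant but is unnecessary for the stated bound.

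The steps I expect to demand the most care are, first, the concavity of $e=4\xi_{\LL}^2$ (squaring a concave function is not automatic, so the implicit-function computation above is genuinely needed), and second, the technical point that $\delta$ is not an honest potential: one must either invoke the point-interaction version of the local exclusion principle from \cite{Lundholm-Solovej:extended} or approximate $\delta$ by nonnegative functions $W_n$ and verify that $e_2$ and all the local-energy estimates of Section~\ref{sec:basic_energy_est} pass to the limit. Everything else is a direct substitution into Theorem~\ref{thm:L-T_global_alpha}.
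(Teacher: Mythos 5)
Your proposal is correct and follows essentially the same route as the paper: verify Assumption~\ref{ass:local_exclusion_alpha} via \cite[Lemma~13]{Lundholm-Solovej:extended} (with $e=4\xi_{\LL}^2=\underline{e}_{K=\pi^2}$, $\alpha=1$, $\calpha=\eta$) and Assumption~\ref{ass:local_uncert_alpha} via the remark following it, then apply Theorem~\ref{thm:L-T_global_alpha} with $d=1$, $\alpha=1$, $K=\pi^2$, $\eps=1/2$. Your constants $c_0=1/7680$, $c_1=1/61440$, $c_2=3/16512$, $c_3=2064$ reproduce the paper's $C_{1,1,\pi^2}=c_1/2=1/122880$ and $C_{\LL}=4C_{1,1,\pi^2}=1/30720>3\cdot10^{-5}$ exactly.
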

\begin{proof}
	Given the above parameters, the statement of Theorem~\ref{thm:L-T_global_alpha} holds with 
	the constant $C_{1,1,\pi^2} = c_1/2 = 1/122880$ (see the remark following the theorem), hence 
	$C_{\LL} = 4C_{1,1,\pi^2} > 3\cdot 10^{-5}$.
\end{proof}
\begin{remark}
	The above result should be compared with 
	\cite[Theorem~17]{Lundholm-Solovej:extended}, taking the 
	difference in the conventions for defining the kinetic energy into account. 
	Observe that the methods 
	employed here eliminate the use of the Hardy-Littlewood maximal function.
\end{remark}

\subsection{Homogeneous Potentials}
In this subsection we consider the family of potentials of the form
\begin{align*}
	W_{\beta}(\bx) = \frac{W_{0}}{|\bx|^{\beta}}
\end{align*}
in arbitrary dimensions $d$, where $\beta >0$ and $W_{0}$ is a positive constant.

\subsubsection{Inverse-Square Interaction} \label{sec:inverse-square}
In the case $\beta=2$, the interaction takes the form $W_2(\bx) = W_{0}|\bx|^{-2}$ 
and the potential now scales in the same way as the kinetic energy. 
This implies that the (scale-normalized) two-particle energy $e_2(|Q|;W_2)$ 
is constant as a function of $|Q|$ (and non-zero, see 
Section~\ref{sssec:hom_elementary_est}),
\begin{align*}
	e_2(|Q|;W_2) = e_2(W_{0}) =: e,
\end{align*}
and thus $\gamma(|Q|)$ is constant (i.e. $\alpha=2$). Assumption~\ref{ass:local_exclusion_alpha} 
is
then a direct consequence of  Theorem~\ref{thm:local_ex_TW}, 
\begin{align*}
	(T+W_2)_{\psi}^{Q} \geq \frac{1}{2}\frac{e}{|Q|^{2/d}}\left(\int_{Q}\rho(\bx)\,d\bx - 1\right)_{+},
\end{align*}
and we obtain the following corollary from Theorem~\ref{thm:L-T_global_alpha}.
\begin{theorem}\label{thm:square_int_L-T}
	Given $W_2(\bx) = W_{0}|\bx|^{-2}$ with $W_0>0$,
	 there exists $e=e_2(W_0)>0$ and a positive constant $C_{d,2,e}$ such that for any $N 
	\geq 1$ and all
	normalized $\psi \in H^1(\R^{dN})$ one has the Lieb-Thirring inequality
	\begin{align*}
		\langle\psi, (\hat{T}+\hat{W}_{2})\psi\rangle 
		\geq C_{d,2,e}\, e\int_{\R^d}\rho(\bx)^{1+2/d}\,d\bx,
	\end{align*}
	where $\rho$ is the density associated to $\psi$.
\end{theorem}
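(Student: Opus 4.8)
The plan is to verify that the inverse-square potential $W_2(\bx) = W_0|\bx|^{-2}$ falls under the hypotheses of Theorem~\ref{thm:L-T_global_alpha} with $\alpha = 2$, and then simply read off the conclusion. The key observation, already noted in the setup preceding the statement, is the exact scale-invariance of $W_2$: under the rescaling $\bx \mapsto \lambda \bx$ the interaction term $W_2(\bx_1 - \bx_2)|\psi|^2$ transforms exactly like the gradient terms $|\nabla_j \psi|^2$, so the scale-normalized two-particle energy $e_2(|Q|; W_2)$ is independent of $|Q|$; call its common value $e = e_2(W_0)$. Positivity of $e$ needs to be justified — since $W_0 > 0$ the potential $W_2$ is not identically zero on $Q^2$, so the Neumann ground-state energy of $-\Delta_1 - \Delta_2 + W_2(\bx_1 - \bx_2)$ on $Q^2$ is strictly positive — and I would cite (or defer to) Section~\ref{sssec:hom_elementary_est} for the quantitative lower bound, exactly as the statement does.

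With $e_2(|Q|; W_2) \equiv e$ constant, the required two-particle bound \eqref{eq:two_particle_bound} holds trivially with $\alpha = 2$ and any choice of $\calpha > 0$: then $\gamma(|Q|) = \calpha |Q|^{(2-\alpha)/d} = \calpha$ is constant, and we may take $e(\gamma)$ to be the constant function equal to $e$ (bounded, monotone, concave, with the convention $e(0) = 0$ being irrelevant here since $\gamma$ never hits $0$ — more precisely, as the remark after Assumption~\ref{ass:local_exclusion_alpha} notes, for $\alpha = 2$ one chooses $e$ to be a positive constant). Then Assumption~\ref{ass:local_exclusion_alpha} is exactly the content of Theorem~\ref{thm:local_ex_TW} applied to $W_2 \ge 0$, as displayed just before the theorem statement. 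For Assumption~\ref{ass:local_uncert_alpha} with $\alpha = 2$, I would invoke the remark following that assumption: since $W_2 \ge 0$ one has $(T+W_2)_\psi^Q \ge T_\psi^Q$, and the Poincaré–Sobolev bound of \cite[Lemma~14]{Lundholm-Solovej:extended} gives the uncertainty inequality with the explicit constants $S_1, S_2$ recorded there.

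Both assumptions being verified with $\alpha = 2$ and the bounded function $\ue$ replaced by the constant $e$ itself (taking $K = e$, so $\ue \equiv e$), Theorem~\ref{thm:L-T_global_alpha} applies directly and yields
\begin{align*}
	\langle\psi, (\hat{T}+\hat{W}_2)\psi\rangle
	\geq C_{d,2,e} \int_{\R^d} \ue(\gamma(2/\rho(\bx)))\, \rho(\bx)^{1+2/d}\,d\bx
	= C_{d,2,e}\, e \int_{\R^d}\rho(\bx)^{1+2/d}\,d\bx,
\end{align*}
which is the claimed inequality. I do not anticipate any genuine obstacle: the only point requiring care is the strict positivity $e > 0$, which is a statement about the Neumann two-body problem on a cube with a positive (if singular) potential and is handled separately in Section~\ref{sssec:hom_elementary_est}; everything else is a direct specialization of the general machinery to the self-similar case $\alpha = 2$.
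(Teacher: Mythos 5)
Your proposal is correct and takes essentially the same route as the paper: scale invariance gives $e_2(|Q|;W_2)\equiv e>0$ (with positivity supplied, as in the paper, by the elementary bound of Section~\ref{sssec:hom_elementary_est}), Assumption~\ref{ass:local_exclusion_alpha} with $\alpha=2$ and $e$ a positive constant follows from Theorem~\ref{thm:local_ex_TW}, Assumption~\ref{ass:local_uncert_alpha} from the remark following it, and Theorem~\ref{thm:L-T_global_alpha} with $K=e$ (so $\ue\equiv e$) yields the stated inequality with constant $C_{d,2,e}$.
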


The above result should be compared to the situation where one considers free fermions instead 
of interacting  bosons. A fermionic wave function $\psi$ with particle number $n \geq 2$ can be 
shown to satisfy\footnote{Alternatively, one can in this regard view fermions as bosons with a local 
repulsive inverse-square interaction; see e.g. \cite[Theorem~2.8]{H-O2-Laptev-Tidblom:08}, 
and cf. the local approach to exclusion in \cite{Lundholm-Solovej:anyon}.}, 	
only using the antisymmetry of the wave function (cf. \cite[Lemma~5]{Dyson-Lenard:67}),
\begin{align}\label{eq:dl_fermion_bound}
	\int_{Q^n}\sum_{j=1}^n|\nabla_j\psi|^2 d\sx 
	\geq (n-1)\frac{\pi^2}{|Q|^{2/d}}\int_{Q^n}|\psi|^2\,d\sx.
\end{align}
This leads, using ideas as in the proof of Theorem~\ref{thm:local_ex_TW} 
(cf. \cite{Lundholm-Solovej:anyon, Lundholm-Solovej:extended}), 
to an estimate
\begin{align*}
	T_{\psi}^Q \geq \frac{\pi^2}{|Q|^{2/d}}\left(\int_{Q}\rho(\bx)\,d\bx - 1\right)_{+},
\end{align*}
and hence to a Lieb-Thirring inequality for fermions. An inequality similar to 
\eqref{eq:dl_fermion_bound} was used by Dyson and Lenard to prove the stability 
of fermionic matter 
(see \cite{Dyson-Lenard:67, Dyson-Lenard:68,Dyson:68, Lenard:73, Fefferman:83}), 
however without the more generally applicable tool of Lieb-Thirring inequalities.
Note that, applying Theorem~\ref{thm:square_int_L-T} and following the usual route to stability 
(see e.g. \cite{Lieb-Seiringer:10}), it follows that also a system of bosons with 
Coulomb interactions and inverse-square repulsive cores is 
thermodynamically stable.

\subsubsection{Elementary Estimate for Homogeneous Potentials}\label{sssec:hom_elementary_est}
We have the following very elementary bound for the two-particle interaction,
\begin{align}\label{eq:w_elementary_bound_1}
	e_2(|Q|;W_{\beta}) 
	\geq |Q|^{2/d}\inf_{(\bx_1,\bx_2)\in Q^2}W_{\beta}(\bx_1-\bx_2) 
	= d^{-\beta/2}W_{0}|Q|^{(2-
	\beta)/d}.
\end{align}
Defining $\gamma(|Q|) := W_{0}|Q|^{(2-\beta)/d}$ (i.e. $\alpha = \beta$, 
$\calpha=W_{0}$), the above estimate translates to
\begin{align}\label{eq:w_elemetary_bound_2}
	e_2(|Q|;W_{\beta}) \geq d^{-\beta/2}\gamma =: e(\gamma),
\end{align}
hence Assumption~\ref{ass:local_exclusion_alpha} is satisfied
by Theorem~\ref{thm:local_ex_TW}.
Observe that for $\beta \neq 2$, the function $e(\gamma)$ is unbounded. If we however
decide to bound $e(\gamma)$ from below by $\underline{e}_{K=\pi^2}(\gamma) 
= \min\{d^{-\beta/2}\gamma,\pi^2\}$, we are in the position to obtain global 
estimates from Theorem~\ref{thm:L-T_global_alpha},
whenever Assumption~\ref{ass:local_uncert_alpha} holds (see the Appendix).

\begin{theorem}\label{thm:L-T_global_elementary}
	Let $W_{\beta}(\bx) = W_{0}|\bx|^{-\beta}$ with $W_0 >0$. For $d=1,2$ we
	allow
	$0<\beta<\infty$, whereas for $d \geq 3$ we require $0<\beta\leq 2d/(d-2)$. Then
	there exists a constant $C_{d,\beta,\pi^2} > 0$ such that for any
	$N \geq 1$ and all normalized $\psi \in H^1(\R^{dN})$ one has
	\begin{multline*}
		\langle\psi, (\hat{T}+\hat{W}_{\beta})\psi\rangle\\
		\geq C_{d,\beta,\pi^2}\int_{\R^d}\min\left\{d^{-\beta/2}2^{(2-\beta)/d}W_{0}
		\rho(\bx)^{1+\beta/d},\pi^2\rho(\bx)^{1+2/d}\right\}\,d\bx,
	\end{multline*}
	where $\rho$ is the density associated to $\psi$.
\end{theorem}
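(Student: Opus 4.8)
The plan is to apply Theorem~\ref{thm:L-T_global_alpha} directly, having already set up all the required hypotheses in the preceding discussion. The key point is that the elementary bound \eqref{eq:w_elemetary_bound_2} gives $e_2(|Q|;W_\beta) \ge e(\gamma(|Q|))$ with $e(\gamma) = d^{-\beta/2}\gamma$, $\gamma(|Q|) = W_0 |Q|^{(2-\beta)/d}$, so that $\alpha = \beta$ and $\calpha = W_0$. Since $e$ is linear (hence concave, monotone increasing, vanishing at $0$), Theorem~\ref{thm:local_ex_TW} immediately yields Assumption~\ref{ass:local_exclusion_alpha} after truncation: replacing $e$ by $\ue[\pi^2](\gamma) = \min\{d^{-\beta/2}\gamma, \pi^2\}$ keeps concavity and monotonicity, and the local exclusion bound only improves under this truncation (the constant $1/2$ is retained, or could even be dropped by the remark following Theorem~\ref{thm:local_ex_TW} since $e_2(|Q|;\lambda W_\beta)$ is linear in $\lambda$, but linearity of $e$ in $\gamma$ is what matters for Assumption~\ref{ass:local_exclusion_alpha}).

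Next I would invoke Assumption~\ref{ass:local_uncert_alpha}, which must be checked for the relevant range of $\beta$. For $0 < \beta \le 2$ this is the standard Poincar\'e--Sobolev estimate of \cite[Lemma~14]{Lundholm-Solovej:extended} applied to $(T+W_\beta)^Q_\psi \ge T^Q_\psi$ (valid since $W_\beta \ge 0$), with the explicit $S_1, S_2$ recorded in the remark after Assumption~\ref{ass:local_uncert_alpha}. For $\beta > 2$ one needs the stronger form of the uncertainty principle; here the dimensional restriction enters: for $d=1,2$ there is no Sobolev obstruction and any $\beta < \infty$ is allowed, whereas for $d \ge 3$ one is limited to $\beta \le 2d/(d-2)$, which is precisely the range in which $\rho^{1+\beta/d}$ is controlled by the kinetic energy via the Sobolev embedding $H^1 \hookrightarrow L^{2d/(d-2)}$. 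These cases are handled by Propositions~\ref{prop:uncertainty_3geqd}--\ref{prop:uncertainty_1d} in the Appendix, so I would simply cite those to obtain the constants $S_1, S_2$.

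With both assumptions in hand (with $e$ replaced by $\ue[\pi^2]$), Theorem~\ref{thm:L-T_global_alpha} gives a constant $C_{d,\beta,\pi^2} > 0$ such that
\begin{align*}
	\langle\psi,(\hat{T}+\hat{W}_\beta)\psi\rangle
	\ge C_{d,\beta,\pi^2} \int_{\R^d} \ue[\pi^2]\!\big(\gamma(2/\rho(\bx))\big)\, \rho(\bx)^{1+2/d}\,d\bx.
\end{align*}
The final step is to unwind the composition. With $\gamma(|Q|) = W_0 |Q|^{(2-\beta)/d}$ and the substitution $|Q| = 2/\rho$, one has $\gamma(2/\rho) = W_0 (2/\rho)^{(2-\beta)/d} = 2^{(2-\beta)/d} W_0 \rho^{(\beta-2)/d}$, so that
\begin{align*}
	\ue[\pi^2]\!\big(\gamma(2/\rho)\big)\,\rho^{1+2/d}
	= \min\Big\{ d^{-\beta/2} 2^{(2-\beta)/d} W_0\, \rho^{(\beta-2)/d},\ \pi^2 \Big\}\, \rho^{1+2/d}
	= \min\Big\{ d^{-\beta/2} 2^{(2-\beta)/d} W_0\, \rho^{1+\beta/d},\ \pi^2 \rho^{1+2/d} \Big\},
\end{align*}
which is exactly the integrand claimed. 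This pushes the minimum through the multiplication by $\rho^{1+2/d} \ge 0$, a trivial algebraic step. The only genuine work is the verification of Assumption~\ref{ass:local_uncert_alpha} in the case $\beta > 2$, and that is precisely where the dimensional constraint $\beta \le 2d/(d-2)$ for $d \ge 3$ is forced; since that verification is deferred to the Appendix, the proof of the theorem itself is a short assembly. I expect no real obstacle beyond keeping track of which exponent plays the role of $\alpha$ and confirming that the truncation constant $K = \pi^2$ is consistent with the cited uncertainty estimates.
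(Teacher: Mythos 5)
Your proposal is correct and follows essentially the same route as the paper: the elementary pointwise bound gives $e(\gamma)=d^{-\beta/2}\gamma$ with $\alpha=\beta$, $\calpha=W_0$, which after truncation to $\underline{e}_{K=\pi^2}$ feeds into Theorem~\ref{thm:L-T_global_alpha}, with Assumption~\ref{ass:local_uncert_alpha} supplied by the remark after it (for $\beta\le 2$) and Propositions~\ref{prop:uncertainty_3geqd}--\ref{prop:uncertainty_1d} (for $\beta>2$, forcing $\beta\le 2d/(d-2)$ when $d\ge 3$). Your unwinding of $\underline{e}_{K=\pi^2}(\gamma(2/\rho))\rho^{1+2/d}$ into the stated minimum is exactly the paper's final step.
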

\begin{remark}
	The special case $\beta = 2$ corresponds to the class of inverse-square interactions 
	and was discussed above. 
	In this case $\gamma(|Q|) = W_{0}$ and the elementary bound gives 
	$e(\gamma) = d^{-1}W_{0}$, again a constant. 
\end{remark}
\begin{remark}
	Note that if $W_{0} = a^{\beta-2}$, where $a > 0$ is a constant, 
	then $W_\beta(\bx)$ converges pointwise to
	a hard-sphere potential $W_a^{\hs}(\bx)$ in the limit $\beta \to \infty$.
\end{remark}

\subsection{Hard-Sphere Interaction}
The hard-sphere interaction of range $a$ (in $d=3$) corresponds to the potential
\begin{align*}
	W_{a}^{\hs}(\bx)= \left\{ \begin{array}{ll}
 					+\infty, &\quad |\bx| \leq a,\\
					0, &\quad |\bx|>a.
					\end{array} \right.
\end{align*}
Formally, $W_{a}^{\hs}$ is realized by introducing appropriate Dirichlet boundary conditions. 
The many-particle wave function $\psi(\bx_1,\dots, \bx_N)$ is required to vanish as soon as 
$|\bx_i-\bx_j| \le a$, $1\leq i < j\leq N$. We denote this subspace of $H^1(\R^{3N})$ by 
$H_a^1(\R^{3N})$. 

The two-particle energy $e_2(|Q|;W_{a}^{\hs})$ then becomes
\begin{align*}
	e_2(|Q|;W_a^{\hs}) = |Q|^{2/3}\inf_{\substack{\psi \in H_a^1(\R^{6})\\ \int_{Q^2}|\psi|^2=1}}\,\int_{Q^2}\left[|
	\nabla_1\psi|^2 + |\nabla_2\psi|^2\right]\,d\bx_1d\bx_2.
\end{align*}
We recall that the scattering length of $W_{a}^{\hs}$ is given by the range of the hard sphere, namely $a$.

\begin{proposition} \label{prop:hard_sphere_e_est}
	For $W_{a}^{\hs}$, the two-particle energy satisfies the estimate
	\begin{align*}
		e_2(|Q|;W_{a}^{\hs}) \geq \frac{2}{\sqrt{3}}\frac{a}{|Q|^{1/3}}.
	\end{align*}
\end{proposition}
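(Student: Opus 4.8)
The plan is to reduce the two-particle problem for the hard-sphere potential to a one-particle problem in the relative coordinate, and then bound that from below by an effective homogeneous potential so that the elementary estimate \eqref{eq:w_elementary_bound_1} can be applied. First I would introduce the centre-of-mass and relative coordinates $\bz = (\bx_1+\bx_2)/2$ and $\br = \bx_1 - \bx_2$, so that $|\nabla_1\psi|^2 + |\nabla_2\psi|^2 = \tfrac12|\nabla_\bz\psi|^2 + 2|\nabla_\br\psi|^2$. Since the centre-of-mass kinetic energy is nonnegative, it can be dropped, leaving a lower bound in terms of $\int 2|\nabla_\br\psi|^2$ over the appropriate region, with the Dirichlet constraint $\psi = 0$ whenever $|\br|\le a$. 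The subtlety is that the domain in the $(\bz,\br)$ variables corresponding to $Q^2$ is not a product set; the standard way around this is to note that for each fixed $\bz$ the slice in $\br$ is contained in a cube of side $2|Q|^{1/d}$ centred at the origin (or one simply works on all of $\R^d$ in $\br$, which only lowers the energy), so we end up bounding the Dirichlet Laplacian in $\br$ on a ball or cube with a hole of radius $a$ removed.

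The key estimate will then be a lower bound on $\inf \int |\nabla_\br u|^2 / \int |u|^2$ where $u$ vanishes for $|\br|\le a$ and is supported in a set of diameter comparable to $|Q|^{1/d}$. Rather than computing this Dirichlet eigenvalue exactly (which is a Bessel-function transcendental problem even for a spherical shell), the trick is to replace the hard core by a soft potential: for any $\lambda>0$ one has the pointwise inequality, valid for $|\br|\le$ (the relevant scale), that $W_a^{\hs}(\br) \ge \lambda \cdot \mathbf{1}_{|\br|\le a}$ is too weak; instead one compares with a homogeneous potential $c\,|\br|^{-\beta}$ or, more simply, uses that on the region $|\br| \le \sqrt{d}\,|Q|^{1/d}$ the constraint function $|\br|^{-2}$ is bounded below. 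Actually the cleaner route, and the one I would take, is: the hard-sphere energy dominates the energy of two particles with the scaled inverse-square-type comparison, but since we only need a bound of order $a/|Q|^{1/3}$, it suffices to use that the hard core of radius $a$ inside a cube of volume $|Q|$ forces, by a Poincaré/Hardy-type inequality on the punctured domain, a kinetic energy at least of order $a/|Q|^{4/3}$ per unit mass (in $d=3$); multiplying by the scale factor $|Q|^{2/3}$ gives $e_2 \gtrsim a/|Q|^{1/3}$. The explicit constant $2/\sqrt3$ then comes from optimizing the comparison, presumably via the one-dimensional reduction: restricting attention to the radial direction and using a sharp one-dimensional inequality for functions vanishing at $r=a$.

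Concretely, I expect the argument to go: (i) pass to relative coordinates and discard the centre-of-mass term, gaining a factor $2$ in front of $\int|\nabla_\br\psi|^2$; (ii) for fixed $\bz$, the $\br$-section has $|\br| \le \sqrt3\,|Q|^{1/3}$, so after discarding the region $|\br|> a$ contributions are controlled by the behaviour near the hole; (iii) apply the elementary bound of the form $\int_{|\br|\le R}|\nabla u|^2 \ge (\text{const}/(R a))\int |u|^2$ for $u$ vanishing on $|\br|\le a$ — this is where the factor $\sqrt3$ (from $R = \sqrt3|Q|^{1/3}$, the diameter of the cube $Q$) and the factor from the relative-mass $2$ combine. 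The main obstacle is step (iii): proving the correct geometric/Hardy inequality on the cube-minus-ball domain with an explicit and reasonably sharp constant, since the naive Hardy inequality gives the wrong power of $a$ unless one is careful to use the finite size of the domain. I anticipate the authors handle this by a slicing argument reducing to a one-dimensional problem on an interval $[a, \sqrt3|Q|^{1/3}]$ with a Dirichlet condition at the left endpoint, for which the optimal constant is elementary, and the $2/\sqrt3 = 2/\sqrt{d}$ with $d=3$ structure strongly suggests exactly such a reduction together with the $\inf_{Q^2}|\bx_1-\bx_2|$-type comparison appearing in \eqref{eq:w_elementary_bound_1}.
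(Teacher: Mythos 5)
There is a genuine gap: the entire quantitative content of this proposition is the capacity-type estimate that a Dirichlet core of radius $a$ inside a region of diameter $\sim |Q|^{1/3}$ forces a kinetic energy per unit $L^2$-norm of order $a/|Q|$, i.e.\ \emph{linear} in $a$, and your sketch neither proves this nor states it correctly. The paper gets it from Dyson's lemma \cite{Dyson:57}: for $\psi$ vanishing on $|\bx|\le a$ in a star-shaped $\Omega$, $\int_\Omega|\nabla\psi|^2\ge \frac{3a}{I}\int_\Omega G(|\bx|^3)|\psi|^2$ with $\int_0^\infty G=I$; choosing $G=\chi_{[0,3^{3/2}|Q|]}$ and applying this in the variable $\bx_1$ on $\Omega=Q-\bx_2$ for each fixed $\bx_2$ (and symmetrically in $\bx_2$) gives $\frac{1}{\sqrt3}\frac{a}{|Q|}\int|\psi|^2$ from each gradient term, hence the factor $2/\sqrt3$ — no relative coordinates are used, and the factor $2$ comes from the two gradient terms, not from the reduced mass. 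Your step (iii) as written is false: for $u$ vanishing on $|\br|\le a$ and supported in $|\br|\le R$, the infimum of $\int|\nabla u|^2/\int|u|^2$ is of order $a/R^3$ (test with a cut-off of $1-a/|\br|$), which for $a\ll R$ is far below your claimed $\mathrm{const}/(Ra)$; likewise ``kinetic energy of order $a/|Q|^{4/3}$'' is dimensionally inconsistent and, even taken at face value, times $|Q|^{2/3}$ gives $a/|Q|^{2/3}$, not the claimed $a/|Q|^{1/3}$.

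Moreover, the routes you suggest for closing the gap do not produce the linear-in-$a$ behaviour. A one-dimensional slicing to $[a,\sqrt3|Q|^{1/3}]$ with Dirichlet at the left endpoint gives a ground-state energy $\sim 1/L^2$ \emph{independent} of $a$; the factor $a$ in 3D comes from the volume element $r^2\,dr$ (equivalently the substitution $u=r\psi$ and the harmonic profile $1-a/r$), which is exactly what Dyson's lemma encodes and what a naive slicing or generic Hardy/Poincar\'e inequality on the punctured domain misses. The elementary bound \eqref{eq:w_elementary_bound_1} is also of no help here, since $\inf_{Q^2}W_a^{\hs}=0$ once the cube is larger than the core. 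Finally, even with a correct Dyson-type estimate in hand, the centre-of-mass/relative-coordinate reduction is an unnecessary detour: the $\br$-sections of $Q^2$ have diameter up to $2\sqrt3|Q|^{1/3}$, so that route would degrade the constant relative to the paper's direct, symmetric application of the lemma in each particle variable.
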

\begin{proof}
	The key to proving the proposition is a lemma from \cite{Dyson:57}, 
	which we restate here for convenience.
	Let $G(t) \geq 0$ be any function defined for $0<t<\infty$ with
	\begin{align*}
		I := \int_0^{\infty}G(t)\,dt < \infty.
	\end{align*}
	\begin{lemma}[Dyson's Lemma]
		Let $\psi(\bx)$ be any function of the space point $\bx$, 
		defined in a region $\Omega \subset \R^3$. 
		Suppose that $\Omega$ is ``star-shaped" with respect to $0$ and $\psi(\bx) = 0$ for $|\bx|\le a$, 
		then 
		\begin{align*}
			\int_{\Omega}|\nabla\psi(\bx)|^2\,d\bx 
			\geq \frac{3a}{I}\int_{\Omega}G(|\bx|^3)|\psi(\bx)|^2\,d\bx.
		\end{align*}
	\end{lemma}
	
	To use the above lemma, we first define
	\begin{align*}
		G(t) := \left\{ \begin{array}{ll}
 					1, &\quad \textrm{if }t \leq 3^{3/2}|Q|,\\
					0, &\quad \textrm{otherwise},
					\end{array} \right.
	\end{align*}
	with $I = 3^{3/2}|Q|$. If we fix $\bx_2 \in Q$ and define 
	$\Omega := Q - \bx_2$, then Dyson's lemma gives
	\begin{align*}
		\int_Q |\nabla_1\psi|^2\,d\bx_1 &\geq \frac{3a}{I}\int_Q G(|\bx_1-\bx_2|^3) |\psi|^2\,d\bx_1\\
		&= \frac{3}{3^{3/2}}\frac{a}{|Q|}\int_Q |\psi|^2\,d\bx_1.
	\end{align*}
	The above can be repeated for the term $\int_Q|\nabla_2\psi|^2\,d\bx_2$ to yield a similar result. 
	Adding both terms and integrating over the respective variables, one finds
	\begin{align*}
		\int_{Q^2}\left[|\nabla_1\psi|^2 + |\nabla_2\psi|^2\right]\,d\bx_1d\bx_2 
		\geq \frac{2}{\sqrt{3}}\frac{a}{|Q|}\int_{Q^2}|\psi|^2\,d\bx_1d\bx_2,
	\end{align*}
	from which the Proposition follows immediately.
\end{proof}
Hence, if we set $\gamma(|Q|) := a|Q|^{-1/3}$ (i.e. $\alpha = 3$, $\calpha = a$), 
then
\begin{align*}
	e_2(|Q|;W_{a}^{\hs}) \geq \frac{2}{\sqrt{3}}\gamma =: e(\gamma),
\end{align*}
and Assumption~\ref{ass:local_exclusion_alpha} then follows directly from 
Theorem~\ref{thm:local_ex_TW}.
We furthermore bound $e(\gamma)$ from below by the bounded function 
$\underline{e}_{K=\pi^2}(\gamma)$
in order to apply Theorem~\ref{thm:L-T_global_alpha}.
Also note that Assumption~\ref{ass:local_uncert_alpha} holds for this value of $\alpha$ by virtue of 
Proposition~\ref{prop:uncertainty_3geqd} in the Appendix.

\begin{theorem}\label{thm:L-T_HS}
	Let $W_{a}^{\hs}$ denote the hard-sphere interaction of range $a>0$. 
	There exists a positive constant $C_{\hs}$ independent of $a$ 
	such that for any $N \geq 1$
	and all normalized $\psi \in H_a^1(\R^{3N})$ the total energy satisfies the estimate
	\begin{align*}
		\langle\psi,(\hat{T}+\hat{W}_{a}^{\hs})\psi\rangle 
		\geq C_{\hs}\int_{\R^3} \min\left\{\frac{2^{2/3}}{\sqrt{3}}
		a\rho(\bx)^2,\pi^2 \rho(\bx)^{5/3}\right\}\,d\bx,
	\end{align*}
	where $\rho$ is the density of $\psi$.
\end{theorem}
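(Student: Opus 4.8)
The plan is to obtain Theorem~\ref{thm:L-T_HS} as a direct specialization of Theorem~\ref{thm:L-T_global_alpha}, with the parameters $d=3$, $\alpha=3$, $\calpha=a$, and the exclusion function $e(\gamma)=\tfrac{2}{\sqrt3}\gamma$ truncated at level $K=\pi^2$ to $\underline{e}_{\pi^2}(\gamma)=\min\{e(\gamma),\pi^2\}$. All the genuinely new input is already available: the two-particle estimate of Proposition~\ref{prop:hard_sphere_e_est} and the uncertainty estimate of Proposition~\ref{prop:uncertainty_3geqd} in the Appendix. So the proof amounts to checking the two hypotheses of Theorem~\ref{thm:L-T_global_alpha} and then unwinding the resulting integrand.

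First I would confirm Assumption~\ref{ass:local_exclusion_alpha}. Proposition~\ref{prop:hard_sphere_e_est} gives $e_2(|Q|;W_a^{\hs})\ge\tfrac{2}{\sqrt3}\,a|Q|^{-1/3}=\tfrac{2}{\sqrt3}\gamma(|Q|)$ with $\gamma(|Q|)=\calpha|Q|^{(2-\alpha)/d}$, $\alpha=3$, $\calpha=a$. The map $\gamma\mapsto\tfrac{2}{\sqrt3}\gamma$ is monotone increasing, concave, and vanishes at $0$; truncating it at $\pi^2$ preserves all three properties, so $\underline{e}_{\pi^2}$ is admissible. Inserting this lower bound for $e_2$ into the local exclusion principle Theorem~\ref{thm:local_ex_TW} (and using monotonicity of $e_2$ in the potential exactly as in its proof) yields, for every finite cube $Q$, every $N\ge1$ and every normalized $\psi\in H_a^1(\R^{3N})$,
\[
(T+W_a^{\hs})_{\psi}^{Q}\ \ge\ \tfrac12\,\frac{\underline{e}_{\pi^2}(\gamma(|Q|))}{|Q|^{2/3}}\Big(\int_Q\rho-1\Big)_+ ,
\]
which is Assumption~\ref{ass:local_exclusion_alpha}. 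Here the hard-core constraint $\psi\in H_a^1$ is used only to make the interaction genuinely repulsive, so the local-energy machinery of Section~\ref{sec:basic_energy_est} applies without change.

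Next I would confirm Assumption~\ref{ass:local_uncert_alpha}, which for $\alpha=3>2$ is the second alternative there. Since $(T+W_a^{\hs})_{\psi}^{Q}\ge T_{\psi}^{Q}$ for $\psi\in H_a^1$, it suffices to have a local Poincar\'e-Sobolev inequality for the bosonic kinetic energy producing the homogeneity $(\int_Q\rho^{1+\alpha/d})^{2/\alpha}/(\int_Q\rho)^{2/\alpha+2/d-1}$, uniform in $Q$ and $N$; this is precisely Proposition~\ref{prop:uncertainty_3geqd}, which provides constants $S_1,S_2>0$ depending only on the dimension. With both assumptions verified, Theorem~\ref{thm:L-T_global_alpha} gives
\[
\langle\psi,(\hat{T}+\hat{W}_a^{\hs})\psi\rangle\ \ge\ C_{3,3,\pi^2}\int_{\R^3}\underline{e}_{\pi^2}\!\big(\gamma(2/\rho(\bx))\big)\,\rho(\bx)^{5/3}\,d\bx .
\]
Finally, unwinding: $\gamma(2/\rho)=a(2/\rho)^{(2-\alpha)/d}=a(\rho/2)^{1/3}$, so
\[
\underline{e}_{\pi^2}\!\big(\gamma(2/\rho)\big)\,\rho^{5/3}=\min\Big\{\tfrac{2}{\sqrt3}a(\rho/2)^{1/3}\rho^{5/3},\ \pi^2\rho^{5/3}\Big\}=\min\Big\{\tfrac{2^{2/3}}{\sqrt3}\,a\rho^{2},\ \pi^2\rho^{5/3}\Big\},
\]
which is the claimed bound with $C_{\hs}:=C_{3,3,\pi^2}$; since $S_1,S_2$ and $K=\pi^2$ do not depend on $a$, neither does $C_{\hs}$.

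Within the proof of Theorem~\ref{thm:L-T_HS} itself there is essentially no obstacle --- it is an assembly. The real work has been displaced into the two auxiliary ingredients: Proposition~\ref{prop:hard_sphere_e_est}, where Dyson's Lemma converts the Dirichlet hard-core constraint into a genuine lower bound on the Neumann two-particle ground state energy on a cube (the cube being convex is star-shaped about any interior point, so the hypothesis of Dyson's Lemma is met after the translation $\Omega:=Q-\bx_2$), and Proposition~\ref{prop:uncertainty_3geqd}, the $\alpha>2$ uncertainty principle, which is the step I expect to require the most care: producing the nonstandard exponent combination above, uniformly in the cube and in $N$, rather than the familiar Lieb-Thirring exponent $\int_Q\rho^{1+2/d}/(\int_Q\rho)^{2/d}$.
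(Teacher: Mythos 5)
Your proposal is correct and is essentially identical to the paper's own proof: the paper likewise verifies Assumption~\ref{ass:local_exclusion_alpha} via Proposition~\ref{prop:hard_sphere_e_est} and Theorem~\ref{thm:local_ex_TW} with $\gamma(|Q|)=a|Q|^{-1/3}$ and $\underline{e}_{K=\pi^2}$, verifies Assumption~\ref{ass:local_uncert_alpha} via Proposition~\ref{prop:uncertainty_3geqd}, and then applies Theorem~\ref{thm:L-T_global_alpha} with $d=3$, $\alpha=3$, $K=\pi^2$, unwinding $\gamma(2/\rho)=a(\rho/2)^{1/3}$ exactly as you do. The only difference is cosmetic: the paper additionally extracts the explicit numerical value $C_{\hs}=C_{3,3,\pi^2}=c_4/2>4.5\cdot10^{-6}$ from the remark following Proposition~\ref{prop:uncertainty_3geqd}, which the theorem statement does not require.
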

\begin{proof}
	In the above setting, we can apply Theorem~\ref{thm:L-T_global_alpha} with 
	$d=3$, $\alpha = 3$ and $K = \pi^2$. The statement of the theorem then holds with 
	$C_{\hs} :=  C_{3,3,\pi^2}$, which involves the constants
	from Proposition~\ref{prop:uncertainty_3geqd}. Using the explicit lower bound for
	$S_3$ given in the remark following this proposition, we compute that
	$C_{3,3,\pi^2} = c_4/2 > 4.5\cdot 10^{-6}$.
\end{proof}
\begin{remark}
	For small $\rho$, the appearance of the quadratic expression in 
	$\rho$ is natural and should be compared with 
	the exact expression obtained in \cite{Lieb-Yngvason:98} for the energy per 
	unit volume $e_0$ for the dilute 
	Bose gas in three dimensions, namely
	\begin{align*}
		\lim_{a^3\rho \to 0}\frac{e_0}{4\pi a\rho^2} = 1.
	\end{align*}
\end{remark}

\subsection{Estimates for Homogeneous Potentials in Terms of the Scattering Length in 3D}
In this subsection, we will prove a more refined estimate than 
the elementary bound (Theorem~\ref{thm:L-T_global_elementary})
for the potential 
$W_{\beta}(\bx) = W_{0}|\bx|^{-\beta}$ in $d=3$ in terms of 
its scattering length. Such an estimate is possible provided $\beta > 3$, 
for which the scattering length is finite.

From \eqref{eq:scattering_l_homogeneous} in the Appendix we obtain the scattering length for $W_\beta$:
\begin{align}\label{def:hom_scat_l}
	a_{\beta} = \Lambda_{\beta} \left(\frac{W_{0}}{2}\right)^{1/(\beta-2)}, 
	\quad \text{where }\Lambda_{\beta} := \frac{\Gamma\left(\frac{\beta-3}
	{\beta-2}\right)}{\Gamma\left(\frac{\beta-1}{\beta-2}\right)}\left(\frac{2}{\beta-2}\right)^{2/(\beta-2)}.
\end{align}
\begin{proposition}\label{prop:hom_pot_e_est}
	For $W_{\beta}$ with $\beta > 3$, the two-particle energy satisfies
	\begin{align*}
		e_2(|Q|;W_{\beta}) \geq \frac{\zeta}{\Lambda_{\beta}}\frac{a_{\beta}}{|Q|^{1/3}},
	\end{align*}
	where $\zeta :=(1-\tanh1)/\sqrt{3} > 0.137$.
\end{proposition}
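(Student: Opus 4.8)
The plan is to mimic the strategy used for the hard-sphere case (Proposition~\ref{prop:hard_sphere_e_est}), replacing Dyson's Lemma by a version adapted to the soft but rapidly decaying potential $W_\beta(\bx)=W_0|\bx|^{-\beta}$. The central task is to establish, for a single-particle Schr\"odinger quadratic form on a star-shaped region $\Omega\subset\R^3$ with respect to the origin, an inequality of the shape
\begin{align*}
	\int_\Omega\left(|\nabla\psi|^2 + \tfrac12 W_\beta(\bx)|\psi|^2\right)d\bx
	\geq \frac{c\,a_\beta}{|Q|}\int_\Omega |\psi|^2\,d\bx
\end{align*}
for an explicit constant $c$ related to $\zeta$; here the factor $\tfrac12$ appears because, just as in Lemma~\ref{lem:estimate_by_e_2}, each pair interaction $W_\beta(\bx_1-\bx_2)$ gets split symmetrically between the two gradient terms. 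Once such a one-body inequality is in hand, I would fix $\bx_2\in Q$, set $\Omega := Q-\bx_2$ (which is star-shaped with respect to $0$ since $Q$ is convex and contains $\bx_2$), apply the one-body bound to the $\bx_1$-integral, then repeat the argument with the roles of $\bx_1,\bx_2$ interchanged, add, and integrate over $Q^2$; this yields $e_2(|Q|;W_\beta)\geq |Q|^{2/3}\cdot \tfrac{2c\,a_\beta}{|Q|} = 2c\,a_\beta|Q|^{-1/3}$, which is the claimed form after identifying $2c = \zeta/\Lambda_\beta$.

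The heart of the matter is the one-body inequality, and I expect to prove it by a radial scattering-type argument. Passing to polar coordinates and discarding the angular part of the gradient, it suffices to control, along each ray, the one-dimensional functional $\int_0^{R}(|u'(r)|^2 + \tfrac12 W_0 r^{-\beta}|u(r)|^2)\,dr$ from below by $(\text{const})\int_0^R |u(r)|^2 r^{\,?}\,dr$-type quantities, where $R=R(\omega)$ is the boundary distance in direction $\omega$. The natural comparison function is the zero-energy scattering solution $f_\beta$ of $-f'' + \tfrac12 W_0 r^{-\beta} f = 0$ on $(0,\infty)$ (normalized so that $f_\beta(r)\sim r - a_\beta$ at infinity), whose explicit form in terms of Bessel/Gamma functions underlies formula~\eqref{def:hom_scat_l} and the constant $\Lambda_\beta$; substituting $u = f_\beta\cdot v$ and completing the square turns the left side into $\int |f_\beta|^2|v'|^2$ plus boundary terms, and one then estimates this below by a Poincar\'e-type bound on the interval $(0,R)$. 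The appearance of $\tanh 1$ in $\zeta=(1-\tanh 1)/\sqrt3$ strongly suggests that after rescaling to make the effective interval length $1$ (using $|Q|^{1/3}\leq \sqrt3\,|Q|^{1/3}$ as the relevant length scale of $Q$, exactly as in the hard-sphere proof where $I=3^{3/2}|Q|$ encodes the $\sqrt3$), the extremal one-dimensional problem is $-v'' = \lambda v$ with a Robin/Neumann-type condition producing $\lambda = (\text{something})\tanh 1$; I would set up this ODE comparison carefully and extract the sharp-for-this-method constant.

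The main obstacle, then, is not the reduction from $e_2$ to a one-body estimate (that is routine, following the hard-sphere template) but the quantitative analysis of the radial problem with the singular weight $r^{-\beta}$: one must handle the singularity at the origin (where $\psi$ need not vanish, unlike the hard-sphere case, so the substitution $u=f_\beta v$ must be justified given $f_\beta(0)=0$ and $f_\beta$'s behavior there), ensure the boundary terms at $r=R$ have a favorable sign or are absorbed, and track constants so that the final bound is exactly $\zeta/\Lambda_\beta$ with $\zeta=(1-\tanh1)/\sqrt3$. I would also verify that the restriction $\beta>3$ is precisely what makes $a_\beta$ (equivalently $\int_0^1 r^{-\beta+2}\,dr$-type integrals entering the scattering solution) finite, and that for $\beta\le 3$ the method genuinely breaks down, consistent with the statement. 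Since $W_\beta\ge 0$, the monotonicity of $e_2$ in the potential also gives a sanity check: this refined bound should dominate the elementary bound~\eqref{eq:w_elemetary_bound_2} for small $|Q|$ and be dominated by it for large $|Q|$, which I would use to catch constant errors.
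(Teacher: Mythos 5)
Your overall skeleton (reduce $e_2$ to a one-body, Dyson-type bound on the star-shaped set $\Omega=Q-\bx_2$, symmetrize in $\bx_1,\bx_2$, integrate) matches the paper, but the step that actually produces the stated constant is missing, and the mechanism you propose for it would not deliver it. The paper does \emph{not} prove a new Dyson lemma for the full potential $W_\beta$: since $W_\beta$ has infinite range, Lemma~\ref{lem:dyson_generalized} is not applicable to it directly, and a ground-state substitution with the exact zero-energy solution $f_\beta$ (a Bessel-$K$ function) on a bounded domain runs into exactly the problem you gloss over --- on the scale of $Q$ the tail of the potential outside $\Omega$ is lost, so the ``scattering length seen at radius $R$'' is not $a_\beta$, and any constants you extract would be Bessel-function ratios, not $\tanh 1$. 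The paper's key maneuver is to replace $W_\beta$ by the truncated potential $W_\beta^R=W_0 R^{-\beta}\chi_{B_{0,R}}$ with $R=\min\{\delta a_\beta,\sqrt3\,|Q|^{1/3}\}$ (allowed by monotonicity of $e_2$ in the potential), to handle the near region $B_{\bx_2,R}$ by the mere pointwise size of $W_\beta^R$ there (no kinetic energy needed), and to apply the \emph{finite-range} generalized Dyson lemma only on the complement, with $U$ supported in $[R,\sqrt3|Q|^{1/3}]$. The constant then comes from the explicitly computable scattering length of the truncated potential, \eqref{eq:scattering_l_w_cut}: with $W_0/2=(a_\beta/\Lambda_\beta)^{\beta-2}$ and $\delta=\Lambda_\beta^{-1}$ one gets $a_\beta^R=(a_\beta/\Lambda_\beta)(1-\tanh 1)$, i.e.\ the $\tanh 1$ is the $\sinh/\cosh$ log-derivative matching of the zero-energy solution inside the constant cut-off well --- not a Robin/Neumann eigenvalue of $-v''=\lambda v$ as you conjecture. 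Without the truncation idea (or an equivalent device relating the finite-domain radial problem quantitatively to $a_\beta$), the ``heart of the matter'' in your plan remains unproven, and it is precisely the part the proposition is about.

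Two smaller points. First, your sanity check is backwards: for $\beta>3$ the elementary bound \eqref{eq:w_elemetary_bound_2} scales like $|Q|^{-(\beta-2)/3}$ and hence dominates the scattering-length bound $\propto a_\beta|Q|^{-1/3}$ for \emph{small} $|Q|$, while the refined bound wins for large $|Q|$ (the dilute regime); indeed in the paper the elementary estimate is exactly what is used on small cubes, where the truncated potential's support covers all of $Q$. Second, your reduction ``apply the one-body bound in $\bx_1$, swap roles, add'' is fine in principle, but note the paper splits the roles asymmetrically: one gradient plus half the potential feeds the Dyson lemma on the far region, while the other half of the potential alone covers the near region, and the final constant is a minimum of the two regional constants rather than a simple factor of $2$.
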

\begin{proof}
	Our strategy is to replace $W_{\beta}(\bx)$ by a cut-off version
	\begin{align*}
		W_{\beta}^R(\bx) := \frac{W_{0}}{R^{\beta}}\chi_{B_{0,R}}(\bx),
	\end{align*}
	which has a finite range $R$ and a related scattering length $a_{\beta}^R$ (see \eqref{eq:scattering_l_w_cut} 
	in the Appendix), where we choose
	\begin{align}
		R:= \min\{\delta a_{\beta},\sqrt{3}|Q|^{1/3}\}
	\end{align}
	and $\delta > 0$ is to be determined later. 
	
	We note that because of the special form of the potential $W_{\beta}$, the elementary bound will be 
	sufficient to obtain the desired estimate on sufficiently small cubes compared to the 
	scattering length. This is accounted for by the form of the potential $W_{\beta}^R$. For large cubes 
	however, we will use the comparatively small range of $W_{\beta}^R$ and the following 
	lemma (for a proof see \cite{Lieb-Yngvason:98} or 
	\cite[Lemma 2.5]{Lieb-Seiringer-Solovej-Yngvason:05}).
	\begin{lemma}[Generalized Dyson's Lemma in 3D]\label{lem:dyson_generalized}
		Let $v(r) \geq 0$ be an interaction potential with finite range $R_0$ and scattering 
		length $a$, and $U(r) \geq 0$ be any function satisfying $\int_{0}^{\infty}U(r)r^2\,dr \leq 1$ 
		and $U(r) = 0$ for $r < R_0$. Let $\Omega \subset \R^3$ be star-shaped with 
		respect to $0$. Then for $\psi \in H^1(\Omega)$,
		\begin{align*}
			\int_{\Omega}\left(|\nabla \psi|^2 + \frac{1}{2}v(|\bx|)\psi|^2\right)d\bx 
			\geq a \int_{\Omega}U(|\bx|)|\psi|^2\,d\bx.
		\end{align*}
	\end{lemma}
	
	We will first give an elementary bound for the potential $W_{\beta}^R(\bx)$ on its support, 
	using the relation
	\begin{align}\label{eq:relation_const_w_a_beta}
		\frac{W_{0}}{2} = \left(a_{\beta}/\Lambda_\beta\right)^{\beta-2}.
	\end{align}
	If $\sqrt{3}|Q|^{1/3} \le \delta a_{\beta}$, then
	\begin{align*}
		\frac{1}{2}\frac{W_{0}}{R^{\beta}} = \frac{W_{0}/2}{(\sqrt{3}|Q|^{1/3})^{\beta}} 
		= 3^{-\beta/2}\left(\frac{a_{\beta}}{\Lambda_{\beta}}\right)^{\beta-2}|Q|^{1-\beta/3}|Q|^{-1} 
		\ge \frac{a_{\beta}}{|Q|}\frac{3^{-3/2}}{\Lambda_{\beta}}\left(\delta\Lambda_{\beta}\right)^{3-\beta},
	\end{align*}
	while if $\sqrt{3}|Q|^{1/3} \geq \delta a_{\beta}$, one finds
	\begin{align*}
		\frac{1}{2}\frac{W_{0}}{R^{\beta}} = \frac{W_{0}/2}{(\delta a_{\beta})^{\beta}} 
		= \delta^{-\beta}\left(\frac{a_{\beta}}{\Lambda_{\beta}}\right)^{\beta-2}a_{\beta}^{-\beta} 
		\ge \frac{a_{\beta}}{|Q|}\frac{3^{-3/2}}{\Lambda_{\beta}}\left(\delta\Lambda_{\beta}\right)^{3-\beta}.
	\end{align*}
	This gives
	\begin{multline}\label{eq:estimate_supp_w}
		\int_{Q}\left[|\nabla_2\psi|^2 + \frac{1}{2}W_{\beta}^R(\bx_1-\bx_2)|\psi|^2\right]\,d\bx_1\\
		\ge \frac{a_{\beta}}{|Q|}\frac{3^{-3/2}}{\Lambda_{\beta}}\left(\delta\Lambda_{\beta}\right)^{3-\beta} 
		\int_Q \chi_{B_{\bx_2,R}}(\bx_1) |\psi|^2\,d\bx_1,
	\end{multline}
	and this bound will prove sufficient for the case of small $Q$, where the 
	ball $B_{\bx_2,R}$ covers all of $Q$.
	
	Outside the support of $W_{\beta}^R(\bx)$, namely on $\Omega \cap B_{0,R}^c$ 
	with $\Omega := Q - \bx_2$, 
	we use Lemma~\ref{lem:dyson_generalized} with the potential 
	(note that we may assume $R < \sqrt{3}|Q|^{1/3}$ by the above)
	\begin{align*}
		U(r) = \frac{1}{\sqrt{3}|Q|}\chi_{[R,\sqrt{3}|Q|^{1/3}]}(r),
	\end{align*}
	which satisfies all assumptions of the lemma. This then yields
	\begin{multline}\label{eq:estimate_comp_supp_w}
		\int_{Q}\left[|\nabla_1\psi|^2 + \frac{1}{2}W_{\beta}^R(\bx_1-\bx_2)|\psi|^2\right]\,d\bx_1
		= \int_{\Omega}\left[|\nabla_{\br}\psi|^2 + \frac{1}{2}W_{\beta}^R(\br)|\psi|^2\right]\,d\br\\
		\geq a_{\beta}^R\int_{\Omega}U(|\boldsymbol{r}|) |\psi|^2\,d\boldsymbol{r} 
		= \frac{a_{\beta}^R}{\sqrt{3}|Q|}
		\int_{Q}\chi_{B_{\bx_2,R}^c}(\bx_1)|\psi|^2\,d\bx_1,
	\end{multline}
	where $a_{\beta}^R$, the scattering length of $W_{\beta}^R$, is given by \eqref{eq:scattering_l_w_cut}. 
	We find, using \eqref{eq:relation_const_w_a_beta} and $R=\delta a_{\beta}$, that
	\begin{align*}
		a_{\beta}^R &= a_{\beta}\delta\left(1- \left(\delta\Lambda_{\beta}\right)^{\frac{\beta-2}{2}}\tanh
		\left(\left(\delta\Lambda_{\beta}\right)^{-\frac{\beta-2}{2}}\right)\right).
	\end{align*}
	Combining and integrating the estimates \eqref{eq:estimate_supp_w} 
	and \eqref{eq:estimate_comp_supp_w} yields the bound
	\begin{align*}
		e_2(|Q|;W_{\beta}) 
		\geq e_2(|Q|;W_{\beta}^R) 
		\geq C_{\beta,\delta}\frac{a_{\beta}}{|Q|^{1/3}},
	\end{align*}
	where
	\begin{align}
		C_{\beta,\delta} := \frac{1}{\Lambda_{\beta}}\min\left\{\frac{1}{3^{3/2}}\left(\delta
		\Lambda_{\beta}\right)^{3-\beta},
		\frac{1}{\sqrt{3}}(\delta\Lambda_{\beta})
		\left(1- \left(\delta\Lambda_{\beta}\right)^{\frac{\beta-2}{2}}\tanh
		\left(\left(\delta\Lambda_{\beta}\right)^{-\frac{\beta-2}{2}}\right)\right)\right\}.
	\end{align}
	For simplicity we choose $\delta:= \Lambda_{\beta}^{-1}$, so that $C_{\beta,\delta} = 3^{-1/2}(1-\tanh1)
	\Lambda_{\beta}^{-1}$.
\end{proof}

We then set $\gamma(|Q|) := a_{\beta}|Q|^{-1/3}$ (i.e. $\alpha = 3$, $\calpha = a_{\beta})$ and obtain
\begin{align*}
	e_2(|Q|;W_{\beta}) \geq \zeta \Lambda_{\beta}^{-1}\gamma(|Q|).
\end{align*}
Hence Assumption~\ref{ass:local_exclusion_alpha} holds with 
$e(\gamma) = \zeta \Lambda_{\beta}^{-1}\gamma$ by Theorem~\ref{thm:local_ex_TW},
and we can use 
$\underline{e}_{K=\pi^2}(\gamma) = \min\left\{\zeta \Lambda_{\beta}^{-1}\gamma,\pi^2\right\}$
to obtain the following theorem.

\begin{theorem}\label{thm:hom_scatl_LT}
	Let $d=3$ and $W_{\beta}(\bx) = W_0 |\bx|^{-\beta}$ with $W_0>0$, $\beta > 3$ 
	and scattering length $a_{\beta}$ given
	in \eqref{def:hom_scat_l}. Then for any $N \geq 1$ and all normalized 
	$\psi \in H^1(\R^{3N})$ one has
	\begin{align*}
		\langle\psi, (\hat{T}+\hat{W}_{\beta})\psi\rangle 
		\geq C_{\hs}\int_{\R^3}\min\left\{2^{-1/3}\zeta \Lambda_{\beta}^{-1}a_{\beta}
		\rho(\bx)^2,\pi^2\rho(\bx)^{5/3}\right\}\,d\bx,
	\end{align*}
	where $\rho$ the density associated to $\psi$,
	$\zeta$ is given in Proposition~\ref{prop:hom_pot_e_est}
	and $C_{\hs}$ agrees with the constant found in Theorem~\ref{thm:L-T_HS}.
\end{theorem}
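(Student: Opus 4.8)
The plan is to deduce Theorem~\ref{thm:hom_scatl_LT} from the abstract Lieb--Thirring bound Theorem~\ref{thm:L-T_global_alpha} in exactly the same way as the hard-sphere bound Theorem~\ref{thm:L-T_HS}; the two arguments run in parallel because the relevant scaling exponent is $\alpha=3$ in both cases (and the hypothesis $\beta>3$ is precisely what is needed for $a_\beta$ to be finite and for Proposition~\ref{prop:hom_pot_e_est} to apply).

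First I would feed the two-particle estimate of Proposition~\ref{prop:hom_pot_e_est}, namely $e_2(|Q|;W_\beta)\ge \zeta\Lambda_\beta^{-1}\,a_\beta|Q|^{-1/3}$, into the framework of Section~\ref{sec:general_LT_bounds}. Taking $\gamma(|Q|):=a_\beta|Q|^{-1/3}$, i.e.\ $\alpha=3$ and $\calpha=a_\beta$ in \eqref{def:gamma}, this becomes $e_2(|Q|;W_\beta)\ge e(\gamma(|Q|))$ with $e(\gamma)=\zeta\Lambda_\beta^{-1}\gamma$. Since $e$ is linear with $e(0)=0$ it is monotone increasing and concave, so Theorem~\ref{thm:local_ex_TW} yields Assumption~\ref{ass:local_exclusion_alpha}. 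As $e$ is unbounded I would then replace it by the truncation $\underline{e}_{K}$ with $K=\pi^2$, that is $\underline{e}_{\pi^2}(\gamma)=\min\{\zeta\Lambda_\beta^{-1}\gamma,\pi^2\}$, which still satisfies the hypotheses of Assumption~\ref{ass:local_exclusion_alpha}.

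Next I would verify Assumption~\ref{ass:local_uncert_alpha} for this value $\alpha=3>2$. The point is that, since $W_\beta\ge0$, one has $(T+W_\beta)_\psi^Q\ge T_\psi^Q$, so the required local uncertainty inequality concerns only the free kinetic energy and is supplied by Proposition~\ref{prop:uncertainty_3geqd} in the Appendix, with constants $S_1,S_2$ depending only on $d=3$ and not on the potential. Hence all hypotheses of Theorem~\ref{thm:L-T_global_alpha} hold with $d=3$, $\alpha=3$, $K=\pi^2$, and since $C_{d,\alpha,K}$ there depends on nothing but these three data, it is literally the constant $c_4/2$ already computed in the proof of Theorem~\ref{thm:L-T_HS}; this justifies setting $C_{\hs}:=C_{3,3,\pi^2}$ as in the statement.

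Finally I would substitute into the conclusion of Theorem~\ref{thm:L-T_global_alpha}. With $d=3$ we have $\rho^{1+2/d}=\rho^{5/3}$ and $\gamma(2/\rho)=a_\beta(2/\rho)^{-1/3}=2^{-1/3}a_\beta\rho^{1/3}$, so the integrand becomes $\underline{e}_{\pi^2}\bigl(\gamma(2/\rho)\bigr)\rho^{5/3}=\min\bigl\{2^{-1/3}\zeta\Lambda_\beta^{-1}a_\beta\rho^2,\;\pi^2\rho^{5/3}\bigr\}$, which is exactly the asserted bound. There is no genuine obstacle here: the analytic content lives entirely in Proposition~\ref{prop:hom_pot_e_est} (the generalized Dyson-lemma argument with the cut-off potential $W_\beta^R$) and in Theorem~\ref{thm:L-T_global_alpha}. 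The only step deserving attention is the bookkeeping of the scaling data $\alpha=3$, $\calpha=a_\beta$ and the observation that the universal constant coincides with that of the hard-sphere theorem precisely because the triple $(d,\alpha,K)$ is the same.
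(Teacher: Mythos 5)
Your proposal is correct and follows the paper's own route essentially verbatim: Proposition~\ref{prop:hom_pot_e_est} gives $e(\gamma)=\zeta\Lambda_\beta^{-1}\gamma$ with $\gamma(|Q|)=a_\beta|Q|^{-1/3}$ (so $\alpha=3$, $\calpha=a_\beta$), truncation to $\underline{e}_{K=\pi^2}$ plus Proposition~\ref{prop:uncertainty_3geqd} verify the two Assumptions, and Theorem~\ref{thm:L-T_global_alpha} with $(d,\alpha,K)=(3,3,\pi^2)$ yields the stated bound with $C_{\hs}=C_{3,3,\pi^2}$. The bookkeeping $\gamma(2/\rho)=2^{-1/3}a_\beta\rho^{1/3}$ and the remark that the constant coincides with the hard-sphere one because the triple $(d,\alpha,K)$ (and hence $S_1,S_2$) is identical are exactly the paper's argument.
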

\begin{proof}
	With the above parameters, we apply Theorem~\ref{thm:L-T_global_alpha} with 
	$d=3, \alpha = 3$ and $K=\pi^2$. The statement of the theorem then holds with 
	$C_{3,3,\pi^2} = C_{\hs}$.
\end{proof}
\begin{remark}
	It is instructive to analyze the behavior of $\Lambda_{\beta}^{-1}a_{\beta}$ in the following 
	two limiting cases:
	\begin{enumerate}
		\item $\beta \to \infty$: If we set $W_{0} = a^{\beta-2}$, this should correspond to the 
		hard-sphere case with range $a$. Indeed, in this limit $\Lambda_{\beta} \to 1$ and 
		$a_{\beta} \to a$,
		so we retrieve a bound of the form given above in Theorem~\ref{thm:L-T_HS}.
		
		\item $\beta \to 3$: In this situation, the scattering length $a_{\beta}$ tends to infinity, 
		but $\Lambda_{\beta}^{-1}a_{\beta} = (W_{0}/2)^{1/(\beta-2)} \to W_{0}/2$, 
		so we retrieve the form given by the elementary bound in 
		Theorem~\ref{thm:L-T_global_elementary}. Note that since the scattering length 
		tends to infinity, we cannot have a bound of the form 
		$C\int \min\{ a_\beta\rho^2 ,\rho^{5/3}\}$ for all $\beta > 3$ 
		(with $C$ independent of $\beta$),
		because this would tend to 
		$C \int \rho^{5/3}$ in the limit, which violates scaling 
		(see Proposition~\ref{prop:cl_LT_counterex}).
	\end{enumerate}
\end{remark}

\subsection{Hard-Disk Interaction}
The hard-disk interaction of range $a$ (in $d=2$) corresponds to the potential
\begin{align*}
	W_{a}^{\hd}(\bx)= \left\{ \begin{array}{ll}
 					+\infty ,&\quad |\bx| \leq a,\\
					0, &\quad |\bx|>a.
					\end{array} \right.
\end{align*}
As an operator, $W_{a}^{\hd}$ is realized by requiring that the many-particle wave function $\psi(\bx_1,\dots, \bx_N)$ vanishes for $|\bx_i-\bx_j| \le a$, $1\le i<j \le N$. This subspace of $H^1(\R^{2N})$ is denoted by $H_a^1(\R^{2N})$. 

The two-particle energy $e_2(|Q|;W_{a}^{\hd})$ then becomes
\begin{align*}
	e_2(|Q|;W_a^{\hd}) = |Q|\inf_{\substack{\psi \in H_a^1(\R^{4})\\ \int_{Q^2}|\psi|^2=1}}\,\int_{Q^2}\left[|
	\nabla_1\psi|^2 + |\nabla_2\psi|^2\right]\,d\bx_1d\bx_2.
\end{align*}
We recall that the scattering length of $W_{a}^{\hd}$ is given by the range of the hard disk, namely $a$.
\begin{proposition} \label{prop:hard_disk_e_est}
	In the hard-disk case, the two-particle energy satisfies the estimate
	\begin{align*}
		e_2(|Q|;W_a^{\hd}) \ge \frac{2}{\left(-\ln\left(2^{-1/2}a|Q|^{-1/2}\right)\right)_+}.
	\end{align*}
\end{proposition}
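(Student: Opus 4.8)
The strategy is to prove a two‑dimensional counterpart of Dyson's lemma — the logarithm in the claimed bound is the signature of the 2D radial Hardy estimate — paralleling the proof of Proposition~\ref{prop:hard_sphere_e_est}. Fix $\bx_2 \in Q$ and pass to the relative coordinate $\br := \bx_1 - \bx_2$, so that $\tilde\psi(\br) := \psi(\br+\bx_2,\bx_2)$ lies in $H^1$ of the translated cube $\Omega := Q-\bx_2$ and vanishes on $\Omega \cap \{|\br|\le a\}$. Since $Q$ is convex with $\bx_2 \in Q$, the set $\Omega$ is convex, contains $0$, and hence is star‑shaped with respect to $0$; moreover $\Omega \subseteq \overline{B(0,D)}$ with $D := \sqrt{2}\,|Q|^{1/2} = \operatorname{diam}(Q)$. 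Writing $\br = r\omega$ in polar coordinates and integrating radially outward from the circle $|\br| = a$ (where $\tilde\psi$ vanishes), the Cauchy--Schwarz inequality with the weight $ds/s$ gives, for a.e.\ direction $\omega$ and $a \le r \le R(\omega)$ (with $R(\omega) \le D$ the radial extent of $\Omega$),
\begin{align*}
	|\tilde\psi(r\omega)|^2
	= \left| \int_a^r \partial_s\tilde\psi(s\omega)\,ds \right|^2
	\le \Bigl(\ln\tfrac{r}{a}\Bigr)\int_a^{R(\omega)} s\,|\partial_s\tilde\psi(s\omega)|^2\,ds
	\le \Bigl(\ln\tfrac{D}{a}\Bigr)\int_a^{R(\omega)} s\,|\partial_s\tilde\psi(s\omega)|^2\,ds,
\end{align*}
provided $|Q| \ge a^2/2$ (so that $\ln(D/a) \ge 0$; the degenerate case is treated below). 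Note the last integral no longer depends on $r$.

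Integrating this pointwise bound over $\Omega$ against $r\,dr\,d\omega$, using $\int_a^{R(\omega)} r\,dr \le D^2/2 = |Q|$ and $|\partial_r\tilde\psi| \le |\nabla\tilde\psi|$, we obtain
\begin{align*}
	\int_Q |\psi(\bx_1,\bx_2)|^2\,d\bx_1
	= \int_\Omega |\tilde\psi|^2\,d\br
	\le |Q|\,\Bigl(\ln\tfrac{D}{a}\Bigr)\int_\Omega |\partial_r\tilde\psi|^2\,d\br
	\le |Q|\,\Bigl(\ln\tfrac{D}{a}\Bigr)\int_Q |\nabla_1\psi(\bx_1,\bx_2)|^2\,d\bx_1,
\end{align*}
valid for a.e.\ $\bx_2 \in Q$. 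Rearranging and integrating in $\bx_2$ over $Q$ gives $\int_{Q^2}|\nabla_1\psi|^2 \ge \bigl(|Q|\ln(D/a)\bigr)^{-1}\int_{Q^2}|\psi|^2$, and the same estimate with the roles of the two particles exchanged yields
\begin{align*}
	\int_{Q^2}\bigl(|\nabla_1\psi|^2 + |\nabla_2\psi|^2\bigr)\,d\bx_1 d\bx_2
	\ge \frac{2}{|Q|\ln(D/a)}\int_{Q^2}|\psi|^2\,d\bx_1 d\bx_2.
\end{align*}
Taking the infimum over normalized $\psi \in H_a^1(\R^4)$ and multiplying by $|Q|$ gives $e_2(|Q|;W_a^{\hd}) \ge 2/\ln(D/a)$, which upon substituting $D = \sqrt2\,|Q|^{1/2}$ and rewriting $\ln(D/a) = -\ln(2^{-1/2}a|Q|^{-1/2})$ is exactly the assertion.

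Finally, if $|Q| < a^2/2$ then $D < a$, so any two points of $Q$ are within distance $a$ of each other; every admissible $\psi$ then vanishes a.e.\ on $Q^2$, the infimum defining $e_2$ is over the empty set, and $e_2(|Q|;W_a^{\hd}) = +\infty$ — consistent with the convention under which the right‑hand side is $+\infty$, which is why the $(\,\cdot\,)_+$ appears. The one point requiring genuine care is the rigorous justification of the radial integration: measurability in $\bx_2$ and $\omega$, absolute continuity of $\tilde\psi$ on almost every ray emanating from $0$, and the extension of $\tilde\psi$ by $0$ across $\{|\br|\le a\}$ — all standard for $H^1$ functions on convex domains (and one could instead simply invoke the two‑dimensional Dyson lemma available in the literature on the dilute 2D Bose gas). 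I expect this bookkeeping, rather than any hard inequality, to be the main obstacle, together with keeping track of the constant $\sqrt2$, which is forced by $\operatorname{diam}(Q) = \sqrt2\,|Q|^{1/2}$.
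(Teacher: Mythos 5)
Your proof is correct, and its outer skeleton coincides with the paper's: fix $\bx_2$, work on the star-shaped set $\Omega = Q-\bx_2$ of diameter $\sqrt{2}|Q|^{1/2}$, treat each particle's gradient separately and symmetrize, and dispose of cubes with $\sqrt{2}|Q|^{1/2}\le a$ via the positive-part convention. Where you genuinely differ is the core one-particle estimate: the paper invokes the generalized Dyson lemma in 2D (Lemma~\ref{lem:dyson_generalized_2d}, quoted from Lieb--Yngvason) with the choice $U(r) = \bigl(|Q|\ln(\sqrt{2}|Q|^{1/2}/a)\bigr)^{-1}\chi_{[a,\sqrt{2}|Q|^{1/2}]}(r)$, verifying the normalization $\int_0^\infty U(r)\ln(r/a)\,r\,dr \le 1$ and using that $\psi$ vanishes on $B_{\bx_2,a}$, whereas you prove the needed inequality from scratch by a radial Cauchy--Schwarz (logarithmic Hardy) estimate that exploits the hard-core vanishing at $|\br|=a$. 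Your route is more elementary and self-contained --- in effect a direct proof of the hard-core special case of that lemma --- and it produces the identical constant $2/\ln\bigl(\sqrt{2}|Q|^{1/2}/a\bigr)$; the measure-theoretic points you flag (absolute continuity of $H^1$ functions on almost every ray, vanishing of the continuous representative at $r=a$) are indeed standard and not a real obstacle. What the paper's route buys is reusability: the generalized lemma handles arbitrary nonnegative potentials through their scattering length, which is exactly what is needed later for the 2D homogeneous potentials (Proposition~\ref{prop:hom_pot_e_est_2d}), where a pure hard-core argument like yours would not suffice. One trivial bookkeeping point: the borderline cube $|Q| = a^2/2$ lands in your first branch, where $\ln(D/a)=0$ and the rearrangement divides by zero; but there, exactly as in your degenerate case, every admissible $\psi$ vanishes a.e.\ on $Q^2$ (since $|\bx_1-\bx_2|\le a$ throughout $Q^2$), so $e_2 = +\infty$ and the claimed bound holds anyway.
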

\begin{proof}
	We begin by noting that if $\sqrt{2}|Q|^{1/2} \leq a$, then the estimate is trivial, since 
	the inter-particle distance is less than $a$ and the energy is arbitrarily large. From now on 
	we can thus assume that $\sqrt{2}|Q|^{1/2} > a$. 
	In this situation we will need the following lemma 
	(see \cite{Lieb-Yngvason:01} or \cite[Lemma~3.1]{Lieb-Seiringer-Solovej-Yngvason:05}).
	\begin{lemma}[Generalized Dyson's Lemma in 2D]\label{lem:dyson_generalized_2d}
		Let $v(r)\geq 0$ be an interaction potential with finite range $R_0$ and scattering length $a$, 
		and $U(r) \geq 0$ any function satisfying $\int_0^{\infty}U(r)\ln(r/a)r\,dr \le 1$ and 
		$U(r) = 0$ for $r< R_0$. Let $\Omega$ be star-shaped with respect to $0$. 
		Then any $\psi \in H^1(\Omega)$ satisfies
		\begin{align*}
			\int_{\Omega}\left(|\nabla \psi|^2 + \frac{1}{2}v(|\bx|)|\psi|^2\right)d\bx 
			\geq \int_{\Omega}U(|\bx|)|\psi|^2\,d\bx.
		\end{align*}
	\end{lemma}
	As mentioned before, $W_a^{\hd}$ is implemented through appropriate Dirichlet boundary 
	conditions, so the lemma in this case reads as follows. 
	For $\psi \in H^1(\Omega)$ with $\psi(\bx) = 0$ when $|\bx| \leq a$ (and $R_0 = a$), we have
	\begin{align}\label{eq:dyson_hd}
		\int_{\Omega}|\nabla \psi|^2\,d\bx \geq \int_{\Omega}U(|\bx|)|\psi|^2\,d\bx.
	\end{align}
	To use the above, we define
	\begin{align*}
		U(r) = \frac{1}{|Q|\ln\left(\sqrt{2}|Q|^{1/2}/a\right)}\chi_{[a,\sqrt{2}|Q|^{1/2}]}(r),
	\end{align*}
	so that $U(r) \geq 0$ and
	\begin{align*}
		\int_{0}^{\infty}U(r)\ln(r/a)r\,dr 
		&=  \frac{1}{|Q|\ln\left(\sqrt{2}|Q|^{1/2}/a\right)}\int_{a}^{\sqrt{2}|Q|^{1/2}}\ln(r/a)r\,dr\\
		&\leq \frac{1}{|Q|}\int_{a}^{\sqrt{2}|Q|^{1/2}}r\,dr \leq 1.
	\end{align*}
	Hence for $\bx_2 \in Q$ fixed, \eqref{eq:dyson_hd} yields
	\begin{align*}
		\int_{Q}|\nabla_1\psi|^2\,d\bx_1 
		&\geq \frac{1}{|Q|\ln\left(\sqrt{2}|Q|^{1/2}/a\right)}\int_{Q}\chi_{B_{\bx_2,a}^c}|\psi|^2\,d\bx_1\\
		&= \frac{1}{|Q|\ln\left(\sqrt{2}|Q|^{1/2}/a\right)}\int_{Q}|\psi|^2\,d\bx_1,
	\end{align*}
	since $\psi \equiv 0$ inside $B_{\bx_2,a}$. Repeating the same argument for 
	the term $\int_{Q}|\nabla_2\psi|
	^2\,d\bx_2$ and integrating over the respective variables yields
	\begin{align*}
		e_2(|Q|;W_a^{\hd}) \ge \frac{2}{-\ln\left(2^{-1/2}a|Q|^{-1/2}\right)}.
	\end{align*}
	We then obtain the desired estimate if we set the particle energy to be 
	$+\infty$ if $\sqrt{2}|Q|^{1/2} \leq a$ 
	through considering only the positive part of the denominator.
\end{proof}

If we set $\gamma(|Q|) := a|Q|^{-1/2}$ (i.e. $\alpha = 3$, $\calpha = a$), 
then
\begin{align*}
	e_2(|Q|;W_a^{\hd}) \ge \frac{2}{\left(-\ln\left(2^{-1/2}\gamma\right)\right)_+},
\end{align*}
but this lower bound is \emph{not} concave in $\gamma$. However, it is shown in Appendix~\ref{app:hd_concave_calc} that 
\begin{align*}
	e(\gamma) := \frac{2}{2+\left(-\ln\left(2^{-1/2}\gamma\right)\right)_+}
\end{align*}
is a suitable candidate for a bounded concave function in $\gamma$, and hence 
Assumption~\ref{ass:local_exclusion_alpha} holds with $e = \underline{e}_{K=1}$.
Also note that Assumption~\ref{ass:local_uncert_alpha} holds for this value of $\alpha$
by virtue of Proposition~\ref{prop:uncertainty_2d} in the Appendix.

\begin{theorem}\label{thm:L-T_hd}
	Let $W_{a}^{\hd}$ denote the hard-disk interaction of range $a>0$. 
	There exists a positive constant $C_{\hd}$ independent of $a$ 
	such that for any $N \geq 1$
	and all normalized $\psi \in H_a^1(\R^{2N})$ the total energy satisfies the estimate
	\begin{align}\label{eq:L-T_hd}
		\langle\psi, (\hat{T}+\hat{W}_a^{\hd})\psi\rangle 
		\geq C_{\hd}\int_{\R^2}\frac{2\rho(\bx)^2}{2+\left(-\ln\left(2^{-1}a\rho(\bx)^{1/2}\right)\right)_+}\,d\bx,
	\end{align}
	where $\rho$ is the density associated to $\psi$.
\end{theorem}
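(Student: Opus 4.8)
The plan is to apply Theorem~\ref{thm:L-T_global_alpha} directly, since all of its hypotheses have just been verified for the hard-disk potential. First I would record the parameters: we are in dimension $d=2$, we take the scaling exponent $\alpha=3$ with $\calpha=a$, so that $\gamma(|Q|)=a|Q|^{-1/2}$, and by Proposition~\ref{prop:hard_disk_e_est} together with the concavity discussion in Appendix~\ref{app:hd_concave_calc} we have Assumption~\ref{ass:local_exclusion_alpha} with
$$
	e(\gamma)=\ue[K=1](\gamma)=\frac{2}{2+\left(-\ln\left(2^{-1/2}\gamma\right)\right)_+},
$$
which is bounded by $K=1$. Since $\alpha=3>2$, Assumption~\ref{ass:local_uncert_alpha} is the one needed in the stronger ($\alpha>2$) form, and this is supplied by Proposition~\ref{prop:uncertainty_2d} in the Appendix.

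Next I would simply invoke Theorem~\ref{thm:L-T_global_alpha} with these values of $d$, $\alpha$, $K$, which yields
$$
	\langle\psi,(\hat T+\hat W_a^{\hd})\psi\rangle\ \geq\ C_{2,3,1}\int_{\R^2} e(\gamma(2/\rho(\bx)))\,\rho(\bx)^{1+2/2}\,d\bx
	= C_{2,3,1}\int_{\R^2} e\!\left(a(2/\rho(\bx))^{-1/2}\right)\rho(\bx)^{2}\,d\bx .
$$
Here the only genuine bookkeeping step is to substitute $\gamma(2/\rho)=a\,(2/\rho)^{-1/2}=2^{-1/2}a\,\rho^{1/2}$ into the explicit form of $e$, so that
$$
	e(\gamma(2/\rho))=\frac{2}{2+\left(-\ln\left(2^{-1/2}\cdot 2^{-1/2}a\rho^{1/2}\right)\right)_+}=\frac{2}{2+\left(-\ln\left(2^{-1}a\rho^{1/2}\right)\right)_+},
$$
which is exactly the integrand appearing in \eqref{eq:L-T_hd}. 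Setting $C_{\hd}:=C_{2,3,1}$ then gives the stated inequality, and one checks this constant is independent of $a$ (the $a$-dependence enters only through $\calpha$, which does not appear in the universal constant $C_{d,\alpha,K}$ of Theorem~\ref{thm:L-T_global_alpha}).

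Since every ingredient is already in place, there is no serious obstacle in the proof itself; the work has been front-loaded into Proposition~\ref{prop:hard_disk_e_est} and the concavity verification in Appendix~\ref{app:hd_concave_calc}. If anything, the one point requiring a little care is confirming that the lower bound from Proposition~\ref{prop:hard_disk_e_est}, namely $2/\left(-\ln(2^{-1/2}a|Q|^{-1/2})\right)_+$, is indeed dominated below by the concave $e(\gamma)$ with the extra $+2$ in the denominator --- i.e. that replacing the non-concave bound by $e(\gamma)=2/(2+(-\ln(2^{-1/2}\gamma))_+)$ only weakens it, which holds since $2+(x)_+\ge (x)_+$ for all $x$. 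With that, the application of Theorem~\ref{thm:L-T_global_alpha} is immediate and the theorem follows.
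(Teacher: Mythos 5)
Your proposal is correct and follows essentially the same route as the paper: both simply invoke Theorem~\ref{thm:L-T_global_alpha} with $d=2$, $\alpha=3$, $K=1$, relying on Proposition~\ref{prop:hard_disk_e_est}, the concavity verification of Appendix~\ref{app:hd_concave_calc} (which gives Assumption~\ref{ass:local_exclusion_alpha} via Theorem~\ref{thm:local_ex_TW}), and Proposition~\ref{prop:uncertainty_2d} for Assumption~\ref{ass:local_uncert_alpha}, with $C_{\hd}:=C_{2,3,1}$ independent of $a$. Your explicit substitution $\gamma(2/\rho)=2^{-1/2}a\rho^{1/2}$ and the check that the concave $e(\gamma)$ indeed lies below the bound of Proposition~\ref{prop:hard_disk_e_est} are correct and merely spell out what the paper leaves implicit.
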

\begin{proof}
	With the above parameters, we apply Theorem~\ref{thm:L-T_global_alpha} with 
	$d=2, \alpha = 3$ and $K=1$. The statement of the theorem then holds with 
	$C_{\hd} := C_{2,3,1}$, and taking the remark following 
	Proposition~\ref{prop:uncertainty_2d} into account, we have an explicit lower bound 
	$C_{2,3,1}=c_5/2 > 10^{-10}$.
\end{proof}
\begin{remark}
	It is again instructive to compare the right-hand side
	in \eqref{eq:L-T_hd} for small $\rho(\bx)$ to the exact 
	energy per unit area $e_0$ of the two-dimensional dilute 
	Bose gas, proven in \cite{Lieb-Yngvason:01},
	\begin{align*}
		\lim_{a^2\rho \to 0}\frac{e_0}{4\pi \rho^2|\ln(a^2\rho)|^{-1}} = 1.
	\end{align*}
\end{remark}

\subsection{Estimates for Homogeneous Potentials in Terms of the Scattering Length in 2D}
In this subsection, we will prove a more refined estimate than the
elementary bound (Theorem~\ref{thm:L-T_global_elementary})
for the homogeneous potential 
$W_{\beta}(\bx) = W_{0}|\bx|^{-\beta}$ in 2D in terms of its scattering length. 
Such an estimate is possible provided $\beta > 2$, for which the scattering length is finite.

From \eqref{eq:scattering_l_homogeneous_2d} in the Appendix we obtain 
the scattering length for $W_\beta$:
\begin{align}\label{def:hom_scat_l_2d}
	a_{\beta} = \Xi_{\beta} \left(\frac{W_{0}}{2}\right)^{1/(\beta-2)}, 
	\quad \text{where }\Xi_{\beta} := \left(\frac{2}{\beta-2}\right)^{2/(\beta-2)}.
\end{align}
\begin{proposition}\label{prop:hom_pot_e_est_2d}
	For $W_{\beta}$ with $\beta > 2$, the two-particle energy satisfies
	\begin{align*}
		e_2(|Q|;W_{\beta}) 
		\geq \frac{1}{\zeta_2 + \left(-\ln\left(\frac{a_{\beta}/\Xi_{\beta}}{\sqrt{2}|Q|^{1/2}}\right)\right)_+},
	\end{align*}
	where $\zeta_2 := \mathcal{I}_0(1)/\mathcal{I}_1(1) > 2.24$ is a ratio of Bessel functions.
\end{proposition}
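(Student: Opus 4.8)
The plan is to mirror the proof of Proposition~\ref{prop:hom_pot_e_est}, but in two dimensions and using the two-dimensional generalized Dyson lemma (Lemma~\ref{lem:dyson_generalized_2d}) in place of its three-dimensional counterpart. Write $b := (W_0/2)^{1/(\beta-2)} = a_\beta/\Xi_\beta$ and $L := \sqrt{2}\,|Q|^{1/2}$ (the diameter of $Q$), and replace $W_\beta$ by the cut-off potential
\[
	W_\beta^R(\bx) := \frac{W_0}{R^\beta}\,\chi_{B_{0,R}}(\bx),
	\qquad R := \min\{b,\,L\},
\]
which obeys $W_\beta \ge W_\beta^R$ pointwise, so that $e_2(|Q|;W_\beta) \ge e_2(|Q|;W_\beta^R)$ by monotonicity of $e_2$ in the potential. (This is the planar analogue of the choice $R = \min\{\delta a_\beta,\sqrt{3}|Q|^{1/3}\}$ in Proposition~\ref{prop:hom_pot_e_est}, with $\delta$ already set to its optimal value $\Xi_\beta^{-1}$.) The decisive input is the scattering length $a_\beta^R$ of $W_\beta^R$: matching the regular solution $\mathcal{I}_0(kr)$ of the two-dimensional radial zero-energy equation inside the well, with $k := \sqrt{W_0/(2R^\beta)}$, to the exterior solution $\ln(r/a_\beta^R)$ (matching both the function and its radial derivative at $r=R$) gives $\ln(R/a_\beta^R) = \mathcal{I}_0(kR)/\big(kR\,\mathcal{I}_1(kR)\big)$; the value $R = b$ is exactly the one for which $kR = 1$, so $\ln(b/a_\beta^R) = \mathcal{I}_0(1)/\mathcal{I}_1(1) = \zeta_2$, i.e.\ $a_\beta^R = b\,e^{-\zeta_2}$. (This is the planar counterpart of the cut-off scattering length used in the proof of Proposition~\ref{prop:hom_pot_e_est}; cf.\ the Appendix.)

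I would then distinguish two cases. \emph{Small cube} ($L \le b$, so $R = L$): since $|\bx_1-\bx_2| \le L = R$ for all $\bx_1,\bx_2 \in Q$, the potential $W_\beta^R(\bx_1-\bx_2)$ equals the constant $W_0/L^\beta$ almost everywhere on $Q^2$, and dropping the kinetic terms gives
\[
	e_2(|Q|;W_\beta^R) \ge |Q|\,\frac{W_0}{L^\beta} = \Big(\frac{b}{L}\Big)^{\beta-2} \ge 1 > \frac{1}{\zeta_2},
\]
which is the asserted bound, since $\big(-\ln(b/L)\big)_+ = 0$ in this regime. \emph{Large cube} ($L > b$, so $R = b$): split $W_\beta^R = \tfrac12 W_\beta^R + \tfrac12 W_\beta^R$. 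Fixing $\bx_2 \in Q$, putting $\Omega := Q - \bx_2$ (star-shaped about $0$), and applying Lemma~\ref{lem:dyson_generalized_2d} with $v = W_\beta^R$ (range $b$, scattering length $a_\beta^R$) and
\[
	U(r) := \frac{1}{|Q|\,\ln(L/a_\beta^R)}\,\chi_{[b,\,L]}(r),
\]
which is nonnegative (as $a_\beta^R < b < L$), vanishes for $r < b$, and satisfies $\int_0^\infty U(r)\ln(r/a_\beta^R)\,r\,dr \le \frac{1}{|Q|}\int_b^L r\,dr \le \frac{L^2}{2|Q|} = 1$ (using $\ln(r/a_\beta^R) \le \ln(L/a_\beta^R)$ on $[b,L]$), exactly as in the hard-disk estimate of Proposition~\ref{prop:hard_disk_e_est}, one obtains
\begin{multline*}
	\int_Q\Big[|\nabla_1\psi|^2 + \tfrac12 W_\beta^R(\bx_1-\bx_2)|\psi|^2\Big]\,d\bx_1\\
	\ge\ \frac{1}{|Q|\,\ln(L/a_\beta^R)}\int_Q \chi_{B_{\bx_2,b}^c}(\bx_1)\,|\psi|^2\,d\bx_1 .
\end{multline*}
The remaining half is bounded below on the support of $W_\beta^R$ by $\tfrac12 W_\beta^R(\bx_1-\bx_2) \ge \frac{W_0}{2b^\beta}\,\chi_{B_{\bx_2,b}}(\bx_1) = b^{-2}\,\chi_{B_{\bx_2,b}}(\bx_1)$; discarding $|\nabla_2\psi|^2$, using $\chi_{B_{\bx_2,b}} + \chi_{B_{\bx_2,b}^c} = 1$, and integrating over $Q^2$ then yields
\[
	e_2(|Q|;W_\beta^R) \ \ge\ \min\Big\{\frac{1}{\ln(L/a_\beta^R)},\ \frac{|Q|}{b^2}\Big\}.
\]

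To conclude, in the large-cube case one has $|Q|/b^2 = \tfrac12(L/b)^2 > \tfrac12 > \tfrac{1}{\zeta_2} \ge 1/\ln(L/a_\beta^R)$ (this is where $\zeta_2 > 2.24$ is used), so the minimum above is attained by the Dyson term, and
\[
	\frac{1}{\ln(L/a_\beta^R)} = \frac{1}{\ln(L/b) + \zeta_2} = \frac{1}{\zeta_2 + \big(-\ln(b/L)\big)_+},
\]
which is precisely the claimed bound once one recalls $b = a_\beta/\Xi_\beta$ and $L = \sqrt{2}\,|Q|^{1/2}$. The main obstacle is the scattering-length analysis of $W_\beta^R$: carrying out the planar radial matching correctly and recognizing that $R = a_\beta/\Xi_\beta$ is the distinguished scale producing the clean constant $\zeta_2 = \mathcal{I}_0(1)/\mathcal{I}_1(1)$ (the two-dimensional analogue of $1 - \tanh 1$ in three dimensions), together with the bookkeeping needed to ensure that the elementary estimate on the support of $W_\beta^R$ is never the binding term. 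These steps are routine but require care.
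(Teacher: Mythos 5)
Your proposal is correct and follows essentially the same route as the paper's proof: cutting off $W_\beta$ at $R=\min\{a_\beta/\Xi_\beta,\sqrt{2}|Q|^{1/2}\}$ (i.e.\ the paper's $\delta a_\beta$ with $\delta=\Xi_\beta^{-1}$ fixed from the start), applying Lemma~\ref{lem:dyson_generalized_2d} outside the support with the same choice of $U$, using the elementary constant bound on the support, and identifying $a_\beta^R=(a_\beta/\Xi_\beta)e^{-\zeta_2}$ via the cut-off scattering length \eqref{eq:scattering_l_w_cut_2d}, exactly as in Proposition~\ref{prop:hom_pot_e_est_2d}. The only differences are cosmetic (treating the small-cube regime by the constant-potential bound $(b/L)^{\beta-2}\ge 1$ and discarding $|\nabla_2\psi|^2$ rather than pairing it with the on-support estimate), and your bookkeeping of the two regimes and of the condition $\zeta_2>2$ is sound.
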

\begin{proof}
	As in the 3D case, our strategy is to replace $W_{\beta}(\bx)$ by a cut-off version
	\begin{align*}
		W_{\beta}^R(\bx) := \frac{W_{0}}{R^{\beta}}\chi_{B_{0,R}}(\bx),
	\end{align*}
	which has a finite range $R$ and a related scattering length 
	$a_{\beta}^R$ (see \eqref{eq:scattering_l_w_cut_2d} 
	in the Appendix), where we choose
	\begin{align}
		R:= \min\{\delta a_{\beta},\sqrt{2}|Q|^{1/2}\}
	\end{align}
	and $\delta > 0$ is to be determined later. 
	
	We will first give an elementary bound for the potential $W_{\beta}^R(\bx)$ on its support, 
	using the relation
	\begin{align}\label{eq:relation_const_w_a_beta_2d}
		\frac{W_{0}}{2} = \left(a_{\beta}/\Xi_\beta\right)^{\beta-2}.
	\end{align}
	If $\sqrt{2}|Q|^{1/2} \le \delta a_{\beta}$, then
	\begin{align*}
		\frac{1}{2}\frac{W_{0}}{R^{\beta}} = \frac{W_{0}/2}{(\sqrt{2}|Q|^{1/2})^{\beta}} 
		= 2^{-\beta/2}\left(\frac{a_{\beta}}{\Xi_{\beta}}\right)^{\beta-2}|Q|^{1-\beta/2}|Q|^{-1} 
		\ge \frac{1}{2|Q|}\left(\delta\Xi_{\beta}\right)^{2-\beta},
	\end{align*}
	while if $\sqrt{2}|Q|^{1/2} \geq \delta a_{\beta}$, one finds
	\begin{align*}
		\frac{1}{2}\frac{W_{0}}{R^{\beta}} = \frac{W_{0}/2}{(\delta a_{\beta})^{\beta}} 
		= \delta^{-\beta}\left(\frac{a_{\beta}}{\Xi_{\beta}}\right)^{\beta-2}a_{\beta}^{-\beta}
		= \frac{\left(\delta \Xi_{\beta}\right)^{2-\beta}}{(\delta a_{\beta})^2}
		\ge \frac{1}{2|Q|}\left(\delta\Xi_{\beta}\right)^{2-\beta}.
	\end{align*}
	This gives
	\begin{multline}\label{eq:estimate_supp_w_2d}
		\int_{Q}\left[|\nabla_2\psi|^2 + \frac{1}{2}W_{\beta}^R(\bx_1-\bx_2)|\psi|^2\right]\,d\bx_1\\
		\ge \frac{1}{2|Q|}\left(\delta\Xi_{\beta}\right)^{2-\beta} 
		\int_Q \chi_{B_{\bx_2,R}}(\bx_1) |\psi|^2\,d\bx_1,
	\end{multline}
	and this bound will prove sufficient for the case of small $Q$, where the 
	ball $B_{\bx_2,R}$ covers all of $Q$.
	
	Outside the support of $W_{\beta}^R(\bx)$, namely on $\Omega \cap B_{0,R}^c$ 
	with $\Omega := Q - \bx_2$, 
	we use Lemma~\ref{lem:dyson_generalized_2d} with the potential 
	(note that we may assume $R < \sqrt{2}|Q|^{1/2}$ by the above)
	\begin{align*}
		U(r) = \frac{1}{|Q|\ln\left(\sqrt{2}|Q|^{1/2}/a_{\beta}^R\right)}\chi_{[R,\sqrt{2}|Q|^{1/2}]}(r),
	\end{align*}
	which satisfies all assumptions of the lemma 
	(it will turn out that $a_{\beta}^R \leq R$), 
	analogously to the hard-disk case with $a=a_{\beta}^R$. 
	This then yields
	\begin{multline}\label{eq:estimate_comp_supp_w_2d}
		\int_{Q}\left[|\nabla_1\psi|^2 + \frac{1}{2}W_{\beta}^R(\bx_1-\bx_2)|\psi|^2\right]\,d\bx_1
		= \int_{\Omega}\left[|\nabla_{\br}\psi|^2 + \frac{1}{2}W_{\beta}^R(\br)|\psi|^2\right]\,d\br\\
		\geq \int_{\Omega}U(|\boldsymbol{r}|) |\psi|^2\,d\boldsymbol{r} 
		= \frac{1}{|Q|\ln\left(\sqrt{2}|Q|^{1/2}/a_{\beta}^R\right)}
		\int_{Q}\chi_{B_{\bx_2,R}^c}(\bx_1)|\psi|^2\,d\bx_1,
	\end{multline}
	where $a_{\beta}^R$, the scattering length of $W_{\beta}^R$, is given by 
	\eqref{eq:scattering_l_w_cut_2d}. 
	We find, using \eqref{eq:relation_const_w_a_beta_2d} and $R=\delta a_{\beta}$, that
	\begin{align*}
		a_{\beta}^R 
		= a_{\beta}\delta\exp\left(-(\delta\Xi_{\beta})^{\frac{\beta-2}{2}}
		\frac{\mathcal{I}_0\left((\delta\Xi_{\beta})^{-\frac{\beta-2}{2}}\right)}
		{\mathcal{I}_1\left((\delta\Xi_{\beta})^{-\frac{\beta-2}{2}}\right)}\right).
	\end{align*}
	Combining and integrating the estimates \eqref{eq:estimate_supp_w_2d} 
	and \eqref{eq:estimate_comp_supp_w_2d}, and choosing $\delta:= \Xi_{\beta}^{-1}$ 
	for simplicity, yields the bound
	\begin{align*}
		e_2(|Q|;W_{\beta}) &\geq \min\left\{\frac{1}{2}, 
		\left[ \ln\left(\frac{\sqrt{2}|Q|^{1/2}}{a_{\beta}/\Xi_\beta}
		\exp\left(\frac{\mathcal{I}_0(1)}{\mathcal{I}_1(1)}\right)\right)\right]^{-1}\right\}\\
		&= \frac{1}{\zeta_2 + \left(-\ln\left(\frac{a_{\beta}/\Xi_{\beta}}{\sqrt{2}|Q|^{1/2}}\right)\right)_+},
	\end{align*}
	where $\zeta_2 := \mathcal{I}_0(1)/\mathcal{I}_1(1) > 2.24$.
\end{proof}

We then set $\gamma(|Q|) := a_{\beta}|Q|^{-1/2}$ (i.e. $\alpha = 3$, $\calpha = a_{\beta})$ and obtain
\begin{align*}
	e_2(|Q|;W_{\beta}) \geq e(\gamma) 
	:= \frac{1}{\zeta_2 + \left(-\ln\left(2^{-1/2}\gamma/\Xi_{\beta}\right)\right)_+}
	= \underline{e}_{K=1}(\gamma).
\end{align*}
Hence, following the hard-disk case, we obtain the following theorem.

\begin{theorem}\label{cor:hom_scatl_LT_2d}
	Let $d=2$ and $W_{\beta}(\bx) = W_0 |\bx|^{-\beta}$ with $W_0>0$, $\beta > 2$ and 
	scattering length $a_{\beta}$ given
	in \eqref{def:hom_scat_l_2d}. Then for any $N \geq 1$ and all 
	normalized $\psi \in H^1(\R^{2N})$ one has
	\begin{align*}
		\langle\psi, (\hat{T}+\hat{W}_{\beta})\psi\rangle 
		\geq C_{\hd}\int_{\R^2}\frac{\rho(\bx)^2}
		{\zeta_2+\left(-\ln\left(2^{-1}\Xi_{\beta}^{-1}a_{\beta}\rho(\bx)^{1/2}\right)\right)_+}\,d\bx,
	\end{align*}
	where $\rho$ is the density associated to $\psi$,
	$\zeta_2$ is given in Proposition~\ref{prop:hom_pot_e_est_2d} and 
	$C_{\hd}$ is as in Theorem~\ref{thm:L-T_hd}.
\end{theorem}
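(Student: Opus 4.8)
The plan is to recognize this statement as the two-dimensional homogeneous-potential analogue of the hard-disk result Theorem~\ref{thm:L-T_hd}, and to deduce it as a corollary of the general bound Theorem~\ref{thm:L-T_global_alpha} once the two structural hypotheses (Assumptions~\ref{ass:local_exclusion_alpha} and \ref{ass:local_uncert_alpha}) have been verified for $W_\beta$ in $d=2$. The essential analytic input, the two-particle lower bound, is already supplied by Proposition~\ref{prop:hom_pot_e_est_2d}: with the dimensionless parameter chosen as $\gamma(|Q|) := a_\beta|Q|^{-1/2}$ --- which forces $d=2$, $\alpha=3$ and $\calpha = a_\beta$ in the notation of \eqref{def:gamma} (indeed $(2-\alpha)/d = -1/2$) --- that proposition rewrites as $e_2(|Q|;W_\beta)\ge e(\gamma(|Q|))$ with $e(\gamma)=\bigl(\zeta_2+(-\ln(2^{-1/2}\gamma/\Xi_\beta))_+\bigr)^{-1}$, which equals its own cap $\underline{e}_{K=1}$.

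First I would check that $e(\gamma)$ qualifies as the function required in Assumption~\ref{ass:local_exclusion_alpha}: it satisfies $e(0)=0$ (the logarithm diverges as $\gamma\to0^+$), it is monotone increasing in $\gamma$ (the denominator is non-increasing and eventually constant $=\zeta_2$), and it is bounded by $K=1$. Concavity is the only non-obvious point, but since replacing $\gamma$ by $c\gamma$ for a positive constant $c$ only shifts $\ln\gamma$ by a constant and hence leaves concavity untouched, the computation already carried out for the hard-disk model in Appendix~\ref{app:hd_concave_calc} --- where the relevant additive constant was $2$ rather than $\zeta_2>2$, and the larger value only helps --- applies \emph{mutatis mutandis}. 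With $e(\gamma)\le e_2(|Q|;W_\beta)$ in hand, Theorem~\ref{thm:local_ex_TW} immediately yields Assumption~\ref{ass:local_exclusion_alpha}. Assumption~\ref{ass:local_uncert_alpha} for this value $\alpha=3>2$ is exactly the content of Proposition~\ref{prop:uncertainty_2d} in the Appendix, so nothing new is needed there.

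Next I would invoke Theorem~\ref{thm:L-T_global_alpha} with $d=2$, $\alpha=3$, $K=1$; this produces the constant $C_{2,3,1}$, which is precisely the universal constant $C_{\hd}$ identified in Theorem~\ref{thm:L-T_hd} (with explicit lower bound $>10^{-10}$ coming from the remark after Proposition~\ref{prop:uncertainty_2d}, since here $C_{2,3,1}=c_5/2$ in the $\alpha>2$ branch). Finally, it remains only to substitute $\gamma(2/\rho(\bx)) = a_\beta(2/\rho(\bx))^{-1/2} = 2^{-1/2}a_\beta\rho(\bx)^{1/2}$ into $\underline{e}_{K=1}(\gamma(2/\rho))\,\rho^{1+2/d}$ with $1+2/d=2$, which turns the right-hand side of Theorem~\ref{thm:L-T_global_alpha} into exactly the advertised integrand $\rho^2/(\zeta_2+(-\ln(2^{-1}\Xi_\beta^{-1}a_\beta\rho^{1/2}))_+)$. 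I do not anticipate a genuine obstacle here --- the real work lives in Proposition~\ref{prop:hom_pot_e_est_2d} and in the concavity verification of Appendix~\ref{app:hd_concave_calc} --- so the only care required is careful bookkeeping of the constants $\Xi_\beta$ and $\zeta_2$ through the change of variables $|Q|\leftrightarrow 2/\rho$.
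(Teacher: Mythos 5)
Your proposal is correct and follows essentially the same route as the paper: the text preceding the theorem sets $\gamma(|Q|)=a_\beta|Q|^{-1/2}$ (so $\alpha=3$, $\calpha=a_\beta$), uses Proposition~\ref{prop:hom_pot_e_est_2d} together with Theorem~\ref{thm:local_ex_TW} to get Assumption~\ref{ass:local_exclusion_alpha} with $e=\underline{e}_{K=1}$, and the proof then just ``follows the hard-disk case,'' i.e. invokes Theorem~\ref{thm:L-T_global_alpha} with $d=2$, $\alpha=3$, $K=1$, Proposition~\ref{prop:uncertainty_2d} for the uncertainty assumption, and $C_{\hd}=C_{2,3,1}$. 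Your additional observations (concavity via Appendix~\ref{app:hd_concave_calc} since $\zeta_2>2$, and the substitution $\gamma(2/\rho)=2^{-1/2}a_\beta\rho^{1/2}$ yielding the stated integrand) are exactly the bookkeeping the paper leaves implicit.
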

\begin{remark}
	It is instructive to analyze the behavior of $\Xi_{\beta}^{-1}a_{\beta}$ in the following 
	two limiting cases:
	\begin{enumerate}
		\item $\beta \to \infty$: If we set $W_{0} = a^{\beta-2}$, this should correspond to the 
		hard-disk case with range $a$. Indeed, in this limit $\Xi_{\beta} \to 1$ and 
		$a_{\beta} \to a$,
		so we retrieve a bound of the form given above in Theorem~\ref{thm:L-T_hd}.
		
		\item $\beta \to 2$: In this situation, the scattering length $a_{\beta}$ tends to infinity, 
		however $\Xi_{\beta}^{-1}a_{\beta} = (W_{0}/2)^{1/(\beta-2)}$
		could have a finite limit depending on the size of the coupling constant.
		If $W_{0}$ is sufficiently large, we obtain
		the classical Lieb-Thirring estimate (see Theorem~\ref{thm:square_int_L-T}) with a constant
		$C_{\hd}/\zeta_2$. A more natural dependence of the Lieb-Thirring constant
		on $W_{0}$ could possibly be recovered by a different choice
		of $\delta$.
	\end{enumerate}
\end{remark}

\section{Counterexamples}\label{sec:counter_ex}
In this section, we investigate the sharpness of the forms of the previously obtained 
Lieb-Thirring type bounds. We restrict the discussion to $d=3$ for simplicity.

\subsection{Homogeneous Potentials}
As seen in Theorem~\ref{thm:square_int_L-T}, the classical Lieb-Thirring estimate can be recovered
from the class of inverse-square interaction potentials $W_2(\bx) = W_{0}|\bx|^{-2}$.
A natural question would be to ask whether this is also possible for other 
homogeneous potentials $W_{\beta}$. 
The following proposition shows that this is
impossible for $\beta \neq 2$ by using appropriate scaling and a suitable trial function.
\begin{proposition}\label{prop:cl_LT_counterex}
	Let $W_{\beta}(\bx) = W_{0}|\bx|^{-\beta}$ with $\beta \neq 2$. Assume that
	there exists a constant
	$C\geq 0$ such that the inequality
	\begin{align}\label{eq:cl_LT_counterex}
		\langle\psi, (\hat{T}+\hat{W}_{\beta})\psi\rangle 
		\geq C\int_{\R^3}\rho(\bx)^{5/3}\,d\bx
	\end{align}
	holds for all $\psi \in H^1(\R^{3N})$ and $N \geq 1$ 
	($\rho$ being the density of $\psi$). Then $C = 0$.
\end{proposition}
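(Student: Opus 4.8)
The plan is to exploit the mismatch in scaling dimensions between the kinetic-plus-interaction energy and the Thomas--Fermi-type functional $\int \rho^{5/3}$ when $\beta \neq 2$. Consider a fixed normalized $N$-particle trial function $\psi_0 \in H^1(\R^{3N})$ and introduce the rescaled family $\psi_L(\bx_1,\dots,\bx_N) := L^{-3N/2}\psi_0(\bx_1/L,\dots,\bx_N/L)$, which is again normalized. Under this rescaling the kinetic energy scales as $L^{-2}$, the interaction energy $\langle \psi_L, \hat{W}_\beta \psi_L\rangle$ scales as $L^{-\beta}$ (since $W_\beta$ is homogeneous of degree $-\beta$), while the density satisfies $\rho_L(\bx) = L^{-3}\rho_0(\bx/L)$, so that $\int_{\R^3}\rho_L^{5/3} = L^{-2}\int_{\R^3}\rho_0^{5/3}$.

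First I would plug these scalings into the assumed inequality \eqref{eq:cl_LT_counterex} applied to $\psi_L$, obtaining
\begin{align*}
	L^{-2}\langle\psi_0,\hat{T}\psi_0\rangle + L^{-\beta}\langle\psi_0,\hat{W}_\beta\psi_0\rangle
	\geq C L^{-2}\int_{\R^3}\rho_0^{5/3}.
\end{align*}
Multiplying through by $L^2$ gives $\langle\psi_0,\hat{T}\psi_0\rangle + L^{2-\beta}\langle\psi_0,\hat{W}_\beta\psi_0\rangle \geq C\int_{\R^3}\rho_0^{5/3}$. Now the behavior depends on the sign of $2-\beta$. If $\beta > 2$, then letting $L \to \infty$ sends the second term to $0$, yielding $\langle\psi_0,\hat{T}\psi_0\rangle \geq C\int\rho_0^{5/3}$ for \emph{every} $\psi_0$; but this is the bosonic kinetic energy inequality, which fails for bosons — one can make $\langle\psi_0,\hat{T}\psi_0\rangle$ as small as desired relative to $\int\rho_0^{5/3}$ by taking $N$ large with a fully condensed state $\psi_0(\bx_1,\dots,\bx_N) = \prod_j \phi(\bx_j)$, for which $\langle\psi_0,\hat{T}\psi_0\rangle = N\int|\nabla\phi|^2$ while $\int\rho_0^{5/3} = N^{5/3}\int|\phi|^{10/3}$, so the ratio degenerates as $N \to \infty$. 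Hence $C = 0$. If $\beta < 2$, then I would instead let $L \to 0^+$, which again sends $L^{2-\beta}\langle\psi_0,\hat{W}_\beta\psi_0\rangle \to 0$ and forces the same conclusion.

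The main obstacle — really the only subtle point — is making sure the chosen trial function $\psi_0$ has all the relevant quantities ($\langle\psi_0,\hat{T}\psi_0\rangle$, $\langle\psi_0,\hat{W}_\beta\psi_0\rangle$, $\int\rho_0^{5/3}$) simultaneously finite and with $\int\rho_0^{5/3} > 0$, so that the limiting inequality is genuinely violated; for the homogeneous potential $W_\beta$ with $\beta < 3$ the pair interaction is locally integrable against smooth compactly supported densities, and for $\beta \geq 3$ (still with $\beta \neq 2$, so $\beta > 2$) one takes a product state built from $\phi$ supported away from coincidence points or simply accepts $\langle\psi_0,\hat{W}_\beta\psi_0\rangle = +\infty$ and argues instead via the converse scaling limit, or restricts to $\phi$ smooth with the product state's pair energy finite by choosing $\phi$ supported so that the relevant integral $\int\int |\phi(\bx)|^2|\phi(\by)|^2|\bx-\by|^{-\beta}\,d\bx\,d\by$ converges (which holds for $\beta < 3$). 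Since the contradiction only needs \emph{one} admissible $\psi_0$ and a sequence of $N$'s driving the kinetic-to-$\rho^{5/3}$ ratio to zero, I would fix $\phi \in C_c^\infty(\R^3)$ and a product state, verify finiteness of the pair energy in the admissible range, and conclude $C = 0$. I would also remark that this is exactly the scaling obstruction alluded to in the remark following Theorem~\ref{thm:hom_scatl_LT} concerning the $\beta \to 3$ limit.
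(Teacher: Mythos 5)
Your scaling step is exactly the paper's: rescale $\psi_L=L^{-3N/2}\psi(\cdot/L)$, note $\hat T\sim L^{-2}$, $\hat W_\beta\sim L^{-\beta}$, $\int\rho^{5/3}\sim L^{-2}$, multiply by $L^2$ and send $L\to 0$ ($\beta<2$) or $L\to\infty$ ($\beta>2$) to kill the interaction term, reducing the problem to violating the kinetic inequality $\langle\psi,\hat T\psi\rangle\geq C\int\rho^{5/3}$ with a single trial state. For $0<\beta<3$, $\beta\neq 2$, your choice of a condensed product state $\prod_j\phi(\bx_j)$ with $\phi\in C_c^\infty$ then closes the argument, and in that range it is actually simpler than the paper's proof, since it avoids any upper bound on the energy of an interacting gas: the pair energy $\binom{N}{2}\iint|\phi|^2W_\beta|\phi|^2$ is finite because $|\bx|^{-\beta}\in L^1_{\loc}(\R^3)$, and the ratio $N/N^{5/3}\to 0$ forces $C=0$.

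The gap is the case $\beta\geq 3$, which the proposition covers (and which the paper actually needs: the remark after Theorem~\ref{thm:hom_scatl_LT} invokes it for the regime $\beta>3$, $\beta\to 3$). There $|\bx|^{-\beta}$ is not locally integrable, so \emph{every} nontrivial product state has $\langle\psi_0,\hat W_\beta\psi_0\rangle=+\infty$: a product state cannot be ``supported away from coincidence points'' unless $\phi\equiv 0$, because the diagonal $\{\bx_i=\bx_j\}$ meets $\supp\phi\times\supp\phi$ whenever $\phi\not\equiv 0$. Accepting $+\infty$ does not help either: the assumed inequality \eqref{eq:cl_LT_counterex} is then vacuously true for your trial state and no choice of $L$ (there is no ``converse scaling limit'' that reduces $L^{2-\beta}\cdot\infty$ for $\beta>2$, where you need $L\to\infty$) extracts a contradiction. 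What is needed is a \emph{correlated} trial state whose interaction energy is finite but whose kinetic energy is still small compared with $\int\rho^{5/3}$, and this is precisely how the paper proceeds: it takes the Dirichlet ground state $\psi_{\hs}$ of the hard-sphere gas of small range $a$ on a box $Q_0$ in the dilute regime $a^3\bar\rho<\eps$, uses the Dyson/Lieb--Yngvason upper bound $\langle\psi,\hat T\psi\rangle\leq c_1 aN^2/|Q_0|$ together with H\"older's inequality $N^{5/3}/|Q_0|^{2/3}\leq\int\rho^{5/3}$ to make the kinetic term an $\eps^{1/3}$-fraction of the right-hand side, while the hard-core constraint gives $\langle\psi,\hat W_\beta\psi\rangle\leq c_2N^2a^{-\beta}<\infty$ for all $\beta$, which is then removed by the $L$-limit. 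Without this (or some comparable correlated construction with quantitative energy control), your argument does not prove the statement for $\beta\geq 3$.
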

\begin{remark}
	Note the implication of this proposition in the remark following 
	Theorem~\ref{thm:hom_scatl_LT}.
\end{remark}
\begin{proof}
	Given $\psi \in L^2(\R^{3N})$, define by 
	\begin{align*}
		\psi_L(\bx_1,\dots,\bx_N) := L^{-3N/2}\psi(\bx_1/L,\dots,\bx_N/L)
	\end{align*} 
	a scaled version of $\psi$. We have the density 
	$\rho_{\psi_L}(\bx) = L^{-3}\rho_{\psi}(\bx/L)$, so that
	\begin{align*}
		\int_{\R^3}\rho_{\psi_L}^{5/3}(\bx)\,d\bx = L^{-2}\int_{\R^3}\rho_{\psi}^{5/3}(\by)\,d\by,
	\end{align*}
	and furthermore
	\begin{align*}
		\langle\psi_L,(\hat{T}+\hat{W}_{\beta})\psi_L\rangle 
		= L^{-2}\langle\psi,\hat{T}\psi\rangle + L^{-\beta}\langle\psi,\hat{W}_{\beta}\psi\rangle.
	\end{align*}
	The scaled version of \eqref{eq:cl_LT_counterex} then reads
	\begin{align*}
		\langle\psi,\hat{T}\psi\rangle + L^{2-\beta}\langle\psi,\hat{W}_{\beta}\psi\rangle
		\geq C\int_{\R^3}\rho_{\psi}^{5/3}(\by)\,d\by.
	\end{align*}
	
	Set $\psi = \psi_{\hs} \in H_a^1(\R^{3N})$, the ground state of 
	the hard-sphere potential with range
	$a$ on the cube $Q_0$ with Dirichlet boundary conditions, 
	and define $\bar{\rho} = N/|Q_0|$. We decide to choose $Q_0$ (or $a$) 
	such that the state is dilute, meaning
	$\bar{\rho}a^{3} < \eps \ll 1$. For this state it holds, for sufficiently large $N$ and $Q_0$, 
	that (see \cite[Theorem~3]{Dyson:57} or \cite[Theorem~2.1]{Lieb-Yngvason:00})
	\begin{align*}
		\langle\psi,\hat{T}\psi\rangle &\leq c_1a \frac{N^2}{|Q_0|},
	\end{align*}
	for a sufficiently large $c_1$, and furthermore 
	$\langle\psi,\hat{W}_{\beta}\psi\rangle \leq c_2 \frac{N^2}{a^{\beta}}$,
	so we obtain
	\begin{align*}
		\langle\psi,\hat{T}\psi\rangle 
		+ L^{2-\beta}\langle\psi,\hat{W}_{\beta}\psi\rangle
		\leq c_1 a \frac{N^2}{|Q_0|} + c_2a^{-\beta}\frac{N^2}{L^{\beta-2}}.
	\end{align*}
	We then estimate
	\begin{align*}
		aN^2/|Q_0| = N\bar{\rho}^{2/3}(a^3\bar{\rho})^{1/3} 
		< \eps^{1/3}N\bar{\rho}^{2/3} = \eps^{1/3}\frac{N^{5/3}}{|Q_0|^{2/3}}.
	\end{align*}
	Applying H\"older's inequality to $N = \int_{Q_0}\rho_{\psi}$ yields
	\begin{align*}
		\frac{N^{5/3}}{|Q_0|^{2/3}} \leq \int_{\R^3} \rho_{\psi}^{5/3}(\bx)\,d\bx,	
	\end{align*}
	so that (for positive $C$)
	\begin{align*}
		c_1 a \frac{N^2}{|Q_0|} + c_2a^{-\beta}\frac{N^2}{L^{\beta-2}} 
		&\leq \eps^{1/3}c_1\int_{\R^3}\rho_{\psi}^{5/3}(\bx)\,d\bx 
		+ c_2a^{-\beta}\frac{N^2}{L^{\beta-2}}\\
		&< C\int_{\R^3}\rho_{\psi}^{5/3}(\bx)\,d\bx
	\end{align*}
	for $\eps$ small enough and $L \to 0$ for $\beta < 2$ or $L \to \infty$ for $\beta > 2$. Hence
	\eqref{eq:cl_LT_counterex} is impossible if $C$ is independent of $N$ (and $\psi$).
\end{proof}

\subsection{Locally Integrable Potentials}
As seen in Theorem~\ref{thm:hom_scatl_LT}, the homogeneous potentials $W_{\beta}$
with $\beta > 3$ satisfy an estimate involving the scattering length $a_{\beta}$. 
When $\beta \leq 3$, the scattering length of these potentials becomes infinite, reducing
a possible estimate of this type to the classical Lieb-Thirring estimate. As shown in the previous
subsection, this is only possible if $\beta = 2$. One is then tempted to ask if
an estimate as in Theorem~\ref{thm:hom_scatl_LT} could hold for some other class of
potentials. The following shows that if the potential is sufficiently regular,
then such a bound cannot hold.
\begin{proposition}
	Let $W \in L_{\loc}^{p}(\R^3)$ for some $p\geq 3/2$ and scattering 
	length $a>0$ (possibly infinite). 
	If there exists a constant $C \geq 0$ such that the inequality
	\begin{align}\label{eq:a_LT_counterex_bounded}
		\langle\psi, (\hat{T}+\hat{W})\psi\rangle 
		\geq C\int_{\R^3}\min\{a\rho(\bx)^2,\rho(\bx)^{5/3}\}\,d\bx
	\end{align}
	holds for all $\psi \in H^1(\R^{3N})$ and $N \geq 1$, then $C=0$.
\end{proposition}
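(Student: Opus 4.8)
The plan is to contradict \eqref{eq:a_LT_counterex_bounded} (for any $C>0$) by producing, for each large $N$, a totally symmetric product trial state whose kinetic and interaction energies are both negligible compared with $\int_{\R^3}\rho^{5/3}$, in a regime where the minimum on the right-hand side equals $\rho^{5/3}$; letting $N\to\infty$ then forces $C\le 0$.

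First I would fix $g\in C_c^\infty(\R^3)$ with $\supp g\subseteq B_{0,1}$, $g>0$ on the open unit ball, and $\norm{g}_{L^2}=1$, and for parameters $L>0$, $N\ge 1$ set $\phi_L(\bx):=L^{-3/2}g(\bx/L)$ and $\psi:=\phi_L^{\otimes N}\in H^1(\R^{3N})$, so that $\rho(\bx)=N|\phi_L(\bx)|^2$ is supported in $B_{0,L}$. Then $\langle\psi,\hat T\psi\rangle=N\norm{\nabla\phi_L}_{L^2}^2=NL^{-2}\norm{\nabla g}_{L^2}^2$, while $\langle\psi,\hat W\psi\rangle=\binom N2\int\!\!\int W(\bx-\by)|\phi_L(\bx)|^2|\phi_L(\by)|^2\,d\bx\,d\by$. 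Writing $f:=|\phi_L|^2$ and noting that $f*\check f$ (with $\check f(\by):=f(-\by)$) is supported in $B_{0,2L}$, Hölder's inequality followed by Young's inequality for convolutions gives
\[
	\Big|\int\!\!\int W(\bx-\by)|\phi_L(\bx)|^2|\phi_L(\by)|^2\,d\bx\,d\by\Big|
	\le \norm{W}_{L^{3/2}(B_{0,2L})}\norm{f}_{L^{3/2}}^2
	=\norm{W}_{L^{3/2}(B_{0,2L})}\norm{\phi_L}_{L^3}^4,
\]
and since $\norm{\phi_L}_{L^3}^4=L^{-2}\norm{g}_{L^3}^4$ we get $\langle\psi,\hat W\psi\rangle\le\tfrac12 N^2L^{-2}\norm{W}_{L^{3/2}(B_{0,2L})}\norm{g}_{L^3}^4$. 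Here the hypothesis $W\in L^p_{\loc}$ with $p\ge 3/2$ enters twice: it makes $W\in L^{3/2}_{\loc}$ (so these integrals are finite), and by absolute continuity of the integral $\norm{W}_{L^{3/2}(B_{0,2L})}\to 0$ as $L\to 0$.

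Next I would bound the right-hand side of \eqref{eq:a_LT_counterex_bounded} from below. For this state $\int_{\R^3}\rho^{5/3}=N^{5/3}\norm{\phi_L}_{L^{10/3}}^{10/3}=N^{5/3}L^{-2}\norm{g}_{L^{10/3}}^{10/3}$, and since $\rho=NL^{-3}|g(\cdot/L)|^2$ with $g>0$ on the open unit ball, for $N$ large (depending on $L,g,a$) the condition $a\rho(\bx)^{1/3}\ge 1$ holds on a set $S$ with $\int_S\rho^{5/3}\ge\tfrac12\int_{\R^3}\rho^{5/3}$ (this also covers $a=\infty$, where $\min\{a\rho^2,\rho^{5/3}\}=\rho^{5/3}$ outright). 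Hence $\int_{\R^3}\min\{a\rho^2,\rho^{5/3}\}\ge\tfrac12 N^{5/3}L^{-2}\norm{g}_{L^{10/3}}^{10/3}$. Inserting all three estimates into \eqref{eq:a_LT_counterex_bounded} and cancelling the common factor $L^{-2}$ yields
\[
	C\le \frac{2\norm{\nabla g}_{L^2}^2}{N^{2/3}\norm{g}_{L^{10/3}}^{10/3}}
	+\frac{\norm{g}_{L^3}^4}{\norm{g}_{L^{10/3}}^{10/3}}\,N^{1/3}\norm{W}_{L^{3/2}(B_{0,2L})}.
\]
Finally, for each $N$ choose $L=L_N>0$ small enough that $\norm{W}_{L^{3/2}(B_{0,2L_N})}\le N^{-1/2}$ (possible by the vanishing above) and $N$ large enough for the density condition; then the first term is $O(N^{-2/3})$ and the second is $O(N^{-1/6})$, so $N\to\infty$ forces $C\le 0$, i.e.\ $C=0$.

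The main obstacle, and the only place the regularity of $W$ is genuinely used, is the interaction energy: a priori it scales like $N^2$, which beats the $N^{5/3}$ growth of $\int\rho^{5/3}$, so the argument can only work if, by squeezing the support of the trial state ($L_N\to 0$), the local $L^{3/2}$-mass of $W$ felt by the state — and hence the interaction energy per pair — is driven to zero fast enough to defeat the extra power of $N$. A minor technical point is the density lower bound near $\partial(\supp\phi_{L_N})$, which is handled by taking $g$ strictly positive on the open ball so that $\{a\rho^{1/3}<1\}$ has vanishing relative measure as $N\to\infty$.
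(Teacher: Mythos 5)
Your proposal is correct and follows essentially the same route as the paper: the same product trial state $\prod_j\phi_L(\bx_j)$ with $L=L_N\to 0$, the same scaling of kinetic and interaction energies, reduction to the local $L^{3/2}$ norm of $W$ (you via Young's convolution inequality, the paper via boundedness of the bump plus H\"older, a cosmetic difference), and the same use of $\norm{W}_{L^{3/2}(B_{0,2L})}\to 0$ to kill the $N^2$ interaction term against the $N^{5/3}$ right-hand side. Your treatment of where the minimum equals $\rho^{5/3}$ is slightly more explicit than the paper's, but the argument is the same.
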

\begin{remark}
	Observe that $W_2 \in L_{\loc}^{p}(\R^3)$ for $p<3/2$
	and has infinite scattering length, and
	we have seen that \eqref{eq:a_LT_counterex_bounded} holds for
	this class of potentials. Also, note that
	$W_{\beta}$ has finite scattering length for $\beta > 3$ and 
	\eqref{eq:a_LT_counterex_bounded} holds for this potential, 
	but it is not in $L_{\loc}^p(\R^3)$ for any $p \geq 1$.
\end{remark}
\begin{proof}
	Let $\vphi(\bx) \in C_c^{\infty}(\R^3;\R_{\ge 0})$ with $\supp \vphi \subset B_{0,1}$ and 
	$\int_{B_{0,1}}|\vphi(\bx)|^2\,d\bx = 1$. Set $\vphi_{L}(\bx) = L^{-3/2}\vphi(\bx/L)$ and define
	\begin{align*}
		\psi(\bx_1,\dots,\bx_N) := \prod_{j=1}^N\vphi_L(\bx_j).
	\end{align*}
	For this function, the density is 
	$\rho_{\psi}(\bx) = L^{-3}N|\vphi(\bx/L)|^2$, so that
	\begin{multline*}
		\int_{\R^3}\min\{a\rho_{\psi}^2(\bx),\rho_{\psi}(\bx)^{5/3}\}\,d\bx\\
		= \int_{\R^3}\min\left\{a\frac{N^2}{L^3}|\vphi(\by)|^4,
		\frac{N^{5/3}}{L^2}|\vphi(\by)|^{10/3}\right\}\,d\by.
	\end{multline*}
	The kinetic energy on the other hand computes to
	\begin{align*}
		\langle\psi,\hat{T}\psi\rangle 
		= \frac{N}{L^2}\int_{B_{0,1}}|\nabla\vphi(\by)|^2\,d\by \leq c_3 \frac{N}{L^2},
	\end{align*}
	whereas
	\begin{align*}
		\langle\psi, \hat{W}\psi\rangle 
		&= \frac{N(N-1)}{2}\int_{\R^6}|\vphi_L(\bx)|^2W(\bx-\by)|\vphi_L(\by)|^2\,d\bx d\by\\
		&\leq N^2\int_{\R^6}|\vphi(\bx)|^2W(L(\bx-\by))|\vphi(\by)|^2\,d\bx d\by\\
		&\leq c_4N^2\int_{\R^6}\chi_{B_{0,1}}(\bx)W(L(\bx-\by))\chi_{B_{0,1}}(\by)\,d\bx d\by\\
		&\leq c_4N^2\int_{B_{0,1}}d\bx\int_{B_{0,2}}W(L\bz)\,d\bz.
	\end{align*}
	We find
	\begin{align*}
		\int_{B_{0,2}}W(L\bz)\,d\bz = L^{-3}\int_{B_{0,2L}}W(\bz)\,d\bz 
		\leq c_5 L^{-3/p}\norm{W}_{L^p(B_{0,2L})}
	\end{align*}
	by H\"older's inequality. Hence \eqref{eq:a_LT_counterex_bounded}
	is violated if
	\begin{multline}\label{eq:a_counterex_integrable}
		c_3\frac{N}{L^2} + c_4c_5 \frac{N^2}{L^{3/p}}\norm{W}_{L^p(B_{0,2L})}\\ 
		< C \int_{\R^3}\min\left\{a\frac{N^2}{L^3}|\vphi(\by)|^4,
		\frac{N^{5/3}}{L^2}|\vphi(\by)|^{10/3}\right\}\,d\by.
	\end{multline}
	
	Since $L_{\loc}^{p}(\R^3) \subset L_{\loc}^{3/2}(\R^3)$ for $p > 3/2$, it suffices
	to study $p=3/2$. We then rewrite \eqref{eq:a_counterex_integrable} as
	\begin{multline*}
		c_3N + c_4c_5 N^2\norm{W}_{L^{3/2}(B_{0,2L})}\\ 
		< C \int_{\R^3}\min\left\{a\frac{N^2}{L}|\vphi(\by)|^4,
		N^{5/3}|\vphi(\by)|^{10/3}\right\}\,d\by.
	\end{multline*}
	Using $\lim_{L\to 0}\norm{W}_{L^{3/2}(B_{0,2L})} = 0$, we can choose $L= L(N)$
	such that $\norm{W}_{L^{3/2}(B_{0,2L})} < c_6 N^{-1/3}$, so that the left hand side
	is dominated by the right hand side for sufficiently small $c_6$ and $N$ large.
\end{proof}

\subsection{Skew Potentials}
In this subsection we will show that a bound as in Theorem~\ref{thm:hom_scatl_LT} is
also not possible for potentials $W$ with an unbalanced relationship between
the range and the scattering length.
This shows that such a bound cannot depend on the scattering length alone
but must also depend on other details of the potential.

For $0<a<R$ and $W_0>0$ let us define
\begin{align*}
	W_{a,R}(\bx) = \left\{ \begin{array}{ll}
 					+\infty, &\quad |\bx| \leq a,\\
					W_0, &\quad a<|\bx|\leq R,\\
					0, &\quad |\bx|>R,
					\end{array} \right.
\end{align*}
a modified hard-sphere interaction. 
Note that $W_{a,R} \notin L^p_{\loc}(\R^3)$ for any $p$, however,
it is easy to show (cf. Appendix~\ref{app:w_cut_calc}) 
that the scattering length of this class of potentials is given by
\begin{align*}
	a_W := a + \frac{1}{\sqrt{W_0/2}}\left(\sqrt{W_0/2}(R-a) - \tanh\left(\sqrt{W_0/2}(R-a)\right)\right),
\end{align*}
so that $a_W>a$ for $R>a$ and $\lim_{R \to \infty}a_W = \infty$.
\begin{proposition}
	Let $W=W_{a,R}$. Assume that for some constant 
	$C \geq 0$ the inequality
	\begin{align}\label{eq:skew_LT_counterex}
		\langle\psi, (\hat{T}+\hat{W}_{a,R})\psi\rangle 
		\geq C\int_{\R^3}\min\{a_W\rho(\bx)^2,\rho(\bx)^{5/3}\}\,d\bx
	\end{align}
	holds for all $\psi \in H^1(\R^{3N})$ and $N \geq 1$.
	We allow $C$ to depend on $a_W>0$ but not on the details of $W$
	(i.e. $a$, $W_0$, and $R$).
	Then $C = 0$.
\end{proposition}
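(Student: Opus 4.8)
The plan is to follow the pattern of the two preceding counterexamples, the new ingredient being that for $W_{a,R}=W_a^{\hs}+W_0\,\chi_{\{a<|\bx|\le R\}}$ the scattering length $a_W$ can be held fixed while the range $R$ is sent to infinity and the coupling $W_0$ to zero. Writing $\kappa=\sqrt{W_0/2}$ and $\ell=R-a$, the given formula reads $a_W=a+\ell-\kappa^{-1}\tanh(\kappa\ell)$; fixing $a_W$ and some small $a>0$ and letting $R\to\infty$ forces $\kappa\ell\to 0$, and since $\tfrac{x^3}{6}\le x-\tanh x\le\tfrac{x^3}{3}$ for small $x>0$ one gets $W_0=O\big(a_W(R-a)^{-3}\big)\to 0$. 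So the same scattering length is realised by an essentially arbitrarily weak, arbitrarily long-ranged shell sitting outside a fixed small hard core, and I will exploit this to drive the left-hand side of \eqref{eq:skew_LT_counterex} below its right-hand side.

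Fix $a_W$ and pick $\bar\rho>0$ so small that $a_W^3\bar\rho\le 2$ (it will be shrunk further for diluteness). Take $Q_0$ a cube of side $L=|Q_0|^{1/3}$, let $N=\bar\rho|Q_0|$, and let $\psi=\psi_{\hs}$ be the Dirichlet hard-sphere ground state of core $a$ on $Q_0$; its density $\rho$ is supported in $Q_0$ with $\int\rho=N$. For the left side: since $\psi_{\hs}\in H_a^1$, the hard core contributes nothing, and once $R>\operatorname{diam}Q_0$ every in-cube pair lies in the shell, so $\langle\psi,\hat W_{a,R}\psi\rangle=W_0\binom N2\le\tfrac12 W_0N^2$, while $\langle\psi,\hat T\psi\rangle=E_0^{\mathrm{Dir}}(N,Q_0;a)\le c_1\,a\bar\rho N$ by the Dyson--Lieb-Yngvason dilute upper bound with a universal $c_1$, exactly as in the proof of Proposition~\ref{prop:cl_LT_counterex}. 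For the right side, split $\R^3$ according to $\{\rho\le a_W^{-3}\}$, where $\min\{a_W\rho^2,\rho^{5/3}\}=a_W\rho^2$, and its complement, where $\min\{a_W\rho^2,\rho^{5/3}\}=\rho^{5/3}\ge a_W^{-2}\rho$. Writing $m=\int_{\{\rho\le a_W^{-3}\}}\rho$ and $M=N-m$: if $m\ge N/2$ then by Cauchy--Schwarz and $\operatorname{supp}\rho\subseteq Q_0$, $\int_{\R^3}\min\ge a_W m^2/|Q_0|\ge a_W\bar\rho N/4$; if instead $M\ge N/2$ then $\int_{\R^3}\min\ge a_W^{-2}M\ge a_W^{-2}N/2\ge a_W\bar\rho N/4$, using $a_W^3\bar\rho\le 2$. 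Hence in all cases $\int_{\R^3}\min\{a_W\rho^2,\rho^{5/3}\}\ge\tfrac14 a_W\bar\rho N$.

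Combining, inequality \eqref{eq:skew_LT_counterex} applied to this $\psi$ would force $c_1 a\bar\rho N+\tfrac12 W_0N^2\ge\tfrac14 C a_W\bar\rho N$, i.e., after dividing by $\bar\rho N=\bar\rho^2L^3$, $c_1 a+\tfrac12 W_0L^3\ge\tfrac14 C a_W$. Now first choose $a$ small enough (and then $\bar\rho$ small enough to make the dilute estimates valid and keep $a_W^3\bar\rho\le 2$) that $c_1 a<\tfrac18 C a_W$, and then take $R=L^2$ and let $L\to\infty$: by the first paragraph $W_0=O(a_W L^{-6})$, so $\tfrac12 W_0L^3=O(a_W L^{-3})\to 0$, contradicting the displayed inequality once $L$ is large. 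Therefore $C=0$. The only delicate step is the lower bound on the right-hand side, since a priori the density of a hard-sphere ground state need not be comparable to $\bar\rho$; the two-case splitting circumvents this, and everything else is the same bookkeeping as before, the point being that a potential with a fixed scattering length $a_W$ but a vanishingly weak and hugely long-ranged tail has a two-particle energy on a cube far below that of the genuine hard sphere of range $a_W$, so no bound depending on $a_W$ alone can survive.
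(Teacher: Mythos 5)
Your proposal is correct, and it uses the same trial state and the same basic mechanism as the paper: the Dirichlet hard-sphere ground state on a large cube at fixed small $a$ and fixed $a_W$, with the dilute upper bound $\langle\psi,\hat T\psi\rangle\le c_1 a\bar\rho N$, the crude bound $\langle\psi,\hat W_{a,R}\psi\rangle\le\frac12 W_0N^2$, and the observation that $a_W$ can be kept fixed while $W_0\to0$, $R\to\infty$. Where you genuinely diverge is in handling the right-hand side of \eqref{eq:skew_LT_counterex}. The paper never lower-bounds $\int\min\{a_W\rho^2,\rho^{5/3}\}$ at all: it splits at the level $\rho=a_W^{-3}$ and uses H\"older plus Young's inequality to absorb the kinetic term $a N^2/|Q_0|$ into $\frac{C}{2}\int\min\{a_W\rho^2,\rho^{5/3}\}$, so that only the strict positivity of that integral is needed before sending $W_0\to0$ at fixed $N$, $Q_0$. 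You instead prove the explicit lower bound $\int\min\{a_W\rho^2,\rho^{5/3}\}\ge\frac14 a_W\bar\rho N$ by the mass dichotomy $m$ versus $M$ (Cauchy--Schwarz on $\{\rho\le a_W^{-3}\}$, the pointwise bound $\rho^{5/3}\ge a_W^{-2}\rho$ on the complement, using $a_W^3\bar\rho\le2$), and reach a contradiction between numbers, $c_1a+\frac12W_0L^3\ge\frac14Ca_W$; you also make the limit quantitative via $W_0=O\bigl(a_W(R-a)^{-3}\bigr)$ from the explicit scattering-length formula. Your route is more elementary (no Young's inequality trick) and more quantitative; the paper's is slightly leaner in that it needs no information about the density of $\psi_{\hs}$ beyond $\int\rho=N$. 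Two cosmetic remarks: the coupling $R=L^2$, $L\to\infty$ is unnecessary --- once $a$, $\bar\rho$ and a sufficiently large box are fixed so that the dilute kinetic bound holds, you can simply send $R\to\infty$ at fixed $L$, and then even the rate $O((R-a)^{-3})$ is not needed, only $W_0\to0$; and the step ``$\kappa\ell\to0$'' should be justified, e.g.\ by the strict monotonicity of $x\mapsto\tanh(x)/x$, after which the bound $x-\tanh x\ge x^3/6$ for small $x$ gives your rate. Neither point affects the validity of the argument.
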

\begin{proof}
	Fix $a_W > 0$ and assume that $C>0$.
	As in the proof of Proposition~\ref{prop:cl_LT_counterex}, 
	we take $\psi = \psi_{\hs} \in H_a^1(\R^{3N})$, the ground state of 
	the hard-sphere potential with range $a$ on a cube $Q_0$ 
	with Dirichlet boundary conditions, 
	and define $\bar{\rho} = N/|Q_0|$. 
	We start by fixing $0<a \ll a_W$ such that $a < Ca_W/(16\pi)$
	and let $N$ and $Q_0$ be such that
	$(a^3\bar{\rho})^{1/3} < \frac{5}{6} C/(16\pi)$
	and (see \cite[Theorem~3]{Dyson:57} or \cite[Theorem~2.1]{Lieb-Yngvason:00})
	\begin{align*}
		\langle\psi,\hat{T}\psi\rangle &\leq 8\pi a \frac{N^2}{|Q_0|}.
	\end{align*}
	We also note that for this state
	$\langle\psi,\hat{W}_{a,R}\psi\rangle \leq N^2 W_0$. 
	We now estimate
	\begin{align*}
		a\frac{N^2}{|Q_0|} = (a^3\bar{\rho})^{1/3} N\bar{\rho}^{2/3} 
		= (a^3\bar{\rho})^{1/3} \frac{N^{5/3}}{|Q_0|^{2/3}}
		\le (a^3\bar{\rho})^{1/3} \int_{Q_0} \rho_{\psi}^{5/3},
	\end{align*}
	where we applied H\"older's inequality to $N = \int_{Q_0}\rho_{\psi}$.
	We then write $\int_{Q_0} \rho_{\psi}^{5/3} =$
	\begin{align*}
		\int_{\rho_{\psi} \le a_W^{-3}} \rho_{\psi}^{5/3} 
			+ \int_{\rho_{\psi}>a_W^{-3}} \rho_{\psi}^{5/3}
		\le \left( \int_{\rho_{\psi} \le a_W^{-3}} \rho_{\psi}^2 \right)^{5/6}|Q_0|^{1/6} 
			+ \int_{\rho_{\psi}>a_W^{-3}} \rho_{\psi}^{5/3},
	\end{align*}
	again using H\"older's inequality, 
	and $\int_{\rho_{\psi} \le a_W^{-3}} 1 \le |Q_0|$.
	Furthermore, by means of Young's inequality $ab \le a^p/p + b^q/q$,
	we have
	\begin{align*}
		(a^3\bar{\rho})^{1/3} \left( \int_{\rho_{\psi} \le a_W^{-3}} \rho_{\psi}^2 \right)^{5/6}|Q_0|^{1/6}
		\le \frac{5}{6} a\int_{\rho_{\psi} \le a_W^{-3}} \rho_{\psi}^2
			+ \frac{1}{6} a \bar{\rho}^2 |Q_0|,
	\end{align*}
	and hence
	\begin{align*}
		a\frac{N^2}{|Q_0|} 
		\le \frac{5}{6} a\int_{\rho_{\psi} \le a_W^{-3}} \rho_{\psi}^2
			+ (a^3\bar{\rho})^{1/3} \int_{\rho_{\psi}>a_W^{-3}} \rho_{\psi}^{5/3}
			+ \frac{1}{6} a \frac{N^2}{|Q_0|},
	\end{align*}
	or equivalently,
	$a\frac{N^2}{|Q_0|} 
		\le a\int_{\rho_{\psi} \le a_W^{-3}} \rho_{\psi}^2
			+ \frac{6}{5} (a^3\bar{\rho})^{1/3} 
				\int_{\rho_{\psi}>a_W^{-3}} \rho_{\psi}^{5/3}$.
	Summing up,
	\begin{align*}
		\langle\psi, (\hat{T}+\hat{W}_{a,R})\psi\rangle 
		&\le 8\pi a \frac{N^2}{|Q_0|} + N^2W_0 \\
		&\le 8\pi a\int_{\rho_{\psi} \le a_W^{-3}} \rho_{\psi}^2
			+ 8\pi \frac{6}{5} (a^3\bar{\rho})^{1/3} \int_{\rho_{\psi}>a_W^{-3}} \rho_{\psi}^{5/3}
			+ N^2 W_0 \\
		&< \frac{C}{2} a_W \int_{\rho_{\psi} \le a_W^{-3}} \rho_{\psi}^2
			+ \frac{C}{2} \int_{\rho_{\psi}>a_W^{-3}} \rho_{\psi}^{5/3}
			+ N^2 W_0 \\
		&= \frac{C}{2} \int_{\R^3}\min\{a_W\rho_{\psi}^2,\rho_{\psi}^{5/3}\}
			+ N^2 W_0
	\end{align*}
	Now, take $W_0$ small but $R$ large, keeping $a_W$ fixed,
	so that $N^2 W_0 < \frac{C}{2} \int_{\R^3}\min\{a_W\rho_{\psi}^2,\rho_{\psi}^{5/3}\}$
	(which is strictly positive).
	It follows that \eqref{eq:skew_LT_counterex} is impossible if $C>0$ 
	is independent of $a$, $W_0$, $R$ (and $\psi$).
\end{proof}

\appendix
\section{Appendix}
\subsection{Local Uncertainty Principles}
For our applications with $\alpha > 2$ we require a
stronger formulation of the local uncertainty principle, cf. 
Assumption~\ref{ass:local_uncert_alpha}.

\begin{proposition}[Uncertainty in $d \geq 3$]\label{prop:uncertainty_3geqd}
	Let $d \geq 3$ and $W \geq 0$, then Assumption~\ref{ass:local_uncert_alpha} 
	holds for all $0<\alpha\leq \frac{2d}{d-2}$, with
	$S_1 = S_d/2$ and $S_2=S_d$, where $S_d$
	is the inverse-square of the Poincar\'e-Sobolev constant for the cube in dimension $d$.
\end{proposition}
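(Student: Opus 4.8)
The plan is to reduce the claimed many-body bound to a one-body Poincaré–Sobolev inequality on the cube. Since $W\geq 0$ we have $(T+W)_\psi^Q \geq T_\psi^Q$, so it suffices to bound the local kinetic energy from below. For this I would use the local form of the Hoffmann–Ostenhof inequality: for fixed $\bx_j$, Cauchy–Schwarz in the remaining variables gives $\int |\nabla_j\psi|^2\,d\hat{\sx} \geq |\nabla\sqrt{\rho_j}|^2$ pointwise, where $\rho_j$ is the $j$-th marginal of $|\psi|^2$; integrating over $\bx_j\in Q$, summing over $j$, and using the elementary convexity bound $\sum_j |\nabla\sqrt{\rho_j}|^2 \geq \bigl|\nabla\sqrt{\sum_j\rho_j}\bigr|^2 = |\nabla\sqrt\rho|^2$ yields
\[
	(T+W)_\psi^Q \;\geq\; \int_Q |\nabla\sqrt\rho|^2\,d\bx .
\]
In particular this shows $\sqrt\rho \in H^1(Q)$, so one-body inequalities on $Q$ are available.

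Next I would invoke the Poincaré–Sobolev inequality on a cube: for $d\geq 3$, $2^* = 2d/(d-2)$, and $f\in H^1(Q)$, writing $\langle f\rangle_Q := |Q|^{-1}\int_Q f$,
\[
	\bigl\| f - \langle f\rangle_Q \bigr\|_{L^{2^*}(Q)}^2 \;\leq\; S_d^{-1}\int_Q |\nabla f|^2 ,
\]
where, because the exponent $2^*$ is critical, the inequality is invariant under dilations of $Q$ and hence $S_d$ depends only on $d$ (this is the quantity named in the statement). Applying this with $f=\sqrt\rho$, removing the mean by the triangle inequality, estimating $\bigl|\langle\sqrt\rho\rangle_Q\bigr|\,|Q|^{1/2^*} \leq |Q|^{-1/d}\bigl(\int_Q\rho\bigr)^{1/2}$ (Cauchy–Schwarz together with $1/2^* - 1/2 = -1/d$), and squaring with $(a+b)^2\leq 2a^2+2b^2$, I obtain
\[
	\int_Q |\nabla\sqrt\rho|^2 \;\geq\; \frac{S_d}{2}\,\bigl\|\sqrt\rho\bigr\|_{L^{2^*}(Q)}^2 \;-\; S_d\,\frac{\int_Q\rho}{|Q|^{2/d}} ,
\]
which is exactly the assertion of Assumption~\ref{ass:local_uncert_alpha} with $S_1 = S_d/2$, $S_2 = S_d$ in the endpoint case $\alpha = 2d/(d-2)$.

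It remains to pass from the $L^{2^*}$ norm to the norms occurring for smaller $\alpha$, which is pure interpolation. For $2 < \alpha \leq 2d/(d-2)$ one has $2 \leq 2(1+\alpha/d) \leq 2^*$, so Hölder interpolation $\|f\|_{L^{2(1+\alpha/d)}(Q)} \leq \|f\|_{L^2(Q)}^{1-\theta}\|f\|_{L^{2^*}(Q)}^{\theta}$ with $\theta = \alpha/\bigl(2(1+\alpha/d)\bigr)$, applied to $f=\sqrt\rho$ and rearranged, gives $\bigl\|\sqrt\rho\bigr\|_{L^{2^*}(Q)}^2 \geq \bigl(\int_Q\rho^{1+\alpha/d}\bigr)^{2/\alpha}\big/\bigl(\int_Q\rho\bigr)^{2/\alpha+2/d-1}$; inserting this into the previous display closes this range. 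For $0<\alpha\leq 2$ the right-hand side of Assumption~\ref{ass:local_uncert_alpha} does not depend on $\alpha$ and coincides with the $\alpha=2$ expression, so the same step with $\theta = d/(d+2)$ (exponent $2(1+2/d)\leq 2^*$) yields $\bigl\|\sqrt\rho\bigr\|_{L^{2^*}(Q)}^2 \geq \int_Q\rho^{1+2/d}\big/\bigl(\int_Q\rho\bigr)^{2/d}$ and finishes the proof. The only step needing care is the bookkeeping that matches $\theta$ and $1-\theta$ to the exponents $2/\alpha$ and $2/\alpha+2/d-1$, and it is precisely the requirement $2(1+\alpha/d)\leq 2^*$, i.e.\ $\alpha \leq 2d/(d-2)$, that makes the interpolation legitimate; conceptually the content of the proposition is simply that $(T+W)_\psi^Q$ dominates the Dirichlet energy of $\sqrt\rho$ on $Q$, to which the standard Poincaré–Sobolev inequality then applies.
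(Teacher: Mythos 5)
Your argument is correct and follows essentially the same route as the paper: the Hoffmann--Ostenhof bound $(T+W)_\psi^Q \geq \int_Q |\nabla\sqrt{\rho}|^2$, the Poincar\'e--Sobolev inequality on the cube, the triangle inequality combined with the convexity estimate and Cauchy--Schwarz to handle the mean term (yielding $S_1=S_d/2$, $S_2=S_d$), and finally a H\"older step valid precisely for $\alpha\leq 2d/(d-2)$. Your interpolation of $\sqrt{\rho}$ between $L^2$ and $L^{2^*}$ is just a reformulation of the paper's H\"older inequality applied to $\rho$ between $L^1$ and $L^{d/(d-2)}$, and your explicit treatment of the range $0<\alpha\leq 2$ via the $\alpha=2$ exponent matches the intended reading of the assumption.
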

\begin{remark}
	An upper bound on $\alpha$ in $d \geq 3$ is to be expected, since the kinetic energy
	cannot control arbitrarily large regularity of the density.
\end{remark}
\begin{remark}
	Following \cite[Exercise~8.6]{Lieb-Loss:01}, an explicit lower bound for $S_d$ is
	\begin{align*}
		S_d \geq \frac{16d^2}{d^{d}(d+2)^2}\left(\frac{d}{|\mathbb{S}^{d-1}|}\right)^{\frac{2(d-1)}{d}}
		\left[\left(\frac{2(d-1)}{d}\right)^{\frac{d-1}{d}}+\left(\frac{2(d-1)}{d-2}\right)^{\frac{d-1}{d}}\right]^{-2},
	\end{align*}
	in particular $S_3 \geq 0.00226$. Comparing with the sharp global Sobolev constant
	indicates that this leaves plenty of room for improvement.
\end{remark}
\begin{proof}
	We use that (cf. also \cite{Frank-Seiringer:12})
	\begin{align*}
		(T+W)_{\psi}^{Q} \geq T_{\psi}^{Q} \geq \int_{Q}|\nabla \rho(\bx)^{1/2}|^2\,d\bx,
	\end{align*}
	where the second inequality follows from the Hoffmann-Ostenhof inequality, \cite[Lemma~2]{H-O2:77}. 
	We then apply the Poincar\'e-Sobolev inequality \cite[Theorem~8.12]{Lieb-Loss:01} 
	and the triangle inequality in $L^{2^*}(Q)$, where $2^* := 2d/(d-2)$,
	to obtain
	\begin{align*}
		 \int_{Q}|\nabla \rho(\bx)^{1/2}|^2\,d\bx 
		 &\geq S_d\Norm{\rho^{1/2}-\frac{1}{|Q|}\int_Q\rho^{1/2}}_{L^{2^*}(Q)}^2\\
		 &\geq S_d\left(\norm{\rho^{1/2}}_{L^{2^*}(Q)}-|Q|^{-(d+2)/2d}\int_Q\rho^{1/2}\right)^2.
	\end{align*}
	Next, we use the inequality $(a-b)^2 \geq a^2/2 - b^2$ (convexity) and the Cauchy-Schwarz inequality, 
	so that the right hand side is bounded from below by
	\begin{align*}
		S_d\left(\frac{1}{2}\norm{\rho}_{L^{d/(d-2)}(Q)}-|Q|^{-2/d}\int_Q\rho\right)
		\geq \frac{S_d}{2}\frac{\left(\int_Q\rho^{1+\alpha/d}\right)^{2/\alpha}}{(\int_Q\rho)^{2/d+2/\alpha-1}}
		-S_d\frac{\int_Q\rho}{|Q|^{2/d}},
	\end{align*}
	where for the last step we used the H\"older inequality
	\begin{align*}
		\textstyle{\int_Q\rho^{1+\alpha/d} 
		\leq (\int_Q \rho^{d/(d-2)})^{\alpha(d-2)/(2d)}(\int_Q\rho)^{1-\alpha(d-2)/(2d)}},
	\end{align*}
	which is applicable for $0<\alpha \leq 2^*$.
\end{proof}

\begin{proposition}[Uncertainty in $d = 2$]\label{prop:uncertainty_2d}
	Let $d = 2$ and $W \geq 0$. Then Assumption~\ref{ass:local_uncert_alpha} 
	holds for all $0<\alpha<\infty$, with
	$S_1$ a universal constant depending on $\alpha$, and $S_2 = 1$.
\end{proposition}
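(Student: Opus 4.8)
The plan is to mimic the proof of Proposition~\ref{prop:uncertainty_3geqd}. Since $W\ge 0$ we have $(T+W)_\psi^Q\ge T_\psi^Q$, and the Hoffmann--Ostenhof inequality \cite[Lemma~2]{H-O2:77} gives $T_\psi^Q\ge\int_Q|\nabla\rho^{1/2}|^2\,d\bx$, so the whole statement reduces to a one-body estimate for $u:=\rho^{1/2}\in H^1(Q)$, with no $N$-dependence left. For $0<\alpha\le 2$ the asserted inequality, with $S_2=1$ and an $\alpha$-independent $S_1$, is exactly the $d=2$ instance of \cite[Lemma~14]{Lundholm-Solovej:extended} (see the remark after Assumption~\ref{ass:local_uncert_alpha}; the exponent entering there is $2(1+2/d)=4$, which is subcritical in two dimensions). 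So the real content is the range $\alpha>2$, which I treat as follows.

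Fix $\alpha>2$ and set $p_*:=2+\alpha=2(1+\alpha/d)$ and $\theta:=\alpha/p_*\in(0,1)$; the exponent identities I will use, all consequences of $p_*-\alpha=2$, are $\theta p_*=\alpha$, $(1-\theta)/\theta=2/\alpha$, $1/p_*-(1-\theta)/2=0$, and $(1/2-1/p_*)\,(2/\theta)=1$. First I would invoke a local Gagliardo--Nirenberg inequality on the cube --- available for \emph{every} finite exponent because all finite exponents are subcritical when $d=2$ (cf.\ \cite[Ch.~8]{Lieb-Loss:01}) --- which, after rescaling $Q$ to the unit cube (the Dirichlet integral $\int_Q|\nabla u|^2$ is invariant under that dilation in dimension two), subtracting the mean $\bar u_Q:=|Q|^{-1}\int_Q u$, and using $\norm{u-\bar u_Q}_{L^2(Q)}\le\norm{u}_{L^2(Q)}$, takes the scale-invariant form
\[
\Norm{u-\bar u_Q}_{L^{p_*}(Q)}\ \le\ C_\alpha\,\norm{\nabla u}_{L^2(Q)}^{\theta}\,\norm{u}_{L^2(Q)}^{1-\theta},
\]
with $C_\alpha$ depending only on $\alpha$. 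Combining this with the triangle inequality $\norm{u}_{L^{p_*}(Q)}\le\Norm{u-\bar u_Q}_{L^{p_*}(Q)}+\Norm{\bar u_Q}_{L^{p_*}(Q)}$ and the Cauchy--Schwarz estimate $\Norm{\bar u_Q}_{L^{p_*}(Q)}=|Q|^{1/p_*}\,|\bar u_Q|\le|Q|^{1/p_*-1/2}\bigl(\int_Q\rho\bigr)^{1/2}$, and isolating the gradient term, I arrive at
\[
C_\alpha\,\norm{\nabla u}_{L^2(Q)}^{\theta}\Bigl(\int_Q\rho\Bigr)^{(1-\theta)/2}\ \ge\ \left[\Bigl(\int_Q\rho^{1+\alpha/2}\Bigr)^{1/p_*}-|Q|^{1/p_*-1/2}\Bigl(\int_Q\rho\Bigr)^{1/2}\right]_+ .
\]
Raising to the power $2/\theta\ge 1$, applying the elementary convexity bound $[a-b]_+^{s}\ge 2^{1-s}a^{s}-b^{s}$ (valid for $s\ge1$, $a,b\ge0$), and simplifying the exponents with the identities above (together with $2/d=1$) yields
\[
\int_Q|\nabla\rho^{1/2}|^2\ \ge\ \frac{2^{1-2/\theta}}{C_\alpha^{2/\theta}}\,\frac{\bigl(\int_Q\rho^{1+\alpha/2}\bigr)^{2/\alpha}}{\bigl(\int_Q\rho\bigr)^{2/\alpha}}\ -\ \frac{1}{C_\alpha^{2/\theta}}\,\frac{\int_Q\rho}{|Q|},
\]
which is exactly the claimed bound with $S_1=2^{1-2/\theta}C_\alpha^{-2/\theta}$ and $S_2=C_\alpha^{-2/\theta}$. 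Since $C_\alpha$ only has to be an admissible (large enough) constant in the Gagliardo--Nirenberg inequality, I may assume $C_\alpha\ge1$; then $S_2=C_\alpha^{-2/\theta}\le1$ and the inequality holds a fortiori with $S_2=1$.

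The only genuine difficulty is the bookkeeping: one must check that the choice $p_*=2+\alpha$ makes the local Gagliardo--Nirenberg inequality on the cube exactly scale-invariant (so no stray power of $|Q|$ pollutes the leading term), and that the subtracted term comes out precisely as $\int_Q\rho/|Q|^{2/d}$ with a constant that can be taken to be $1$. It is also worth noting that $C_\alpha\to\infty$ as $\alpha\to\infty$ --- the failure of the embedding $H^1(Q)\hookrightarrow L^\infty(Q)$ in two dimensions --- so $S_1$ must be allowed to depend on $\alpha$ and cannot be uniform; this is the same mechanism that forces one to truncate the two-particle energy ($K<\infty$) in the two-dimensional applications such as the hard disk.
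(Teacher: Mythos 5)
Your proof is correct and takes essentially the same route as the paper: reduce via $W\ge 0$ and the Hoffmann--Ostenhof inequality to a bound on $\int_Q|\nabla\rho^{1/2}|^2$, then apply the two-dimensional Gagliardo--Nirenberg interpolation at the exponent $p=2+\alpha$ on the unit square and rescale, yielding exactly the $\alpha>2$ branch of Assumption~\ref{ass:local_uncert_alpha} with $S_2\le 1$ (the case $0<\alpha\le 2$ being covered as in the remark after that assumption). The only cosmetic difference is the treatment of the constant mode: you subtract the mean and raise to the power $2/\theta$ using $[a-b]_+^{s}\ge 2^{1-s}a^{s}-b^{s}$, whereas the paper keeps the full $H^1$-norm in the Adams--Fournier form of the inequality and uses $(a-b)^2\ge a^2/2-b^2$; both give the same final estimate.
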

\begin{remark}
	Following \cite[Theorem~5.8]{Adams-Fournier:03}, an explicit lower bound for $S_1$ is
	\begin{align*}
		S_1\geq \frac{\pi}{96\sqrt{2}}\left(3\sqrt{2}(4+\alpha)\right)^{-(4+\alpha)/\alpha}.
	\end{align*}
\end{remark}
\begin{proof}
	We again obtain
	\begin{align*}
		(T+W)_{\psi}^{Q} \geq T_{\psi}^{Q} \geq \int_{Q}|\nabla \rho(\bx)^{1/2}|^2\,d\bx
	\end{align*}
	by the Hoffmann-Ostenhof inequality. For the next step, we will need the 
	Gagliardo-Nirenberg-Sobolev interpolation inequality 
	(see \cite[Theorem~5.8]{Adams-Fournier:03}):
	\begin{align*}
		\norm{f}_{L^2(\Omega)} + \norm{\nabla f}_{L^2(\Omega)} &= \norm{f}_{H^1(\Omega)}\\
		&\geq S_{2,p}\norm{f}_{L^2(\Omega)}^{-\frac{2}{p-2}}\norm{f}_{L^p(\Omega)}^{\frac{p}{p-2}}, 
		\quad 2 < p < \infty.
	\end{align*}
	We then apply this inequality taking $\Omega$ to be the unit square, $f=\rho^{1/2}$, $p=2+\alpha$ 
	and use the inequality $(a-b)^2 \geq a^2/2 - b^2$
	and scaling to obtain the following estimate, 
	valid on any square $Q$,
	\begin{align*}
		\int_{Q}|\nabla \rho(\bx)^{1/2}|^2\,d\bx 
		&\geq \frac{S_{2,2+\alpha}^2}{2}
		\frac{\norm{\rho^{1/2}}_{L^{2+\alpha}(Q)}^{2(2+\alpha)/
		\alpha}}
		{\norm{\rho^{1/2}}_{L^2(Q)}^{4/\alpha}}
		- \frac{1}{|Q|}\int_{Q}\rho\\
		&= \frac{S_{2,2+\alpha}^2}{2}\frac{(\int_Q \rho^{1+\alpha/2})^{2/\alpha}}
		{(\int_Q \rho)^{2/\alpha}}
		- \frac{1}{|Q|}\int_{Q}\rho.
	\end{align*}
\end{proof}

\begin{proposition}[Uncertainty in $d = 1$]\label{prop:uncertainty_1d}
	Let $d = 1$ and $W \geq 0$. Then Assumption~\ref{ass:local_uncert_alpha} 
	holds for all $0<\alpha<\infty$, with
	$S_1$ a universal constant depending on $\alpha$, and $S_2 = 1$.
\end{proposition}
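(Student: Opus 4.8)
The plan is to follow the proof of Proposition~\ref{prop:uncertainty_2d} almost verbatim, replacing the two--dimensional interpolation inequality by its one--dimensional analogue. Since the remark following Assumption~\ref{ass:local_uncert_alpha} already disposes of the range $0<\alpha\le 2$ (with $S_2=1$ there as well), it suffices to treat $\alpha>2$. As before, using $W\ge 0$ and the Hoffmann--Ostenhof inequality \cite[Lemma~2]{H-O2:77},
\[
  (T+W)_{\psi}^{Q}\ \ge\ T_{\psi}^{Q}\ \ge\ \int_{Q}\bigl|\nabla\rho(\bx)^{1/2}\bigr|^2\,d\bx,
\]
so everything reduces to a one--particle Sobolev--type estimate for $g:=\rho^{1/2}\in H^1(Q)$ on an interval $Q\subset\R$.

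I would then pass to the unit interval $I$ by scaling (as in $d=2$ the scaling is consistent and leaves the constants untouched) and apply the one--dimensional Gagliardo--Nirenberg--Sobolev inequality, \cite[Theorem~5.8]{Adams-Fournier:03}: for $2<p<\infty$,
\[
  \norm{g}_{H^1(I)}=\norm{g}_{L^2(I)}+\norm{\nabla g}_{L^2(I)}
  \ \ge\ S_{1,p}\,\norm{g}_{L^2(I)}^{-(p+2)/(p-2)}\,\norm{g}_{L^p(I)}^{2p/(p-2)} .
\]
The point requiring care is that the exponents here are \emph{not} the ones in the $d=2$ case, so the computation must genuinely be redone. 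Taking $p:=2+2\alpha$ (so that $\norm{g}_{L^p(I)}^{p}=\int_I\rho^{1+\alpha}$ and $p-2=2\alpha$), moving $\norm{g}_{L^2(I)}$ to the right--hand side, and squaring by means of $(a-b)^2\ge a^2/2-b^2$, a short arithmetic check shows the powers collapse exactly to yield
\[
  \int_I\bigl|\nabla\rho^{1/2}\bigr|^2
  \ \ge\ \frac{S_{1,2+2\alpha}^{2}}{2}\,
  \frac{\bigl(\int_I\rho^{1+\alpha}\bigr)^{2/\alpha}}{\bigl(\int_I\rho\bigr)^{2/\alpha+1}}
  \ -\ \int_I\rho .
\]
Because $|I|=1$ and, for $d=1$, $2/\alpha+1=2/\alpha+2/d-1$, this is precisely Assumption~\ref{ass:local_uncert_alpha} in the $\alpha>2$ form on $I$, with $S_1=S_{1,2+2\alpha}^{2}/2$ and $S_2=1$; undoing the scaling gives the statement on a general interval $Q$, and an explicit lower bound for the dimensionless constant $S_{1,p}$ can be read off from \cite[Theorem~5.8]{Adams-Fournier:03}.

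The main obstacle is thus entirely the bookkeeping with the one--dimensional interpolation exponents: naively transplanting the $d=2$ exponents, or using the embedding $H^1(\R)\hookrightarrow L^\infty$ directly, yields the power $1/\alpha$ in place of $2/\alpha$ on $\int_I\rho^{1+\alpha}$, and only the choice $p=2+2\alpha$ in the scale--invariant Gagliardo--Nirenberg--Sobolev inequality produces a right--hand side with the scaling demanded by Assumption~\ref{ass:local_uncert_alpha}.
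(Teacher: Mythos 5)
Your proposal is correct and follows essentially the same route as the paper, whose proof of Proposition~\ref{prop:uncertainty_1d} merely states that it is almost identical to the two-dimensional case: you supply exactly those omitted details, and your one-dimensional Gagliardo--Nirenberg exponents together with the choice $p=2+2\alpha$ do collapse to the $\alpha>2$ form of Assumption~\ref{ass:local_uncert_alpha} with $S_2=1$, with the rescaling from the unit interval to a general $Q$ working as you say. The only small slip is the parenthetical claim that the remark after Assumption~\ref{ass:local_uncert_alpha} gives $S_2=1$ for $0<\alpha\le 2$ (its constant there is $S_2=C_d'\bigl(1+(\eps/(1-\eps))^{1+4/d}\bigr)$), but that range is covered by the remark in any case, so nothing is lost.
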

\begin{proof}
	The proof is almost identical to the two-dimensional case. We omit the details.
\end{proof}

\subsection{Scattering Lengths}
The reader is referred to the Appendix~C in \cite{Lieb-Seiringer-Solovej-Yngvason:05} for a basic introduction to the concept of scattering lengths and some useful properties.
\subsubsection{Scattering Length for $W_{\beta}$ in 3D}
In this subsection we compute the scattering length for the potential
\begin{align*}
	W_\beta(\bx) = W_{0} |\bx|^{-\beta}, \quad \beta > 3. 
\end{align*}
Hence we look for a solution $\vphi(\bx)$ to the equation
\begin{align}\label{eq:scattering_w_alpha}
	\left(-\Delta + \frac{1}{2}\frac{W_{0}}{|\bx|^{\beta}}\right)\vphi(\bx) = 0
\end{align}
with the asymptotics $\vphi(\bx) = 1-\frac{a_{\beta}}{|\bx|} + O(|\bx|^{-2})$ as $|\bx| \to \infty$, 
where $a_{\beta}$ defines the scattering length in $d=3$ for this potential. 
Through scaling we see that if $\psi(\bx)$ is a solution to the equation
\begin{align}\label{eq:scattering_w_alpha_simple}
	(-\Delta + |\bx|^{-\beta})\psi(\bx) = 0,
\end{align}
then $\vphi(\bx) = \psi((W_{0}/2)^{-1/(\beta-2)}\bx)$ solves \eqref{eq:scattering_w_alpha}, so we first look for solutions to \eqref{eq:scattering_w_alpha_simple} with $\psi(\bx) = 1- \frac{a}{|\bx|} + O(|\bx|^{-2})$.

Radial symmetry and the substitution $u(r) = r \psi(r)$ shows that \eqref{eq:scattering_w_alpha_simple} is equivalent to the differential equation
\begin{align*}
	-u''(r) + r^{-\beta}u(r) = 0, \quad u(0) = 0,
\end{align*}
$u(r)$ having asymptotics $u(r) = r - a + \mathcal{O}(r^{-1})$ for large $r$. We make the Ansatz
\begin{align*}
	u(r) = \sqrt{r}F(2r^{-(\beta-2)/2}/(\beta-2)),
\end{align*}
with $F$ a function to be determined. Let us for convenience define 
$t := 2r^{-(\beta-2)/2}/(\beta-2) = 2\nu r^{-1/(2\nu)}$, with $\nu := 1/(\beta-2) \in (0,1)$. 
Explicit computation shows that $F(t)$ satisfies the modified Bessel equation
\begin{align*}
	t^2 F''(t) + tF'(t) - (t^2 + \nu^2)F(t) = 0,
\end{align*}
so $F(t) = c_1\mathcal{I}_{\nu}(t) + c_2K_{\nu}(t)$, where $\mathcal{I}_{\nu}, K_{\nu}$ are the modified Bessel functions
\begin{align*}
	\mathcal{I}_{\nu}(t) 
	&= \left(\frac{t}{2}\right)^{\nu}\sum_{k=0}^{\infty}\frac{(t^2/4)^{k}}{k! \Gamma(\nu + k +1)},\\
	K_{\nu}(t) &= \frac{\pi}{2}\frac{\mathcal{I}_{-\nu}(t) - \mathcal{I}_{\nu}(t)}{\sin(\pi \nu)}, \quad \nu \notin \mathbb{Z},
\end{align*}
with asymptotics for large $t$
\begin{align*}
	\mathcal{I}_{\nu}(t) &= \frac{e^{t}}{\sqrt{2\pi t}}\left(1 + \mathcal{O}(t^{-1})\right),\\
	K_{\nu}(t) &= \sqrt{\frac{\pi}{2t}}e^{-t}\left(1 + \mathcal{O}(t^{-1})\right),
\end{align*}
see \cite[Chapter 9.6 \& 9.7]{Abramowitz-Stegun:70}. The requirement $u(0)=0$ reduces to the condition
\begin{align*}
	\lim_{t \to \infty}(t/(2\nu))^{-\nu}F(t) = 0,
\end{align*}
hence $c_1 = 0$ and $F(t) = c_2K_{\nu}(t)$. For small $t$, $t^{-\nu}K_{\nu}(t)$ has the expansion
\begin{align*}
	t^{-\nu}K_{\nu}(t) = C\left(\frac{t^{-2\nu}}{\Gamma(1-\nu)} - \frac{1}{\Gamma(\nu+1)} + \mathcal{O}(t)\right),
\end{align*}
hence for large $r$ one has
\begin{align*}
	\sqrt{r}K_{\nu}(2\nu r^{-1/(2\nu)}) 
	&= C\left(\frac{(2\nu)^{-2\nu}}{\Gamma(1-\nu)}r - \frac{1}{\Gamma(\nu
	+1)}+\mathcal{O}(r^{-1})\right)\\
	&= C\left(r - \frac{\Gamma(1-\nu)(2\nu)^{2\nu}}{\Gamma(\nu+1)} 
	+ \mathcal{O}(r^{-1})\right).
\end{align*}
From this we can conclude that the solution (with the right normalization) to \eqref{eq:scattering_w_alpha_simple} has
\begin{align*}
	a = \frac{\Gamma(1-\nu)(2\nu)^{2\nu}}{\Gamma(\nu+1)},
\end{align*}
so expanding $\vphi(\bx) = \psi((W_{0}/2)^{-1/(\beta-2)}\bx)$ gives
\begin{align*}
	\vphi(\bx) = C\left(1-\frac{a}{(W_{0}/2)^{-1/(\beta-2)}}\frac{1}{|\bx|} + O(|\bx|^{-2})\right).
\end{align*}
Thus $W_{\beta}$ has scattering length
\begin{align}\label{eq:scattering_l_homogeneous}
	a_{\beta} = \frac{\Gamma\left(\frac{\beta-3}{\beta-2}\right)}{\Gamma\left(\frac{\beta-1}{\beta-2}\right)}\left(\frac{2}{\beta-2}\right)^{2/(\beta-2)}\left(\frac{W_{0}}{2}\right)^{1/(\beta-2)}.
\end{align}

\subsubsection{Scattering Length for a Regularized $W_{\beta}$ in 3D}\label{app:w_cut_calc}
Let $B_{\bx,R}$ denote the ball of radius $R$ around the point $\bx$. We define
\begin{align}\label{def:w_cut}
	W_{\beta}^R(\bx) = \frac{W_{0}}{R^{\beta}}\chi_{B_{0,R}}(\bx),
\end{align}
a regularized version of $W_{\beta}$. To compute the scattering length for this potential, we look for a solution $\vphi(\bx)$ to the equation
\begin{align}\label{eq:scattering_w_cut}
	\left(-\Delta + \frac{1}{2}W_{\beta}^R(\bx)\right)\vphi(\bx) = 0
\end{align}
with the asymptotics $\vphi(\bx) = 1-a_{\beta}^R |\bx|^{-1} + O(|\bx|^{-2})$, $a_{\beta}^R$ being the scattering length. Radial symmetry and the substitution $u(r) = r \vphi(r)$ shows that \eqref{eq:scattering_w_cut} is equivalent to the differential equation
\begin{align*}
	-u''(r) + \frac{1}{2}W_{\beta}^R(r)u(r) = 0, \quad u(0) = 0,
\end{align*}
$u(r)$ having asymptotics $u(r) = r - a_{\beta}^R + \mathcal{O}(r^{-1})$ for large $r$. After solving this simple boundary value problem one obtains
\begin{align*}
	u(r) = \left\{ \begin{array}{ll}
 				\frac{R^{\beta/2}}{\sqrt{W_{0}/2}}
				\frac{\sinh(\sqrt{W_{0}/2}R^{-\beta/2}\,r)}{\cosh(\sqrt{W_{0}/2}\,R^{1-{\beta/2}})}, 
				&0<r<R,\\
				r-a_{\beta}^R, &r \geq R,
			\end{array} \right.
\end{align*}
with
\begin{align}\label{eq:scattering_l_w_cut}
	a_{\beta}^R 
	= R-\frac{R^{\beta/2}}{\sqrt{W_{0}/2}}\tanh\left(\sqrt{W_{0}/2}\,R^{1-\beta/2}\right).
\end{align}

\subsubsection{Scattering Length for $W_{\beta}$ in 2D}
In this subsection we compute the two-dimensional scattering length for the potential
\begin{align*}
	W_\beta(\bx) = W_{0} |\bx|^{-\beta}, \quad \beta > 2. 
\end{align*}
Through scaling we see that if $\psi(\bx)$ is a solution to the equation
\begin{align}\label{eq:scattering_w_alpha_simple_2d}
	(-\Delta + |\bx|^{-\beta})\psi(\bx) = 0,
\end{align}
then $\vphi(\bx) = \psi((W_{0}/2)^{-1/(\beta-2)}\bx)$ solves
\begin{align}\label{eq:scattering_w_alpha_2d}
	\left(-\Delta + \frac{1}{2}\frac{W_{0}}{|\bx|^{\beta}}\right)\vphi(\bx) = 0
\end{align}
so we first look for solutions to \eqref{eq:scattering_w_alpha_simple_2d} 
with $\psi(\bx) = \ln(|\bx|/a) + O(|\bx|^{-1})$ as $|\bx| \to \infty$,
$a$ being the scattering length in two dimensions.

Radial symmetry and the substitution $u(r) = \sqrt{r}\psi(r)$ shows that 
\eqref{eq:scattering_w_alpha_simple_2d} is equivalent to the differential equation
\begin{align*}
	-u''(r) - \frac{u(r)}{4r^2}+ r^{-\beta}u(r) = 0, \quad u(0) = 0,
\end{align*}
$u(r)$ having asymptotics $u(r) = \sqrt{r}\ln(r/a) + \mathcal{O}(r^{1/2})$ for large $r$. We make the Ansatz
\begin{align*}
	u(r) = \sqrt{r}F(2r^{-(\beta-2)/2}/(\beta-2)),
\end{align*}
with $F$ a function to be determined. Let us for convenience define 
$t := 2r^{-(\beta-2)/2}/(\beta-2) = 2\nu r^{-1/(2\nu)}$, with 
$\nu := 1/(\beta-2) \in (0,1)$. Explicit computation shows that $F(t)$ 
satisfies the modified Bessel equation
\begin{align*}
	t^2 F''(t) + tF'(t) - t^2F(t) = 0,
\end{align*}
so $F(t) = c_1\mathcal{I}_{0}(t) + c_2K_{0}(t)$, where $\mathcal{I}_{0}, K_{0}$ are 
modified Bessel functions of order zero;
see \cite[Chapter 9.6 \& 9.7]{Abramowitz-Stegun:70}. 
The requirement $u(0)=0$ reduces to the condition
\begin{align*}
	\lim_{t \to \infty}(t/(2\nu))^{-\nu}F(t) = 0,
\end{align*}
hence $c_1 = 0$ and $F(t) = c_2K_{0}(t)$. For large $r$ one has
\begin{align*}
	\sqrt{r}K_{0}(2\nu r^{-1/(2\nu)}) 
	&= C\left(\sqrt{r}\ln\left(2\nu r^{-1/(2\nu)}\right)+\mathcal{O}(r^{1/2})\right)\\
	&= C\left(\sqrt{r}\ln\left(((2\nu)^{-2\nu} r)^{-1/(2\nu)}\right)+\mathcal{O}(r^{1/2})\right).
\end{align*}
From this we can conclude that the solution (with the right normalization) to 
\eqref{eq:scattering_w_alpha_simple_2d} has $a = (2\nu)^{2\nu}$,
so expanding $\vphi(\bx) = \psi((W_{0}/2)^{-1/(\beta-2)}\bx)$ shows that
$W_{\beta}$ has scattering length
\begin{align}\label{eq:scattering_l_homogeneous_2d}
	a_{\beta} = \left(\frac{2}{\beta-2}\right)^{2/(\beta-2)}\left(\frac{W_{0}}{2}\right)^{1/(\beta-2)}.
\end{align}

\subsubsection{Scattering Length for a Regularized $W_{\beta}$ in 2D}\label{app:w_cut_calc_2d}
Again, let $W_{\beta}^R(\bx) = W_{0}R^{-\beta}\chi_{B_{0,R}}(\bx)$
be a regularized version of $W_{\beta}$. We look for a solution $\vphi(\bx)$ 
to the equation
\begin{align}\label{eq:scattering_w_cut_2d}
	\left(-\Delta + \frac{1}{2}W_{\beta}^R(\bx)\right)\vphi(\bx) = 0
\end{align}
with the asymptotics $\vphi(\bx) = \ln(|\bx|/a_{\beta}^R) + \mathcal{O}(|\bx|^{-1})$, 
$a_{\beta}^R$ being the scattering length. By radial symmetry, \eqref{eq:scattering_w_cut_2d}
is equivalent to
\begin{align*}
	-\partial_r^2\vphi(r) - \frac{1}{r}\partial_r\vphi(r) + C\vphi(r) = 0
\end{align*}
for $r \in (0,R)$, where $C:=W_{0}R^{-\beta}$, and
\begin{align*}
	-\partial_r^2\vphi(r) - \frac{1}{r}\partial_r\vphi(r) = 0,
\end{align*}
for $r\geq R$. The first equation has the general solution
$\vphi(r) = c_1\mathcal{I}_0(\sqrt{C}r) + c_2K_0(\sqrt{C}r)$, however the condition 
$\vphi \in H^1(B_{0,R})$ forces $c_2 = 0$. As for the second equation we directly obtain
$\vphi(r) = \ln(r/a_{\beta}^R)$. 

This boundary value problem then has the solution
\begin{align*}
	\vphi(r) = \left\{ \begin{array}{ll}
 				\frac{R^{\beta/2-1}}{\sqrt{W_{0}/2}}
				\frac{\mathcal{I}_0\left(\sqrt{W_{0}/2}R^{-\beta/2}r\right)}
				{\mathcal{I}_1\left(\sqrt{W_{0}/2}R^{1-\beta/2}\right)}, 
				&0<r<R,\\
				\ln(r/a_{\beta}^R), &r \geq R,
			\end{array} \right.
\end{align*}
with scattering length
\begin{align}\label{eq:scattering_l_w_cut_2d}
	a_{\beta}^R = R\exp\left(-\frac{R^{\beta/2-1}}{\sqrt{W_{0}/2}}
	\frac{\mathcal{I}_0\left(\sqrt{W_{0}/2}R^{1-\beta/2}\right)}
	{\mathcal{I}_1\left(\sqrt{W_{0}/2}R^{1-\beta/2}\right)}\right).
\end{align}

\subsection{A Concave Lower Bound in the Hard-Disk Case}\label{app:hd_concave_calc}
We are looking for a constant $c > 0$ such that
\begin{align*}
	f(\gamma) = \frac{1}{c+\left(-\ln(2^{-1/2}\gamma)\right)_+}
\end{align*}
is concave in $\gamma$. Clearly, if $\gamma \geq \sqrt{2}$, then $f(\gamma) = c^{-1}$, which is concave.

We are thus left to study the case when $\gamma < \sqrt{2}$, and we choose to rewrite $f(\gamma)$ as
$f(\gamma) = -1/\ln(\alpha \gamma)$, with $\alpha = 2^{-1/2}e^{-c}$. Hence
\begin{align*}
	f''(\gamma) = -\frac{1}{\gamma^2\ln(\gamma\alpha)^2}\left(1+\frac{2}{\ln(\gamma\alpha)}\right),
\end{align*}
so that concavity and $\ln(\gamma\alpha)<0$ requires $e^2 < (\gamma\alpha)^{-1}$ for all $\gamma < \sqrt{2}$. We see that we can choose $\alpha = 2^{-1/2}e^{-2}$, hence $c = 2$.

\end{document}